\DeclareMathAlphabet{\mathcal}{OMS}{cmsy}{m}{n}
\newtheorem{theorem}{\textbf{Theorem}}
\newtheorem{lemma}{\textbf{Lemma}}
\newtheorem{corollary}{\textbf{Corollary}}
\newtheorem{definition}{\textbf{Definition}}
\newtheorem{remark}{\textbf{Remark}}
\newenvironment{proof}{{\emph{\textbf{Proof:}} }}{\hfill $\square$}
\begin{document}

\begin{frontmatter}

\title{Trajectory elongation strategies with minimum curvature discontinuities for a Dubins vehicle}

\thanks[footnoteinfo]{This work was not supported by any organisation.}

\author{Aditya K. Rao}\ead{adikrao@iitk.ac.in},    
\author{Twinkle Tripathy}\ead{ttripathy@iitk.ac.in}  

\address{Department of Electrical Engineering, Indian Institute of Technology Kanpur, Kanpur-208016, India}  

\begin{keyword}                           
    Dubins Shortest Path, curvature bounded trajectories, elongation strategies, trajectory planning           
\end{keyword}                             

\begin{abstract}        
In this paper, we present strategies for designing curvature-bounded trajectories of any desired length between any two given oriented points. The proposed trajectory is constructed by the concatenation of three circular arcs of varying radii. Such a trajectory guarantees a complete coverage of the maximum set of reachable lengths while minimising the number of changeover points in the trajectory to a maximum of two under all scenarios. Additionally, by using the notion of internally tangent circles, we expand the set of Circle-Circle-Circle trajectories to eight kinds, consisting of $\{LLL,LLR,LRR,LRL,RRL,RLL,RLR,RRR\}$ paths. The paper presents a mathematical formulation of the proposed trajectory and the conditions for the existence and classification of each kind of trajectory. We also analyse the variation of the length of the trajectory using suitable elongation strategies and derive the set of reachable lengths for all pairs of oriented points. Finally, the results of this paper are illustrated using numerical simulations.
\end{abstract}

\end{frontmatter}

\section{Introduction}

The problem of planning trajectories between two given points for an autonomous vehicle moving at a constant speed has been explored extensively in literature of guidance and control. It finds its applications in a variety of fields such as parking problems\cite{b29}, warehouse automation\cite{b30}, missile guidance\cite{b27}, \textit{etc.} The feasibility of such problems towards real world applications leads to additional requirements such as curvature-boundedness, desired lengths of the trajectories, directions of motion at initial and final points\cite{b12}, and optimal energy consumption \cite{b28}.


The shortest trajectory between any two oriented points in $\mathbb{R}^2$ is either of the form Circle-Circle-Circle $(CCC)$ or Circle-Straight Line-Circle $(CSC)$ (\cite{dubins},\cite{b14}). This trajectory is called Dubins Shortest Path. A relaxed problem is to find trajectories of desired lengths between an oriented point and a fixed point in $\mathbb{R}^2$. The shortest path in such a scenario is of the form Circle-Circle $(CC)$ or Circle-Straight Line $(CS)$ as shown in \cite{bui1994accessibility}. Elongation of certain subsections of such a minimum length trajectory is done to get trajectories of desired lengths. Without fixing the tangent vector at the final point, the authors in \cite{b12} and \cite{b13} present multiple such elongation strategies. However, it is shown in \cite{b19} that between some initial oriented points and final points, there exist no curvature-bounded paths for a certain range of lengths. This classification of the set of reachable lengths is elaborated in \cite{b19} and elongation strategies are presented for all the cases. 

 The authors in \cite{b15} discuss elongation strategies achieved by increasing the radii of curvature of the terminal circles when the Dubins Shortest Path is of the form $CSC$. Alternatively, the authors in \cite{b22} propose replacing the straight-line segment with an elongated path. The use of clothoid arcs of arbitrary lengths is proposed in \cite{b1} for trajectory elongation while also ensuring a continuity in the curvature profile. The authors in \cite{chen_elongation} provide a comprehensive analysis of the set of reachable lengths given any two oriented points and propose elongation strategies for all pairs of oriented points. They also highlight the cases when certain lengths are not reachable for some pairs of oriented points. In \cite{b25}, the Dubins Shortest Path is extended to three-dimensional space, followed by the introduction of trajectory elongation methods to attain any desired length. The authors in \cite{patsko2022three} analyse the set of reachable oriented points from any initial oriented point at time $t_f$ for a Dubins vehicle with symmetric bounds on the control input. Further, in the case of asymmetric bounds on the control input, the set of reachable oriented points is derived in \cite{patsko2023threeasym}. Note that, in general, the use of elongation based strategies increases the number of changeover points. The strategies presented in \cite{b15,b22,b1,chen_elongation} increase the number of changeover points by as large as six. To address this problem, in one of our earlier works \cite{RAO}, we propose to construct trajectories of desired lengths using exactly two circles of varying radius minimising the changeover points to exactly one always.
  
Geometrical curves other than circles and straight lines are also used in the literature to construct trajectories of desired lengths. The use of elliptical curves and Bezier curves is proposed in \cite{b8} and \cite{b26}, respectively, to achieve trajectories of desired lengths. However, the use of geometrical curves, other than the circles and straight lines, limits the set of reachable lengths. Alternate methods have also been explored to achieve a trajectory of desired length between any two oriented points. The authors in \cite{xu1999curve} construct the trajectory as a polynomial expression and find its parameters that satisfy various constraints. Optimal control theory is employed in \cite{b23} and \cite{b24} to develop closed-form Impact Time Control Guidance (ITCG) laws, allowing a vehicle to manoeuvre between two oriented points with a fixed time of flight. In contrast, \cite{b2} introduces a structure-homotopy-based planner that generates trajectories by focusing on endpoint conditions rather than relying on elongation strategies.
%

In this work, we focus on planning trajectories of desired lengths between any two given oriented points while minimising the number of changeover points and ensuring the maximum coverage of the set of reachable lengths. The major contributions of this work are as follows:
\begin{itemize}
    \item \textit{Design of paths of desired lengths:} We propose to construct a feasible trajectory by concatenating three circles. We show that there always exist infinitely many such curvature-bounded trajectories of varying lengths between any two oriented points and present analytical results for the same.
    \item \textit{Internally tangent trajectories:} The Circle-Circle-Circle Dubins paths are of form $\{RLR,LRL\}$ in literature. The proposed solution expands the Dubins path of form Circle-Circle-Circle $(CCC)$ to eight forms, namely $\{LLL,LLR,LRR,LRL,RRL,RLL,RLR,RRR\}$, by considering circles that are internally tangent. We also show that the Circle-Circle trajectories of form $\{LL,LR,RR,RL\}$ presented in \cite{RAO} emerge naturally as a subset of the proposed solution. 
    \item \textit{Set of reachable lengths:} We find the set of reachable lengths for these trajectories and show that it is equal to the maximum set of reachable lengths. Given a reachable length, we show that there exist multiple curvature-bounded trajectories of a desired length. 
    \item \textit{Reduction in the number of points of curvature discontinuity:} 
    The proposed trajectory always has a maximum of two changeover points for any two arbitrarily orientated points and a given desired length. This minimisation is achieved without any reduction in the maximum set of reachable lengths.
\end{itemize}

The paper is organised as follows: Section \ref{section: prelim} presents some preliminary technical results that are used subsequently in the paper. Section \ref{sec: problem} presents the problem statement and describes the proposed method of trajectory design with the motivation behind it. Section \ref{section: existence} provides a mathematical description of the proposed Circle-Circle-Circle trajectory and the conditions necessary for its existence. Building on these results, Section \ref{section: traj of desired length} explores elongation strategies and their impact on the attributes of the proposed trajectories. Section \ref{sec: reachability set} discusses the set of reachable lengths for various pairs of oriented points. Section \ref{sec: simulation} presents numerical simulations to illustrate the theoretical findings. Finally, Section \ref{sec: conclusion} concludes the paper and suggests potential directions for future research.

\section{Preliminaries and Notations}
\label{section: prelim}
Trajectories constructed using circles have been utilised frequently in path planning problems. They provide simplicity in design and require constant lateral acceleration in physical implementation. We define a curvature-bounded trajectory as a function $\Lambda$ and the minimum turn radius allowed for any such trajectory as $r_{\min}$.
\begin{definition}
\label{def:lambda}
    Given two oriented points ${A}=(\mathbf{a},\alpha)$ and ${B}=(\mathbf{b},\beta)$, $\Lambda:[0,s]\longrightarrow\mathbb{R}^2$ denotes a feasible curvature-bounded trajectory connecting the oriented points such that
    \begin{itemize}
        \item $\Lambda(t)$ is parameterized by arc length and is $C^1$ and piece-wise $C^2$.
        \item $|{\Lambda^{'}(t)}| = 1 \forall
t\in[0, s]$
        \item $\Lambda(0)=\mathbf{a},\Lambda(s)=\mathbf{b},\Lambda'(0)=(\cos\alpha,\sin\alpha),\Lambda'(s)=(\cos\beta,\sin\beta)$    \item $||\Lambda^{''}(t)||\leq1/r_{\min}$, $t\in[0,s]$ when defined, 
        \end{itemize}
    where $r_{\min}>0$. The length of the trajectory is denoted by $l(\Lambda)$.
\end{definition}
 We define $\mathcal{C}^r_P$ and $\mathcal{C}^l_P$ as two tangential circles of radius $r_{\min}$ at an oriented point $P$ with their centres at $\mathbf{c}^r_P$ and $\mathbf{c}^l_P$, respectively, as shown in Fig. \ref{fig: left right circles}. These circles correspond to a right turn (denoted by $R$) and a left turn (denoted by $L$) on a circular arc of radius $r_{\min}$. Further, we define the function $d:\mathbb{R}^2\times\mathbb{R}^2\longrightarrow\mathbb{R}$ that gives the Euclidean distance between any two points in $\mathbb{R}^2$ plane. 

\begin{figure}[h]
    \centering
    \includegraphics[scale=0.5]{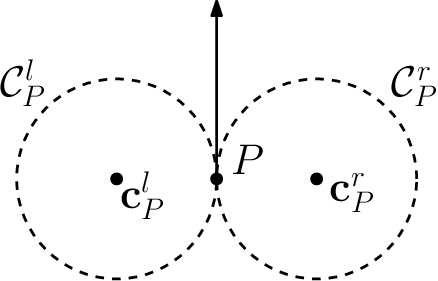}
    \caption{Two circles $\mathcal{C}^r_P$ and $\mathcal{C}^l_P$ at oriented points $P$}
    \label{fig: left right circles}
\end{figure}

The radius of a circle is generally a positive real number. However, we denote the radius of a circle as $r\in\mathbb{R}$ within the context of this paper with the following attributes.
\begin{definition}
\label{lem:neg_radius}
For any value of the radius $r\in\mathbb{R}$, the magnitude of curvature is given by $1/|r|$, and the orientation of motion on the trajectory $\Lambda$ is counter-clockwise for $r>0$ and is clockwise for $r<0$.
\end{definition}
We now present some useful results from \cite{RAO} on the tangency of circles and the motion over trajectories formed by such tangential circles. The statements of these results have been appropriately re-phrased for the context of this paper. 
\begin{lemma} 
\label{lem:tangency centre disctance}
Any two tangential circles must satisfy the condition,
\begin{equation}
    d(\mathbf{o_a},\mathbf{o_b})=|r_a-r_b|
    \label{eq:centre disctance}
\end{equation}
where $\mathbf{o_a}$ and $\mathbf{o_b}$ are the coordinates of centres of the circles and $r_a$ and $r_b$ are the radii, respectively.
\end{lemma}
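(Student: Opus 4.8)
The plan is to exploit the signed-radius convention of Definition~\ref{lem:neg_radius} so that the single expression $|r_a-r_b|$ covers both the internally and externally tangent configurations at once, rather than splitting the argument into two separate geometric cases. The starting observation is purely geometric: two tangent circles meet at a single common point $T$ and share a common tangent line $\ell$ there, so each radius drawn to $T$ is perpendicular to $\ell$. Consequently both centres $\mathbf{o_a}$ and $\mathbf{o_b}$ lie on the line through $T$ normal to $\ell$; that is, $\mathbf{o_a}$, $\mathbf{o_b}$ and $T$ are collinear, and the whole problem reduces to locating two points along a single line.

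First I would fix the unit direction of motion $\mathbf{v}$ at the tangent point. Since the circles are traversed as part of a $C^1$ trajectory $\Lambda$ (Definition~\ref{def:lambda}), the velocity is continuous at $T$, so $\mathbf{v}$ is shared by both arcs, and the unit normal $\mathbf{n}$ obtained by rotating $\mathbf{v}$ through $+\tfrac{\pi}{2}$ is likewise common to both circles. I would then invoke the orientation convention of Definition~\ref{lem:neg_radius} to write each centre as a signed offset of $T$ along this common normal, namely $\mathbf{o_a}=T+r_a\mathbf{n}$ and $\mathbf{o_b}=T+r_b\mathbf{n}$. The sign check is that $r>0$ places the centre a distance $|r|$ to the left of $\mathbf{v}$ (counter-clockwise motion) and $r<0$ places it to the right (clockwise motion), which is exactly the behaviour prescribed by Definition~\ref{lem:neg_radius}.

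With both centres expressed against the same normal, the conclusion is immediate: $\mathbf{o_a}-\mathbf{o_b}=(r_a-r_b)\mathbf{n}$, and since $\mathbf{n}$ is a unit vector, $d(\mathbf{o_a},\mathbf{o_b})=\norm{(r_a-r_b)\mathbf{n}}=|r_a-r_b|$, which is precisely \eqref{eq:centre disctance}.

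The step I expect to be the main obstacle is the sign bookkeeping in the second paragraph, i.e.\ verifying that the signed offset $r_i\mathbf{n}$ reproduces the correct side of the tangent line in every orientation combination. This is where the convention does its real work: when $r_a$ and $r_b$ share a sign the centres fall on the same side of $T$ and $|r_a-r_b|=\big||r_a|-|r_b|\big|$ recovers the familiar internal-tangency distance, whereas when the signs differ the centres fall on opposite sides and $|r_a-r_b|=|r_a|+|r_b|$ recovers the external-tangency distance. Confirming that these two geometric pictures are faithfully encoded by the single signed formula is the crux; once that is in place, the norm computation is routine.
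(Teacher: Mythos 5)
Your argument is correct, but there is nothing in this paper to compare it against: Lemma~\ref{lem:tangency centre disctance} is imported verbatim from the cited prior work \cite{RAO} and stated here without proof. Taken on its own merits, your derivation is sound and self-contained. The key step --- writing each centre as the signed offset $\mathbf{o}=T+r\,\mathbf{n}$ along the common normal at the tangency point --- is exactly consistent with the paper's own convention: equation \eqref{eq:centres o1 and o3} places $\mathbf{o_1}=\mathbf{a}+r_1(-\sin\alpha,\cos\alpha)$, which is precisely $\mathbf{a}+r_1\,R_{\pi/2}\mathbf{v}$ with $\mathbf{v}=(\cos\alpha,\sin\alpha)$, so your sign bookkeeping matches Definition~\ref{lem:neg_radius} as the paper uses it. Your final sanity check also agrees with Lemma~\ref{lem:tangency relation circles}: opposite signs give $|r_a-r_b|=|r_a|+|r_b|$ (external tangency), same signs give $\bigl||r_a|-|r_b|\bigr|$ (internal tangency). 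The one assumption worth flagging explicitly is that the two arcs share a common velocity direction at $T$; this follows from the $C^1$ requirement of Definition~\ref{def:lambda} and is in any case the only setting in which the signed radii of both circles are simultaneously well defined, so it is not a gap in context, merely a hypothesis you should state when invoking the lemma for circles that are not yet known to be consecutive arcs of a feasible trajectory.
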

The relation in \eqref{eq:centre disctance} for the distance between the centres of any two (internally or externally) tangential circles is within the context of the radius given in Definition \ref{lem:neg_radius}. 
\begin{lemma}
\label{lem:tangency relation circles}
If two circles $\mathcal{C}_a$ and $\mathcal{C}_b$ are externally tangent, their radii, $r_a$ and $r_b$, respectively, have opposite signs, otherwise they have the same signs. 
\end{lemma}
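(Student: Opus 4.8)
The plan is to argue geometrically at the point of tangency, where the sign convention of Definition~\ref{lem:neg_radius} converts the turn orientation (clockwise versus counter-clockwise) into the sign of the radius. First I would note that two tangent circles meet at a single point $T$ and share a common tangent line there; consequently both centres $\mathbf{o}_a$ and $\mathbf{o}_b$ lie on the common normal to this tangent line at $T$. Because the trajectory $\Lambda$ is $C^1$ (Definition~\ref{def:lambda}), the unit velocity vector is continuous at the changeover point $T$, so the direction of motion on the two arcs is the same vector, lying along the common tangent.

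The key observation is that, by Definition~\ref{lem:neg_radius}, motion with $r>0$ is counter-clockwise and places the centre to the left of the velocity vector, whereas $r<0$ (clockwise) places the centre to its right. Since both centres lie on the normal line, each lies on one of the two sides of the velocity vector. I would then split into the two tangency cases. For external tangency the point $T$ lies on the segment joining $\mathbf{o}_a$ and $\mathbf{o}_b$, so the two centres lie on opposite sides of $T$ along the normal, hence on opposite sides of the common velocity vector; one centre is therefore to the left and the other to the right, giving $r_a$ and $r_b$ opposite signs. For internal tangency one circle lies inside the other and $T$ is not between the centres, so both $\mathbf{o}_a$ and $\mathbf{o}_b$ lie on the same side of $T$, hence on the same side of the velocity vector, forcing the same turn orientation and therefore equal signs of $r_a$ and $r_b$.

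Finally, I would verify that the conclusion is independent of the (arbitrary) sense of travel along the common tangent: reversing the velocity simultaneously swaps the meaning of ``left'' and ``right'' for both circles, so the relative relationship (opposite versus equal signs) is preserved. As a consistency check, this sign assignment is precisely what makes Lemma~\ref{lem:tangency centre disctance} hold, since $|r_a-r_b|=|r_a|+|r_b|$ --- the external-tangency centre distance --- exactly when $r_a r_b\le 0$, and $|r_a-r_b|=\big||r_a|-|r_b|\big|$ --- the internal-tangency centre distance --- exactly when $r_a r_b\ge 0$. The main obstacle I anticipate is making the ``left/right of the velocity'' correspondence with the sign of $r$ rigorous and orientation-independent; once that is pinned down (for instance, via a $90^\circ$ rotation identity relating the velocity to the inward normal), the case analysis is immediate.
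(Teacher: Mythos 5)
Your argument is correct. Note, however, that the paper itself gives no proof of this lemma: it is imported verbatim from the earlier work \cite{RAO} and stated as a standing convention, so there is no in-paper proof to compare against. Your geometric route --- continuity of the velocity at the tangency point (from the $C^1$ requirement in Definition~\ref{def:lambda}), both centres lying on the common normal, and the sign of $r$ encoding whether the centre sits to the left or right of the velocity --- is the natural way to justify the convention, and it is consistent with how the paper actually uses the lemma (e.g.\ in the case analysis of Theorem~\ref{thm: hyperbola of centre}) and with the signed centre-distance relation of Lemma~\ref{lem:tangency centre disctance}, as your final consistency check confirms. The only point worth making explicit is that the sign of a radius is meaningful only relative to a traversal direction, so the lemma is really a statement about consecutive arcs of a trajectory rather than about bare circles; your invocation of $C^1$ continuity at the changeover point is exactly what supplies that context.
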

\begin{lemma}
    If any two circular arcs are externally tangent, then, the orientation of the motion switches from clockwise to anti-clockwise or vice-versa along the trajectory at the point of tangency. On the contrary, if the circles are internally tangent, the orientation of the motion remains the same. (Fig. \ref{fig:typesoftangency})
    \label{lem: orientation change}
\end{lemma}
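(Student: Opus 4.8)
The plan is to reduce the statement to two facts already available: the sign convention of Definition~\ref{lem:neg_radius}, which ties the orientation of motion to the sign of the radius, and the sign relation of Lemma~\ref{lem:tangency relation circles}, which ties that sign to the type of tangency. First I would fix the point of tangency $T$ of the two circular arcs and invoke the $C^1$ regularity of the concatenated trajectory from Definition~\ref{def:lambda}: the unit tangent $\Lambda'(\cdot)$ is continuous at $T$, so the two arcs share a common velocity direction $\mathbf{v}$ and hence a common tangent line $\ell$ through $T$. Consequently both centres $\mathbf{o_a}$ and $\mathbf{o_b}$ lie on the normal to $\ell$ at $T$, at distances $|r_a|$ and $|r_b|$ respectively.

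The core observation I would then make is that, under Definition~\ref{lem:neg_radius}, the orientation of motion on each arc is governed solely by the sign of its radius ($r>0$ giving counter-clockwise and $r<0$ giving clockwise motion). Hence the two arcs are traversed with the same orientation precisely when $r_a$ and $r_b$ have the same sign, and the orientation switches at $T$ precisely when $r_a$ and $r_b$ have opposite signs. It then remains only to read off the sign relation from the type of tangency: by Lemma~\ref{lem:tangency relation circles}, external tangency forces opposite signs, so the orientation switches from clockwise to counter-clockwise (or vice versa) at $T$, while internal tangency forces equal signs, so the orientation is preserved across $T$. This establishes both assertions.

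I expect the one genuinely delicate point to be justifying the equivalence \emph{same orientation $\iff$ same sign of radius} cleanly, i.e.\ verifying that the geometric notion of clockwise versus counter-clockwise traversal at the shared direction $\mathbf{v}$ agrees with the sign convention regardless of which side of $\ell$ each centre falls on. Concretely, the centre of an arc lies to the left of $\mathbf{v}$ exactly when $r>0$ and to the right exactly when $r<0$; external tangency places $\mathbf{o_a}$ and $\mathbf{o_b}$ on opposite sides of $\ell$ and internal tangency on the same side, a picture consistent with $d(\mathbf{o_a},\mathbf{o_b})=|r_a-r_b|$ from Lemma~\ref{lem:tangency centre disctance}. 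Once this correspondence between the side of the common tangent and the sign of the radius is pinned down, the two cases follow immediately and the argument closes.
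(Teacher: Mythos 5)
Your argument is correct, but be aware that the paper offers no proof of Lemma~\ref{lem: orientation change} to compare against: the statement is imported from \cite{RAO} and merely illustrated by Fig.~\ref{fig:typesoftangency}. Within the paper's logical structure your reduction is sound, and its first half is essentially a tautology: Definition~\ref{lem:neg_radius} makes ``sign of the radius'' and ``orientation of traversal'' interchangeable, and Lemma~\ref{lem:tangency relation circles} already asserts that external tangency forces opposite signs while internal tangency forces equal signs, so the claimed orientation behaviour follows by direct substitution. The genuine mathematical content therefore sits entirely in your closing paragraph --- the $C^1$ condition from Definition~\ref{def:lambda} forcing a common tangent line $\ell$ and common velocity $\mathbf{v}$ at the changeover point, the fact that the centre of an arc lies to the left of $\mathbf{v}$ exactly when $r>0$ (the centripetal direction is $90^\circ$ to the left for counter-clockwise traversal), and the observation that external tangency places the two centres on opposite sides of $\ell$ while internal tangency places them on the same side, consistently with $d(\mathbf{o_a},\mathbf{o_b})=|r_a-r_b|$ from Lemma~\ref{lem:tangency centre disctance}. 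That paragraph in fact proves the statement independently of Lemma~\ref{lem:tangency relation circles} (and, read in reverse, is presumably how \cite{RAO} obtains that lemma in the first place, so leaning on it alone would risk circularity across the two papers); you correctly identified it as the delicate step and closed it. No gap.
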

\begin{figure}[h]
\centering
    \begin{subfigure}{0.2\textwidth}
    \centering
        \includegraphics[width=0.75\textwidth]{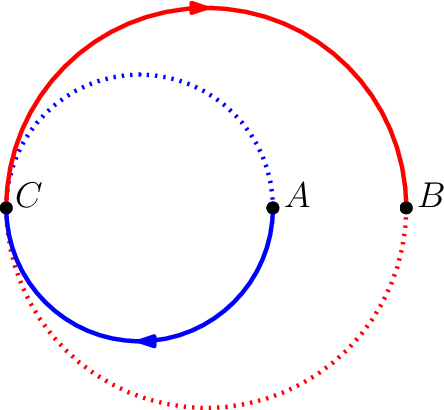} 
        \caption{Internally Tangent Arcs}
        \label{fig:external orient}
    \end{subfigure}
    \hskip2em 
     \begin{subfigure}{0.2\textwidth}
     \centering
        \includegraphics[width=\textwidth]{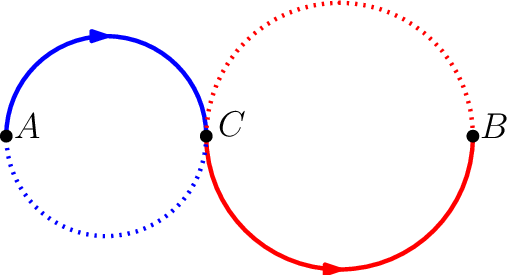} 
        \caption{Externally Tangent Arcs}
        \label{fig:internal orient}
    \end{subfigure}
    \caption{Change of orientation in Internally and Externally Tangent Arcs}
    \label{fig:typesoftangency}
\end{figure}

We know from \cite{dubins} that the shortest curvature-bounded trajectory between any two oriented points $A$ and $B$ is either of the form Circle-Straight Line-Circle ($CSC$) or Circle-Circle-Circle ($CCC$), where C denotes a circular arc of radius $r_{\min}$ and S denotes a straight-line segment. We denote this shortest trajectory by $\Lambda_m$ and its length by $l_m$. There exist two types of $CCC$ paths, namely $\{LRL,RLR\}$; and four types of $CSC$ paths, namely $\{LSR,LSL,RSL,RSR\}$.

\section{Problem Statement}
\label{sec: problem}
Consider two points $\mathbf{a}$ and $\mathbf{b}$ in $\mathbb{R}^2$ space. Without loss of generality, we assume that the final point $\mathbf{b}$ lies at the origin. Let $\alpha$ and $\beta$ be the angles that the velocity vectors make at $\mathbf{a}$ and $\mathbf{b}$ with respect to the positive X-axis, respectively, such that $\alpha,\beta \in [0,2\pi)$. Using them, we define the tuples ${A}(\mathbf{a},\alpha)$ and ${B}(\mathbf{b},\beta)$ to be two oriented points. \textit{The objective of the paper is to construct a trajectory of any arbitrary length $l_o$ between any two given oriented points $A$ and $B$.}  Further, additional constraints of curvature-boundedness and minimum curvature discontinuities are imposed on the trajectory to facilitate practical implementation.
\begin{figure}[h]
    \centering
    \includegraphics[scale=0.5]{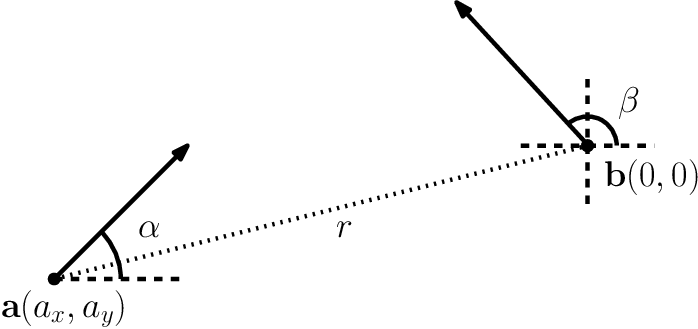}
    \caption{Oriented points $A(\mathbf{a},\alpha)$ and $B(\mathbf{b},\beta)$ for trajectory design}
    \label{fig:oriented points}
\end{figure}

Any feasible trajectory between $A$ and $B$ has five attributes to satisfy: the initial heading angle ($\alpha$), the final heading angle ($\beta$), the length of the trajectory ($l_o$) and the relative location of points $\mathbf{a}$ and $\mathbf{b}$ in $\mathbb{R}^2$ space. \textit{To impose all of the above conditions, we design a trajectory formed by concatenating three circular arcs with exactly two pairs of circular arcs tangent to each other.} We refer to such trajectories as Circle-Circle-Circle trajectories. The circles are denoted by $\mathcal{C}_1$, $\mathcal{C}_2$ and $\mathcal{C}_3$ and their respective radii by $r_1$, $r_2$ and $r_3$. In any proposed feasible trajectory,
    \begin{enumerate}
    \item[i.] the trajectory begins at the point $\mathbf{a}$ on $\mathcal{C}_1$,
    \item[ii.] the circles $\mathcal{C}_1$ and $\mathcal{C}_2$ are tangent at a point $\mathbf{c_1}$ and the circles $\mathcal{C}_2$ and $\mathcal{C}_3$ are tangent at a point $\mathbf{c_2}$. These points are called the \textit{changeover points}.
    \item[iii.] the trajectory ends at point the $\mathbf{b}$ on $\mathcal{C}_3$, and,
    \item[iv.] the overall trajectory is composed of the union of three circular arcs given by $\widearc{\mathbf{ac_1}} \cup \widearc{\mathbf{c_1c_2}}\cup\widearc{\mathbf{c_2b}}$.
\end{enumerate}
\begin{figure}[h]
    \centering
    \includegraphics[scale=0.45]{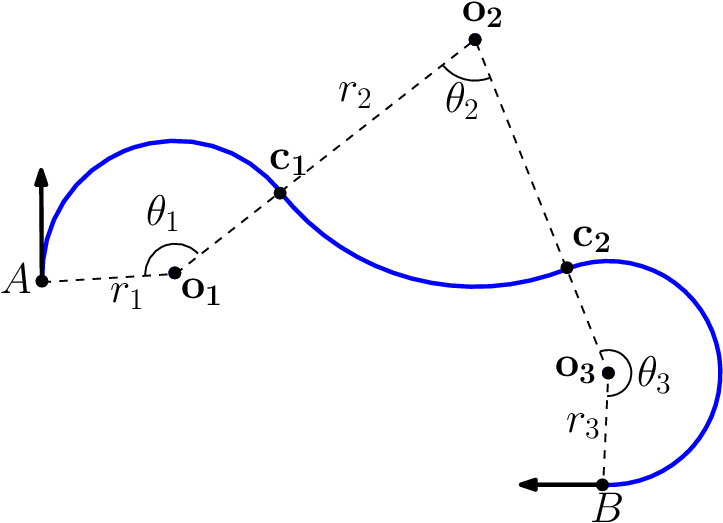}
    \caption{Parameters of Circle-Circle-Circle Trajectory}
    \label{fig:traj_cc}
\end{figure}
Fig. \ref{fig:traj_cc} illustrates a Circle-Circle-Circle trajectory. We denote such trajectories by $\mathbf{C}^{r_1}\mathbf{C}^{r_2}\mathbf{C}^{r_3}$ where $r_1,r_2\text{ and }r_3$ correspond to the radius of the three circles, respectively. This is a generalisation of the commonly used notation $CCC$ which represents a Dubins Paths comprising of three circles of curvature $1/r_{\min}$. We now highlight the motivation for using such trajectories:
\begin{itemize}
    \item In one of our previous works \cite{RAO}, Circle-Circle trajectories are proposed as feasible trajectories in the given framework. The degrees of freedom of such a trajectory are exactly equal to the attributes of the desired trajectory. However, the set of reachable lengths is a subset of the maximum set of reachable lengths given by Theorem \ref{thm: maximal reachable lengths}. We show eventually that the proposed Circle-Circle-Circle trajectories overcome this limitation. 
    \item Three circles of varying radii lead to nine degrees of freedom in the overall trajectory. With two tangency constraints, we get seven degrees of freedom in a Circle-Circle-Circle trajectory. The extra degrees of freedom are advantageous as they lead to the existence of multiple trajectories between $A$ and $B$.
\end{itemize}

The following section deals with the mathematical formulation to design $\mathbf{C}^{r_1}\mathbf{C}^{r_2}\mathbf{C}^{r_3}$ trajectories between any two oriented points.
\section{Existence of Circle-Circle-Circle trajectory}
\label{section: existence}
 
In this section, we explore the existence of a general $\mathbf{C}^{r_1}\mathbf{C}^{r_2}\mathbf{C}^{r_3}$ trajectory while relaxing the constraint of a desired length. Then, given the oriented points $A(\mathbf{a},\alpha)$ and $B(\mathbf{b},\beta)$, consider the circles $\mathcal{C}_1$ and $\mathcal{C}_3$. Their centres must lie on the lines normal to their respective heading vectors at the terminal points $\mathbf{a}=(a_x,a_y)$ and $\mathbf{b}=(0,0)$.  Then,
\begin{subequations}
\begin{align}
    \mathbf{o_1}&=(a_x-r_1\sin\alpha,a_y+r_1\cos\alpha)\\
    \mathbf{o_3}&=(-r_3\sin\beta,r_3\cos\beta)
\end{align}
\label{eq:centres o1 and o3}
\end{subequations}
As mentioned in Definition \ref{lem:neg_radius}, we interpret $r_1$ and $r_3$ (and later $r_2$) as not just the radii of the respective circles but also the orientations of motion on them. Eqn. \eqref{eq:centres o1 and o3} can be viewed as the locus of the centres $\mathbf{o_1}$ and $\mathbf{o_3}$ parameterised by $r_1\in\mathbb{R}$ and $r_3\in\mathbb{R}$. Note that eqn. \eqref{eq:centres o1 and o3} holds irrespective of the choice of the intermediate curve. This implies that any curve which satisfies the the five required degrees of freedom can be used here. We choose to use a circular path so that the Dubins Shortest Path is achievable by the resulting trajectory. Depending upon the curve used, different existence conditions emerge. It is not always guaranteed that such a path will exist. Next, we proceed to find the conditions for the existence of the proposed $\mathbf{C}^{r_1}\mathbf{C}^{r_2}\mathbf{C}^{r_3}$ trajectories.

\subsection{Locus of the centre of circle \texorpdfstring{$\mathcal{C}_2$}{Lg}}
\label{subs: locus of centre}
In order to determine the conditions to check the existence of $\mathbf{C}^{r_1}\mathbf{C}^{r_2}\mathbf{C}^{r_3}$ trajectories, we focus on $\mathcal{C}_2$ while keeping $\mathcal{C}_1$ and $\mathcal{C}_3$ fixed. In other words, we assume that $r_1$ and $r_3$ take some finite values in $\mathbb{R}$.
\begin{theorem}
     For any given circles $\mathcal{C}_1$ and $\mathcal{C}_3$, the locus of the centre of $\mathcal{C}_2$, denoted by $\mathbf{o_2}$, is a hyperbola $\mathcal{H}$ defined by,
    \begin{equation}
        \mathcal{H}=\{\mathbf{o_2}\in\mathbb{R}^2 \mid ~~|d(\mathbf{o_2},\mathbf{o_1})-d(\mathbf{o_2},\mathbf{o_3})|=|r_1-r_3|\}
        \label{eq: hyperbolic relation of centres}
    \end{equation}
where $\mathbf{o_1}$ and $\mathbf{o_3}$ are centres of $\mathcal{C}_1$ and $\mathcal{C}_3$, respectively, in the feasible $\mathbf{C}^{r_1}\mathbf{C}^{r_2}\mathbf{C}^{r_3}$ trajectory and $r_1$ and $r_3$ are the corresponding radii. 
\label{thm: hyperbola of centre}
\end{theorem}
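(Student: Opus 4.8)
The plan is to hold $\mathcal{C}_1$ and $\mathcal{C}_3$ fixed, treat $r_2$ (equivalently $\mathbf{o_2}$) as the free parameter, and eliminate $r_2$ from the two tangency conditions so that the surviving constraint on $\mathbf{o_2}$ becomes the bifocal definition of a hyperbola with foci $\mathbf{o_1}$ and $\mathbf{o_3}$.

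First I would apply Lemma \ref{lem:tangency centre disctance} to each tangent pair. Tangency of $\mathcal{C}_1,\mathcal{C}_2$ at $\mathbf{c_1}$ gives $d(\mathbf{o_2},\mathbf{o_1})=|r_1-r_2|$, and tangency of $\mathcal{C}_2,\mathcal{C}_3$ at $\mathbf{c_2}$ gives $d(\mathbf{o_2},\mathbf{o_3})=|r_2-r_3|$. Since the radii carry signs in the sense of Definition \ref{lem:neg_radius}, these two identities hold simultaneously for internal and external tangency, so by Lemma \ref{lem:tangency relation circles} no separate bookkeeping of the tangency type is required at this stage.

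Next I would eliminate $r_2$. Each relation fixes $r_2$ up to sign as $r_2=r_1\mp d(\mathbf{o_2},\mathbf{o_1})$ and $r_2=r_3\mp d(\mathbf{o_2},\mathbf{o_3})$; equating the two and cancelling $r_2$ on the matching-sign branch gives $d(\mathbf{o_2},\mathbf{o_1})-d(\mathbf{o_2},\mathbf{o_3})=\pm(r_1-r_3)$, that is $|d(\mathbf{o_2},\mathbf{o_1})-d(\mathbf{o_2},\mathbf{o_3})|=|r_1-r_3|$, which is precisely \eqref{eq: hyperbolic relation of centres}. The same conclusion follows more compactly by writing $p=r_1-r_2$, $q=r_2-r_3$ so that $p+q=r_1-r_3$ and invoking the reverse triangle inequality $\bigl\lvert|p|-|q|\bigr\rvert\le|p+q|$, with equality exactly when $r_2$ does not lie strictly between $r_1$ and $r_3$. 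I would then identify $\mathbf{o_1},\mathbf{o_3}$ as the foci and $|r_1-r_3|$ as the constant difference of focal distances defining the hyperbola $\mathcal{H}$, substituting \eqref{eq:centres o1 and o3} for their explicit coordinates if a closed-form equation of $\mathcal{H}$ is wanted.

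The step I expect to be the main obstacle is ruling out the spurious branch: the opposite sign choice in the elimination yields $d(\mathbf{o_2},\mathbf{o_1})+d(\mathbf{o_2},\mathbf{o_3})=|r_1-r_3|$, an ellipse cofocal with $\mathcal{H}$, arising exactly when $r_2$ lies strictly between $r_1$ and $r_3$. I would discharge it with the triangle inequality $d(\mathbf{o_2},\mathbf{o_1})+d(\mathbf{o_2},\mathbf{o_3})\ge d(\mathbf{o_1},\mathbf{o_3})$: this branch is empty whenever $|r_1-r_3|<d(\mathbf{o_1},\mathbf{o_3})$, so $\mathcal{H}$ is the entire locus in the non-degenerate regime. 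Verifying that the feasible $\mathbf{C}^{r_1}\mathbf{C}^{r_2}\mathbf{C}^{r_3}$ trajectories of interest fall in this regime, and noting the complementary degenerate case $|r_1-r_3|\ge d(\mathbf{o_1},\mathbf{o_3})$, is the one place the argument needs genuine care; the remaining identification of foci and transverse axis is routine.
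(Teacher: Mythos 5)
Your proof is correct, but it follows a genuinely different route from the paper's. The paper fixes the signs of $r_1$ and $r_3$, invokes Lemma~\ref{lem: orientation change} to determine which tangency types (internal/external) $\mathcal{C}_2$ may have with $\mathcal{C}_1$ and $\mathcal{C}_3$ so that the prescribed turning directions at the endpoints are respected, and then checks in each admissible configuration that $d(\mathbf{o_2},\mathbf{o_1})-d(\mathbf{o_2},\mathbf{o_3})=\pm(r_1-r_3)$. You instead apply Lemma~\ref{lem:tangency centre disctance} to both tangent pairs, eliminate the signed $r_2$, and dispose of the spurious cofocal-ellipse branch $d(\mathbf{o_2},\mathbf{o_1})+d(\mathbf{o_2},\mathbf{o_3})=|r_1-r_3|$ purely metrically via the triangle inequality $d(\mathbf{o_2},\mathbf{o_1})+d(\mathbf{o_2},\mathbf{o_3})\geq d(\mathbf{o_1},\mathbf{o_3})$ together with the non-degeneracy condition $d(\mathbf{o_1},\mathbf{o_3})>|r_1-r_3|$ --- which is exactly the existence condition \eqref{eq: existence of Hyperbola} the paper establishes immediately afterwards in Theorem~\ref{corr: existence of H}, so the appeal is not circular. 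Your argument is arguably tighter on one point where the paper is loose (the paper writes $d(\mathbf{o_2},\mathbf{o_1})=r_2-r_1$ without the absolute value, implicitly assuming an ordering of the signed radii, whereas your reverse-triangle-inequality formulation handles all orderings at once), and it is more economical because it never needs the orientation bookkeeping. What the paper's case analysis buys in exchange is reused later: identifying which sign of $d(\mathbf{o_2},\mathbf{o_1})-d(\mathbf{o_2},\mathbf{o_3})$ corresponds to which tangency combination is precisely what drives the branch classification in Lemmas~\ref{lem: classification 1} and~\ref{lem: classification 2} and Table~\ref{table: tangency value}, so if you adopted your proof you would still need to recover that correspondence separately.
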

\begin{proof}
    Let us denote the radius of $\mathcal{C}_2$ as $r_2\in\mathbb{R}$. We proceed with the proof by considering two cases:
    \begin{itemize}
        \item [i.] $r_1r_3>0$: Consider $r_1>0$ and $r_3>0$. Thus, there is a left turn on both $\mathcal{C}_1$ and $\mathcal{C}_3$. For this to happen, $\mathcal{C}_2$ must be chosen such that the orientation of motion changes at both $\mathbf{c_1}$ and $\mathbf{c_2}$ or at neither of them. Thus, $\mathcal{C}_2$ must be either externally or internally tangent to both $\mathcal{C}_1$ and $\mathcal{C}_3$ from Lemma \ref{lem: orientation change}. For internal tangency, $r_2>0$ from Lemma \ref{lem:tangency relation circles}. Consequently, $d(\mathbf{o_2},\mathbf{o_1})=r_2-r_1$ and $d(\mathbf{o_2},\mathbf{o_3})=r_2-r_3$ which implies $ d(\mathbf{o_2},\mathbf{o_1})-d(\mathbf{o_2},\mathbf{o_3})=r_3-r_1$. For external tangency, $r_2<0$ from Lemma \ref{lem:tangency relation circles}. Hence, $d(\mathbf{o_2},\mathbf{o_1})=r_1-r_2$ and $d(\mathbf{o_2},\mathbf{o_3})=r_3-r_2$ which implies $ d(\mathbf{o_2},\mathbf{o_1})-d(\mathbf{o_2},\mathbf{o_3})=r_1-r_3$. Combining them, we get $\left|d(\mathbf{o_2},\mathbf{o_1})-d(\mathbf{o_2},\mathbf{o_3})\right|=|r_1-r_3|$. We get a similar relation for the case $r_1<0$ and $r_3<0$.

        \item [ii.] $r_1r_3<0$: Consider $r_1>0$ and $r_3<0$. Thus, there is a left turn on $\mathcal{C}_1$ and a right turn on $\mathcal{C}_3$. For this to happen, $\mathcal{C}_2$ must be chosen such that the orientation changes at exactly one point amongst $\mathbf{c_1}$ and $\mathbf{c_2}$. From Lemma \ref{lem: orientation change}, $\mathcal{C}_2$ must be externally tangent to one circle and internally tangent to the other. Thus, for internal tangency at $\mathcal{C}_1$ and external tangency at $\mathcal{C}_3$, $r_2>0$ from Lemma \ref{lem:tangency relation circles}. Hence, $d(\mathbf{o_2},\mathbf{o_1})=r_2-r_1$ and $d(\mathbf{o_2},\mathbf{o_3})=r_2-r_3$ which implies $d(\mathbf{o_2},\mathbf{o_1})-d(\mathbf{o_2},\mathbf{o_3})=r_3-r_1$. For the other case where internal tangency occurs with $\mathcal{C}_3$ and external tangency with $\mathcal{C}_1$, $r_2<0$ from Lemma \ref{lem:tangency relation circles}. Consequently, $d(\mathbf{o_2},\mathbf{o_1})=r_1-r_2$ and $d(\mathbf{o_2},\mathbf{o_3})=r_3-r_2$ which implies $d(\mathbf{o_2},\mathbf{o_1})-d(\mathbf{o_2},\mathbf{o_3})=r_1-r_3$. Combining them, we get $\left|d(\mathbf{o_2},\mathbf{o_1})-d(\mathbf{o_2},\mathbf{o_3})\right|=|r_1-r_3|$. We get similar relation for the case $r_1<0$ and $r_3>0$.
    \end{itemize}
The relation $\left|d(\mathbf{o_2},\mathbf{o_1})-d(\mathbf{o_2},\mathbf{o_3})\right|=|r_1-r_3|$ represents the hyperbola $\mathcal{H}$ defined in eqn. \ref{eq: hyperbolic relation of centres} whose foci are at $\mathbf{o_1}$ and $\mathbf{o_3}$. It is depicted in Fig. \ref{fig: hyperbola locus}. Hence, proved.
\end{proof}

The hyperbola $\mathcal{H}$ is important for the analysis of the existence of $\mathbf{C}^{r_1}\mathbf{C}^{r_2}\mathbf{C}^{r_3}$ trajectories. For every point on $\mathcal{H}$, a unique $\mathbf{C}^{r_1}\mathbf{C}^{r_2}\mathbf{C}^{r_3}$ trajectory exists between given $A$ and $B$. We highlight some more attributes of $\mathcal{H}$ in the following subsection.
\begin{figure}[h]
    \centering
    \includegraphics[scale=0.4]{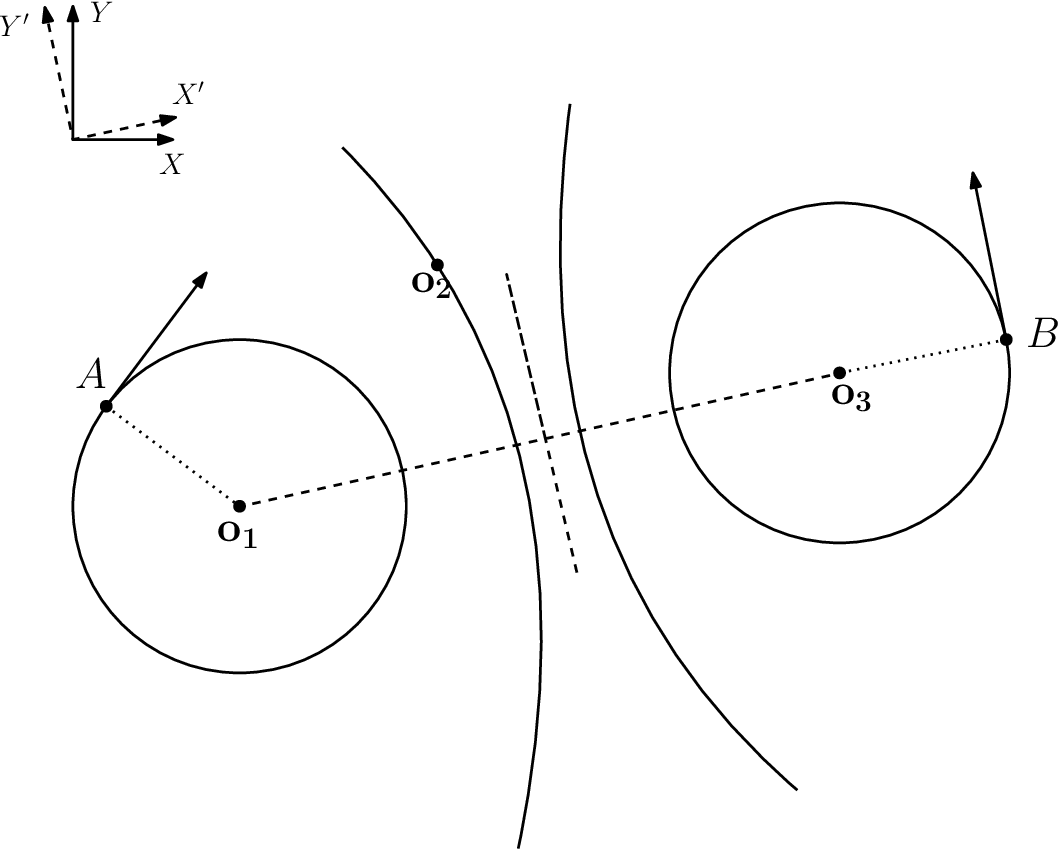}
    \caption{Locus of $\mathbf{o_2}$ }
    \label{fig: hyperbola locus}
\end{figure}

\subsection{Properties of the hyperbola \texorpdfstring{$\mathcal{H}$}{Lg}}
\label{subs: math of H}
The analytical expression for $\mathcal{H}$ is significant from theoretical as well as computational perspectives. We start the analysis with a discussion of some useful properties of $\mathcal{H}$.
\begin{itemize}
    \item The focal length and the length of the semi-major and the semi-minor axes are given by  $\Bar{c}=\frac{d(\mathbf{o_3},\mathbf{o_1})}{2}$, $\Bar{a}=\frac{|r_3-r_1|}{2}$ and $\Bar{b}=\sqrt{\Bar{c}^2-\Bar{a}^2}$, respectively. 
    \item We denote the coordinate frame of reference along the semi-major and semi-minor axes as $X'-Y'$. Thereafter, we define $\hat{n}=[n_x,n_y]^T=\frac{\mathbf{o_3-o_1}}{d(\mathbf{o_3},\mathbf{o_1})}$ as a unit vector along the major axis. Then, $R=\begin{bmatrix}
    n_x & -n_y\\
    n_y & n_x
\end{bmatrix}$ is the rotation matrix between the frames  $X-Y$ and  $X'-Y'$ (see Fig. \ref{fig: hyperbola locus}). 
\item The parametric expression of $\mathcal{H}$ in $X'-Y'$ axis is $\mathbf{o_2}=[a\sec k,b\tan k]^T$ where $k\in[-\pi/2,3\pi/2)$. Through a sequence of rotation and translation the parametric coordinates of $\mathbf{o_2}$ in $X-Y$ coordinates are given by 
\begin{equation}
    \mathbf{o_2}=R\begin{bmatrix}
        a\sec k\\
        b\tan k
    \end{bmatrix} + \frac{\mathbf{o_1+o_3}}{2}
    \label{eq: o2 centre of c2}
\end{equation}
\item Thus, all the points on $\mathcal{H}$  can be parameterised by $k\in[-\pi/2,3\pi/2)$. The range of $k$ is chosen such that a continuous parameterization is achieved for each branch. The right branch is parameterised by $k\in[-\pi/2,\pi/2)$ and the left branch by $k\in[\pi/2,3\pi/2)$. 
\end{itemize}
The existence of the hyperbola is inherently related to $\{A,B\}$ and the values of $\{r_1,r_3\}$. The same is highlighted in the next result.
\begin{theorem}
    Given two oriented points $A$ and $B$, the hyperbola $\mathcal{H}$ exists for any $r_1$ and $r_3$ if 
    \begin{equation}
        d(\mathbf{o_3},\mathbf{o_1})>|r_3-r_1|.
        \label{eq: existence of Hyperbola}
    \end{equation}
    where $\mathbf{o_1}$ and $\mathbf{o_3}$ are given by \eqref{eq:centres o1 and o3}.
    \label{corr: existence of H}
\end{theorem}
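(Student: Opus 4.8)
The plan is to treat $\mathcal{H}$ through the metric characterization already established in Theorem~\ref{thm: hyperbola of centre} and reduce the existence question to the elementary condition under which a ``constant absolute difference of distances'' locus is a genuine, nondegenerate hyperbola. By eqn.~\eqref{eq: hyperbolic relation of centres}, $\mathcal{H}$ is the set of points $\mathbf{o_2}$ whose absolute difference of distances to the fixed foci $\mathbf{o_1}$ and $\mathbf{o_3}$ equals the constant $|r_1-r_3|$. I would first recall the standard fact that such a locus is a nonempty curve only when this constant does not exceed the inter-focal distance, and is a nondegenerate two-branch hyperbola precisely when the inequality is strict; the whole argument then amounts to translating this into the quantities $\{d(\mathbf{o_3},\mathbf{o_1}),|r_3-r_1|\}$.

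First I would pin down feasibility of the locus via the triangle inequality: for every $\mathbf{o_2}\in\mathbb{R}^2$ one has $|d(\mathbf{o_2},\mathbf{o_1})-d(\mathbf{o_2},\mathbf{o_3})|\le d(\mathbf{o_1},\mathbf{o_3})$, with equality only for points collinear with $\mathbf{o_1},\mathbf{o_3}$ and lying outside the segment joining them. Hence if $|r_1-r_3|>d(\mathbf{o_1},\mathbf{o_3})$ the defining relation of $\mathcal{H}$ is never satisfied and the locus is empty, while if $|r_1-r_3|=d(\mathbf{o_1},\mathbf{o_3})$ it collapses to the two rays of the line through the foci. This isolates $d(\mathbf{o_3},\mathbf{o_1})>|r_3-r_1|$ as exactly the regime in which a proper hyperbola can appear.

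Next I would establish sufficiency algebraically using the semi-axes already identified in Section~\ref{subs: math of H}. With focal length $\Bar{c}=d(\mathbf{o_3},\mathbf{o_1})/2$ and semi-major axis $\Bar{a}=|r_3-r_1|/2$, the semi-minor axis $\Bar{b}=\sqrt{\Bar{c}^2-\Bar{a}^2}$ is real and strictly positive if and only if $\Bar{c}>\Bar{a}$, i.e. $d(\mathbf{o_3},\mathbf{o_1})>|r_3-r_1|$. Under this condition $\Bar{b}>0$, so the rotation–translation parametrization of eqn.~\eqref{eq: o2 centre of c2}, namely $\mathbf{o_2}=R[\,\Bar{a}\sec k,\;\Bar{b}\tan k\,]^{T}+(\mathbf{o_1}+\mathbf{o_3})/2$, traces a well-defined nondegenerate hyperbola with foci $\mathbf{o_1},\mathbf{o_3}$, which is precisely the assertion of existence. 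I would additionally flag the limiting case $r_1=r_3$, where $\Bar{a}=0$, the transverse axis collapses, and $\mathcal{H}$ becomes the perpendicular bisector of $\mathbf{o_1}\mathbf{o_3}$; the stated inequality still holds trivially (as $d(\mathbf{o_3},\mathbf{o_1})>0$ for distinct foci) and this degenerate conic still furnishes valid centres $\mathbf{o_2}$.

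I expect the only real subtlety, rather than a genuine obstacle, to be the careful bookkeeping of the boundary and degenerate cases: cleanly separating the empty locus ($|r_1-r_3|>d(\mathbf{o_1},\mathbf{o_3})$), the pair of rays ($|r_1-r_3|=d(\mathbf{o_1},\mathbf{o_3})$), and the collapse to a line ($r_1=r_3$) from the nondegenerate hyperbola guaranteed by the strict inequality. The core computation itself, $\Bar{b}^2=\Bar{c}^2-\Bar{a}^2>0$, is immediate once the semi-axes are substituted, so the burden of the proof lies almost entirely in the correct interpretation of the triangle-inequality boundary rather than in any nontrivial calculation.
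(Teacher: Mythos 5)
Your proposal is correct and follows essentially the same route as the paper: the condition is obtained from requiring the semi-minor axis $\Bar{b}=\sqrt{\Bar{c}^2-\Bar{a}^2}$ to be a positive real number, i.e. $\Bar{c}>\Bar{a}$, which is exactly $d(\mathbf{o_3},\mathbf{o_1})>|r_3-r_1|$. Your additional triangle-inequality analysis of the empty and degenerate regimes, and the $r_1=r_3$ collapse to the perpendicular bisector, is sound but goes beyond the paper's (very short) proof; the paper handles those limiting cases in the remarks immediately following the theorem rather than inside the proof itself.
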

\begin{proof}
    From the definition of $\mathcal{H}$, we know that the length of the semi-minor axis $\Bar{b}$ is a positive real number. It then follows that $\Bar{c}>\Bar{a}$. This implies that $d(\mathbf{o_3},\mathbf{o_1})>|r_3-r_1|$. Hence, proved.
\end{proof}

Based on the proof of Theorem \ref{corr: existence of H}, two interesting cases arise for $\mathcal{H}$: 
\begin{itemize}
    \item[i.] In the limiting case of $\Bar{b}=0$, $d(\mathbf{o_3},\mathbf{o_1})=|r_3-r_1|$. This is the condition for the existence of a Circle-Circle trajectory and results in the hyperbolic relation presented in \cite{RAO}. Equivalently, by using appropriate values of $r_1$ and $r_3$, a Circle-Circle trajectory can be designed within the given framework.
    \item[ii.] If $r_1=r_3$, then eqn. \eqref{eq: hyperbolic relation of centres} reduces to two super-imposed straight lines. This line is the perpendicular bisector of the line segment joining $\mathbf{o_1}$ and $\mathbf{o_3}$. 
\end{itemize}
To summarise, given two oriented points $A$ and $B$, consider the ordered pair $(r_1,r_3)\in\mathbb{R}^2$. The equation $d(\mathbf{o_3},\mathbf{o_1})=|r_3-r_1|$ gives a hyperbolic relation in $r_1$ and $r_3$ which partitions the $\mathbb{R}^2$ space into two regions as shown in Fig. \ref{fig: partition r1 r3}. Note that $(r_1,r_3)=(0,0)$ always satisfies \eqref{eq: existence of Hyperbola}. Thus, the region highlighted in green in Fig. \ref{fig: partition r1 r3} denotes the allowed values of $r_1$ and $r_3$ to construct a feasible $\mathbf{C}^{r_1}\mathbf{C}^{r_2}\mathbf{C}^{r_3}$ trajectory. The boundary results in a Circle-Circle trajectory. 
\begin{figure}[h]
    \centering
    \includegraphics[scale=0.65]{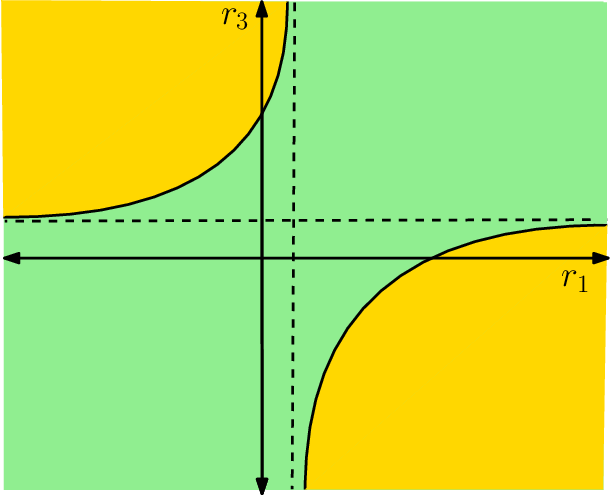}
    \caption{Allowed values of $(r_1,r_3)$ for the existence of $\mathbf{C}^{r_1}\mathbf{C}^{r_2}\mathbf{C}^{r_3}$ trajectory (in green)}
    \label{fig: partition r1 r3}
\end{figure}

The design of $\mathbf{C}^{r_1}\mathbf{C}^{r_2}\mathbf{C}^{r_3}$ trajectories can be effectively visualized through the variation of the three parameters $\{r_1,r_3,k\}$. Theorem \ref{corr: existence of H} gives the sufficient condition for the existence of such a trajectory. Since each circle can either be a left or right turn, it follows naturally that eight permutations are possible for a $\mathbf{C}^{r_1}\mathbf{C}^{r_2}\mathbf{C}^{r_3}$ trajectory. Next, we proceed to classify these trajectories based on the parameters $\{r_1,r_3,k\}$.
\subsection{Classification of trajectories and radius of circle \texorpdfstring{$\mathcal{C}_2$}{Lg}}
\label{subs: classification of CCC}
 Any $\mathbf{C}^{r_1}\mathbf{C}^{r_2}\mathbf{C}^{r_3}$ trajectory can be classified into eight types: $\{LLL,RRR,LLR,RRL,LRR,RLL,LRL,RLR\}$ based upon the orientation of motion on each circular arc. The kind of tangency between any two consecutive circles, which is in turn dependent on the parameters $\{r_1,r_3,k\}$, determines the type of  overall trajectory. The following results illustrate the nature and significance of the tangency between the circles in a given  $\mathbf{C}^{r_1}\mathbf{C}^{r_2}\mathbf{C}^{r_3}$ trajectory. 
\begin{lemma}
    For any given $r_1$ and $r_3$ such that $r_1r_3<0$, 
    \begin{enumerate}
        \item[i.] if $k\in(-\pi/2,\pi/2),~\mathcal{C}_2$ is externally tangent to $\mathcal{C}_1$ and internally tangent to $\mathcal{C}_3$, and
        \item[ii.] if $k\in(\pi/2,3\pi/2),~\mathcal{C}_2$ is internally tangent to $\mathcal{C}_1$ and externally tangent to $\mathcal{C}_3$.
    \end{enumerate}
    \label{lem: classification 1}
\end{lemma}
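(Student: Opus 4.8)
The plan is to reduce the geometric question of which branch of $\mathcal{H}$ the centre $\mathbf{o_2}$ lies on to an algebraic statement about the sign of $d(\mathbf{o_2},\mathbf{o_1})-d(\mathbf{o_2},\mathbf{o_3})$, and then to read off the tangency type from the case analysis already performed in the proof of Theorem \ref{thm: hyperbola of centre}. Throughout I would fix $r_1>0$ and $r_3<0$, so that $|r_1-r_3|=r_1-r_3=2\bar{a}>0$; the remaining sub-case $r_1<0$, $r_3>0$ then follows by the identical argument with the roles of the two foci interchanged.

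First I would locate each branch of $\mathcal{H}$ in terms of $k$ using the parameterisation \eqref{eq: o2 centre of c2}. Since $R$ and the translation by $(\mathbf{o_1}+\mathbf{o_3})/2$ are isometries, they preserve which focus is nearer, so it suffices to work in the $X'$--$Y'$ frame, where $\mathbf{o_2}=[\bar{a}\sec k,\ \bar{b}\tan k]^T$, the centre of $\mathcal{H}$ is at the origin, and the foci sit on the $X'$ axis at $\pm\bar{c}$. Because $\hat{n}=(\mathbf{o_3}-\mathbf{o_1})/d(\mathbf{o_3},\mathbf{o_1})$ is the positive $X'$ direction, $\mathbf{o_3}$ is the focus at $+\bar{c}$ and $\mathbf{o_1}$ the focus at $-\bar{c}$. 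For $k\in(-\pi/2,\pi/2)$ one has $\sec k>0$, hence the $X'$ coordinate $\bar{a}\sec k\geq\bar{a}>0$ and $\mathbf{o_2}$ lies on the right branch, nearer to $\mathbf{o_3}$; for $k\in(\pi/2,3\pi/2)$ one has $\sec k<0$, so $\mathbf{o_2}$ lies on the left branch, nearer to $\mathbf{o_1}$.

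Next I would use the focal property encoded in \eqref{eq: hyperbolic relation of centres}: on the branch nearer $\mathbf{o_3}$ the far-minus-near relation gives $d(\mathbf{o_2},\mathbf{o_1})-d(\mathbf{o_2},\mathbf{o_3})=+2\bar{a}=r_1-r_3>0$, whereas on the branch nearer $\mathbf{o_1}$ it gives $d(\mathbf{o_2},\mathbf{o_1})-d(\mathbf{o_2},\mathbf{o_3})=-2\bar{a}=r_3-r_1<0$. I would then match these two signed values with the sub-cases of the proof of Theorem \ref{thm: hyperbola of centre} under $r_1r_3<0$: the value $r_1-r_3$ arises exactly when $\mathcal{C}_2$ is externally tangent to $\mathcal{C}_1$ and internally tangent to $\mathcal{C}_3$ (the case $r_2<0$), and $r_3-r_1$ arises exactly when $\mathcal{C}_2$ is internally tangent to $\mathcal{C}_1$ and externally tangent to $\mathcal{C}_3$ (the case $r_2>0$), where the tangency types themselves are certified by Lemmas \ref{lem:tangency centre disctance}, \ref{lem:tangency relation circles} and \ref{lem: orientation change}. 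Combining this with the branch-to-$k$ identification of the previous paragraph yields statements (i) and (ii), and the sub-case $r_1<0$, $r_3>0$ is handled symmetrically.

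I expect the principal difficulty to be bookkeeping rather than conceptual depth. One must keep the orientation of $\hat{n}$, and hence the labelling of $\mathbf{o_1}$ and $\mathbf{o_3}$ as the $-\bar{c}$ and $+\bar{c}$ foci, consistent with the convention $r_1>0>r_3$. More subtly, the distance expressions borrowed from Theorem \ref{thm: hyperbola of centre} implicitly drop absolute values (e.g.\ writing $d(\mathbf{o_2},\mathbf{o_1})=r_2-r_1$ for internal tangency presumes $r_2>r_1$); I would verify that the opposite ordering is incompatible with $\mathbf{o_2}$ lying on $\mathcal{H}$, so that the sign of $d(\mathbf{o_2},\mathbf{o_1})-d(\mathbf{o_2},\mathbf{o_3})$ is unambiguously determined by the tangency configuration alone. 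Finally, I would note that the endpoints $k=\pm\pi/2$ are excluded because there $\bar{a}\sec k$ diverges and $\mathbf{o_2}$ recedes to infinity, so no finite tangent circle $\mathcal{C}_2$ exists, which is precisely why the intervals in the statement are open.
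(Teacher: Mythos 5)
Your proposal is correct and follows essentially the same route as the paper's proof: both identify the tangency configuration with the sign of $d(\mathbf{o_2},\mathbf{o_1})-d(\mathbf{o_2},\mathbf{o_3})$ computed in the proof of Theorem \ref{thm: hyperbola of centre} and then read off the branch of $\mathcal{H}$. The only difference is that you spell out the branch-to-$k$ correspondence via the sign of $\sec k$ and the labelling of the foci along $\hat{n}$, which the paper dismisses with ``the range of $k$ follows automatically''.
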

\begin{proof}
    Consider $r_1>0$ and $r_3<0$. We have shown in proof of Theorem \ref{thm: hyperbola of centre} that for the case of internal tangency at $\mathcal{C}_1$ and external tangency at $\mathcal{C}_3$, $d(\mathbf{o_2},\mathbf{o_1})-d(\mathbf{o_2},\mathbf{o_3})=r_3-r_1<0$. Clearly, $\mathbf{o_2}$ belongs to the left branch as it is farther from $\mathbf{o_3}$ than $\mathbf{o_1}$. For the alternate case,  $d(\mathbf{o_2},\mathbf{o_1})-d(\mathbf{o_2},\mathbf{o_3})=r_1-r_3>0$. This corresponds to $\mathbf{o_2}$ belonging to the right branch as it is farther from $\mathbf{o_1}$ than $\mathbf{o_3}$. Thus, each branch of the hyperbola results in one particular kind of tangency for any $\mathbf{o_2}$ on that branch. The range of $k$  follows automatically.
We can similarly prove the case with $r_1<0$ and $r_3>0$.
Hence, proved.       
\end{proof}

\begin{lemma}
    For any given $r_1$ and $r_3$ such that $r_1r_3>0$ and $|r_1|>|r_3|$, \begin{enumerate}
        \item[i.] if $k\in(-\pi/2,\pi/2),~\mathcal{C}_2$ is externally tangent to both $\mathcal{C}_1$ and $\mathcal{C}_3$, else 
        \item[ii.] if $k\in(\pi/2,3\pi/2),~\mathcal{C}_2$ is internally tangent to both $\mathcal{C}_3$ and $\mathcal{C}_1$.
    \end{enumerate}
    For $|r_1|<|r_3|$, the branches switch.
    \label{lem: classification 2}
\end{lemma}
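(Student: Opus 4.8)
The plan is to mirror the argument of Lemma \ref{lem: classification 1}, since the only structural difference here is that both radii share the same sign. First I would invoke the distance relations already derived in case i. of the proof of Theorem \ref{thm: hyperbola of centre}: when $\mathcal{C}_2$ is internally tangent to both $\mathcal{C}_1$ and $\mathcal{C}_3$ one has $d(\mathbf{o_2},\mathbf{o_1})-d(\mathbf{o_2},\mathbf{o_3})=r_3-r_1$, whereas when $\mathcal{C}_2$ is externally tangent to both one has $d(\mathbf{o_2},\mathbf{o_1})-d(\mathbf{o_2},\mathbf{o_3})=r_1-r_3$. These two signed differences carry all the information needed to assign $\mathbf{o_2}$ to a definite branch of $\mathcal{H}$.

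Next I would fix the representative sub-case $r_1>r_3>0$ (so that $r_1r_3>0$ and $|r_1|>|r_3|$) and exploit the geometric meaning of the branches established in Section \ref{subs: math of H}. Because $\hat{n}$ points from $\mathbf{o_1}$ towards $\mathbf{o_3}$, the right branch ($k\in(-\pi/2,\pi/2)$, where $\sec k>0$) lies on the $\mathbf{o_3}$ side and therefore consists exactly of the points with $d(\mathbf{o_2},\mathbf{o_1})-d(\mathbf{o_2},\mathbf{o_3})>0$, while the left branch ($k\in(\pi/2,3\pi/2)$) collects the points with $d(\mathbf{o_2},\mathbf{o_1})-d(\mathbf{o_2},\mathbf{o_3})<0$. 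Under $r_1>r_3>0$ the external-tangency difference $r_1-r_3$ is positive and the internal-tangency difference $r_3-r_1$ is negative, so external tangency forces $\mathbf{o_2}$ onto the right branch and internal tangency onto the left branch. Translating these branches into the stated ranges of $k$ yields conclusions i. and ii.

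Finally I would address the remaining configurations. For $|r_1|<|r_3|$ with $r_1,r_3>0$ the two differences $r_1-r_3$ and $r_3-r_1$ merely exchange signs, so the external- and internal-tangency cases land on the opposite branches, which is precisely the asserted switching. The sub-case $r_1,r_3<0$ follows from the same computation with reversed inequalities, exactly as in Lemma \ref{lem: classification 1}. I expect the only delicate point to be the correct pairing of each branch with the sign of $d(\mathbf{o_2},\mathbf{o_1})-d(\mathbf{o_2},\mathbf{o_3})$ and with the interval of $k$; once the orientation of $\hat{n}$ and the parametrisation $\mathbf{o_2}=R[a\sec k,b\tan k]^{T}+(\mathbf{o_1}+\mathbf{o_3})/2$ are used consistently, the remainder is sign bookkeeping.
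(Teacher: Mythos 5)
Your proposal is correct and follows essentially the same route as the paper's proof: both reuse the signed distance differences $r_3-r_1$ (internal tangency) and $r_1-r_3$ (external tangency) from the proof of Theorem \ref{thm: hyperbola of centre}, assign $\mathbf{o_2}$ to a branch of $\mathcal{H}$ by the sign of $d(\mathbf{o_2},\mathbf{o_1})-d(\mathbf{o_2},\mathbf{o_3})$, and observe that the sign flip under $|r_1|<|r_3|$ swaps the branches. Your only addition is making explicit, via $\hat{n}$ and the sign of $\sec k$, why the right branch corresponds to $k\in(-\pi/2,\pi/2)$ — a step the paper leaves as ``follows automatically.''
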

\begin{proof}
    Consider the case $r_1>0$ and $r_3>0$. We have shown in proof of Theorem \ref{thm: hyperbola of centre} that for internal tangency to both $\mathcal{C}_1$ and $\mathcal{C}_3$, $d(\mathbf{o_2},\mathbf{o_1})-d(\mathbf{o_2},\mathbf{o_3})=r_3-r_1<0$ since $|r_1|>|r_3|$. Clearly, $\mathbf{o_2}$ belongs to the left branch as it is farther from $\mathbf{o_3}$ than $\mathbf{o_1}$. For the case of external tangency, $d(\mathbf{o_2},\mathbf{o_1})-d(\mathbf{o_2},\mathbf{o_3})=r_1-r_3>0$ since $|r_1|>|r_3|$. This corresponds to $\mathbf{o_2}$ belongs to the right branch as it is farther from $\mathbf{o_1}$ than $\mathbf{o_3}$. If $|r_1|<|r_3|$, the branches switch. The range of $k$ follows automatically. We can similarly prove for the case with $r_1<0$ and $r_3<0$
\end{proof}

Using Lemma \ref{lem: classification 1} and Lemma \ref{lem: classification 2}, the classification of $\mathbf{C}^{r_1}\mathbf{C}^{r_2}\mathbf{C}^{r_3}$ trajectories for various values of $\{r_1,r_3,k\}$ can done as shown in Table \ref{table: tangency value}.
\begin{table}[H]
\caption{Classification of $\mathbf{C}^{r_1}\mathbf{C}^{r_2}\mathbf{C}^{r_3}$ trajectories}
\begin{center}
\label{table: tangency value}
\begin{tabular}{ |p{1.66cm}|p{1.205cm}|p{1.205cm}|p{1.205cm}|p{1.205cm}|}
 \hline
  & \multicolumn{2}{|c|}{$k\in[-\pi/2,\pi/2)$} & \multicolumn{2}{|c|}{$k\in[\pi/2,3\pi/2)$}  \\
 \cline{2-5}
 &$|r_1|\geq|r_3|$ &$|r_1|<|r_3|$&$|r_1|\geq|r_3|$&$|r_1|<|r_3|$\\
  \hline
 $r_1>0$,~$r_3>0$& \centering $LRL$& \centering  $LLL$& \centering  $LLL$&   $LRL$ \\
  \hline
 $r_1<0$,~$r_3<0$&   \centering $RLR$& \centering   $RRR$&   \centering $RRR$& $RLR$\\
  \hline
 $r_1>0$,~$r_3<0$& \multicolumn{2}{|c|}{$LRR$}&\multicolumn{2}{|c|}{$LLR$}\\
 \hline
 $r_1<0$,~$r_3>0$&\multicolumn{2}{|c|}{$RLL$}&\multicolumn{2}{|c|}{$RRL$}\\
 \hline
\end{tabular}
\end{center}
\end{table}
\begin{remark}
\label{remark: eight types for fixed magnitude}
Consider the case where, for any given $A$ and $B$, only the radii of $\mathcal{C}_1$ and $\mathcal{C}_2$ are specified. In other words, we only have the magnitudes of $r_1$ and $r_3$. In such a scenario, it follows from Table \ref{table: tangency value} that all eight types of the trajectories ($LLL,RRR,LLR,RRL,LRR,RLL,LRL,RLR$) can be constructed simply by choosing suitable signs of $r_1$ and
$r_3$.
\end{remark}

Remark \ref{remark: eight types for fixed magnitude} highlights the existence of a trajectory of each type. The corresponding value of $r_2\in\mathbb{R}$ for each type can be easily computed. Given any $r_1$ and $r_2$, eqn. \eqref{eq: o2 centre of c2} gives the coordinates of the center of $\mathcal{C}_2$ and Table \ref{table: tangency value} states the kind of tangency between $\mathcal{C}_1$ and $\mathcal{C}_2$ for each $k$. Using eqns. \eqref{eq:centres o1 and o3} and \eqref{eq:centre disctance}, we get the value of $r_2$ with appropriate signs. We derive its expression in a case-wise manner and summarise it in Table \ref{table: r_2 value}. Note that we define $s=d(\mathbf{o_2},\mathbf{o_1})$.
\begin{table}[h]
\caption{Analytical expressions of $r_2$}
\begin{center}
\label{table: r_2 value}
\begin{tabular}{ |p{1.66cm}|p{1.205cm}|p{1.205cm}|p{1.205cm}|p{1.205cm}|}
 \hline
  & \multicolumn{2}{|c|}{$k\in[-\pi/2,\pi/2)$} & \multicolumn{2}{|c|}{$k\in[\pi/2,3\pi/2)$}  \\
 \cline{2-5}
 &$|r_1|\geq|r_3|$ &$|r_1|<|r_3|$&$|r_1|\geq|r_3|$&$|r_1|<|r_3|$\\
  \hline
 $r_1>0$,~$r_3>0$& $-s+r_1$& $s+r_1$& $s+r_1$& $-s+r_1$\\
  \hline
 $r_1<0$,~$r_3<0$& $s+r_1$& $-s+r_1$& $-s+r_1$& $s+r_1$\\
  \hline
 $r_1>0$,~$r_3<0$& \multicolumn{2}{|c|}{$-s+r_1$}&\multicolumn{2}{|c|}{$s+r_1$}\\
 \hline
 $r_1<0$,~$r_3>0$&\multicolumn{2}{|c|}{$s+r_1$}&\multicolumn{2}{|c|}{$-s+r_1$}\\
 \hline
\end{tabular}
\end{center}
\end{table}

The above table can be alternatively expressed compactly in the following form.
\begin{align}
    r_2=\begin{dcases}
        \text{sign}(r_1(|r_1|-|r_3|)(k-\frac{\pi}{2}))s+r_1&, r_1r_2>0\\
        \text{sign}(r_1(k-\frac{\pi}{2}))s+r_1&, r_1r_2<0
    \end{dcases}
    \end{align}
where
\begin{equation*}
    \text{sign}(x)=\begin{dcases}
        ~~1, &x\geq0\\
        -1, &x<0
    \end{dcases}
\end{equation*}
With the radius of the $\mathcal{C}_2$ appropriately defined, the analytical expression for the two changeover points $\{\mathbf{c_1,c_2}\}$ can be written as
\begin{subequations}
    \begin{align}
        \mathbf{c_1}&=\frac{r_2\mathbf{o_1}-r_1\mathbf{o_2}}{r_2-r_1}\\
        \mathbf{c_2}&=\frac{r_2\mathbf{o_3}-r_3\mathbf{o_2}}{r_2-r_3}
        \label{eq: changeover points c_1 c_2}
    \end{align}
\end{subequations} 
Finally, we put forth an interesting observation on $\mathcal{C}_2$.
\begin{corollary}
     For $k=\pm\frac{\pi}{2}$, $\mathcal{C}_2$ limits to a straight line (a circle with infinite radius) tangent to both $\mathcal{C}_1$ and $\mathcal{C}_3$.
     \label{corr: r_2 infinity}
\end{corollary}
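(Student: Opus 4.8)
The plan is to exploit the parametric description \eqref{eq: o2 centre of c2} of $\mathbf{o_2}$ on $\mathcal{H}$ together with the compact formula for $r_2$. First I would observe that $k=\pm\pi/2$ are precisely the parameter values at which $\sec k$ and $\tan k$ both diverge. Hence, from \eqref{eq: o2 centre of c2}, the centre $\mathbf{o_2}$ recedes to infinity along one of the two asymptotes of $\mathcal{H}$, and the unit vector $\hat{u}=(\mathbf{o_2}-\mathbf{o_1})/s$ pointing from $\mathbf{o_1}$ towards $\mathbf{o_2}$ converges to the corresponding asymptotic direction. Since $s=d(\mathbf{o_2},\mathbf{o_1})\to\infty$ and $r_2=\pm s+r_1$ in every row of Table \ref{table: r_2 value}, we get $|r_2|\to\infty$. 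A circle whose radius grows without bound while a point of its boundary stays at a finite distance from a fixed point degenerates into a straight line, which would establish the first claim that $\mathcal{C}_2$ limits to a line.

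The second, and main, step is to verify that this limiting line is tangent to both $\mathcal{C}_1$ and $\mathcal{C}_3$. For this I would track the changeover point $\mathbf{c_1}$ from \eqref{eq: changeover points c_1 c_2} in the limit. Writing $\mathbf{o_2}=\mathbf{o_1}+s\hat{u}$ and substituting $r_2=s+r_1$ into $\mathbf{c_1}=(r_2\mathbf{o_1}-r_1\mathbf{o_2})/(r_2-r_1)$, the explicit $s$-dependence cancels in both numerator and denominator, leaving the finite point $\mathbf{c_1}=\mathbf{o_1}-r_1\hat{u}$ (the companion case $r_2=-s+r_1$ yields $\mathbf{c_1}=\mathbf{o_1}+r_1\hat{u}$). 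In either case $d(\mathbf{c_1},\mathbf{o_1})=|r_1|$, so the limit of the changeover point is a well-defined point lying on $\mathcal{C}_1$. The identical computation applied to $\mathbf{c_2}$ would give a finite point on $\mathcal{C}_3$.

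Finally I would conclude tangency from collinearity. For every finite $s$ the circles $\mathcal{C}_1$ and $\mathcal{C}_2$ are tangent at $\mathbf{c_1}$, so $\mathbf{c_1}$, $\mathbf{o_1}$ and $\mathbf{o_2}$ are collinear along $\hat{u}$; this is confirmed directly since $\mathbf{c_1}-\mathbf{o_1}=-r_1\hat{u}$ and $\mathbf{c_1}-\mathbf{o_2}=-(s+r_1)\hat{u}$ are both parallel to $\hat{u}$. In the limit $\mathbf{o_2}$ lies at infinity in the direction $\hat{u}$, so the limiting line $\mathcal{C}_2$ through $\mathbf{c_1}$ is perpendicular to $\hat{u}$, i.e.\ perpendicular to the radius $\mathbf{o_1}\mathbf{c_1}$ of $\mathcal{C}_1$; this is exactly the statement that the line is tangent to $\mathcal{C}_1$ at $\mathbf{c_1}$, and the symmetric argument at $\mathbf{c_2}$ establishes tangency to $\mathcal{C}_3$. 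I expect the main obstacle to be making the two simultaneous limits ($\mathbf{o_2}\to\infty$ and $r_2\to\infty$) rigorous and keeping track of the signs in Table \ref{table: r_2 value}, so that the correct asymptotic direction $\hat{u}$ and the right cancellation are obtained in each case.
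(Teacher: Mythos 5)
Your proposal is correct and follows essentially the same route as the paper: deduce $s\to\infty$ and hence $|r_2|\to\infty$ from the parametrisation and Table \ref{table: r_2 value}, then argue that the ever-present tangency of $\mathcal{C}_2$ to $\mathcal{C}_1$ and $\mathcal{C}_3$ survives in the limit. The only difference is that you make the limiting step rigorous — computing that $\mathbf{c_1}\to\mathbf{o_1}\mp r_1\hat{u}$ stays on $\mathcal{C}_1$ and that the limit line is perpendicular to the radius there — where the paper simply asserts that tangency is preserved.
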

\begin{proof}
    For $k$ tending to $\pm\pi/2$, it follows from Table \ref{table: r_2 value} that $s$ and, equivalently, $|r_2|$ tend to $\infty$. Additionally, we know from Theorem \ref{thm: hyperbola of centre}, $\mathcal{C}_2$ is always tangent to the other two circles. Thus, $\mathcal{C}_2$ limits to a straight line tangent to $\mathcal{C}_1$ and $\mathcal{C}_3$. Hence, proved.
\end{proof}

Note that Corollary \ref{corr: r_2 infinity} states that as $k$ tends $\pm\pi/2$, the $\mathbf{C}^{r_1}\mathbf{C}^{r_2}\mathbf{C}^{r_3}$ trajectory limits to a CSC trajectory ($C^{r_1}C^{\pm\infty}C^{r_3}$ trajectory). Thus, the proposed trajectory design encapsulates all of the forms of the Dubins Shortest Paths for appropriate variation of the three circles.

 The analysis of $\mathbf{C}^{r_1}\mathbf{C}^{r_2}\mathbf{C}^{r_3}$ trajectories presented in this section is without any explicit constraints of a desired length of the trajectory or curvature boundedness. In the following sections, we proceed to analyse the variation of the length of the trajectory with parameters $\{r_1,r_3,k\}$ and derive the set of reachable lengths using the proposed trajectory between any pair of oriented points.
\section{Variation of $k$ for Circle-Circle-Circle trajectories}
Every attribute of a Circle-Circle-Circle trajectory (like $\mathbf{c_1,c_2},r_2$, \textit{etc.}) can written as a function of the parameter $k$ for fixed values of $r_1$ and $r_3$. We define a function $l:[-\pi/2,3\pi/2)\longrightarrow\mathbb{R}^+$ as the length of trajectory for fixed values of $r_1$ and $r_3$. The parameter $k$ varies in the range $[-\pi/2,3\pi/2)$. This variation of $k$ can be further divided into two parts: $k\in[-\pi/2,\pi/2)$ and $k\in[\pi/2,3\pi/2)$, i.e., over each branch of $\mathcal{H}$. The following interesting results then arise: 
\begin{itemize}
    \item Such a variation of $k$ results in distinct types of $\mathbf{C}^{r_1}\mathbf{C}^{r_2}\mathbf{C}^{r_3}$ trajectories for each branch of $\mathcal{H}$ (which is discussed elaborately in Sec. \ref{subs: classification of CCC}).
    \item An infinite elongation of the trajectory can be achieved within each branch of $\mathcal{H}$. 
\end{itemize}
 We analyze the impact of these variations on the changeover points and the length of trajectory in a case-wise manner. 
\label{section: traj of desired length}

\textbf{Case 1: } $r_1r_3>0$

We know that for both $k=\pi/2$ and $k=-\pi/2$, $\mathcal{C}_2$ becomes a common tangent line to $\mathcal{C}_1$ and $\mathcal{C}_3$. These tangents, as shown in Fig. \ref{fig: partition of circle 1}, divide the boundary of each circle into two parts. We label them as $\mathcal{B}_1$ and $\mathcal{B}_2$ as shown in Fig. \ref{fig: partition of circle 1}. One of the variations in $k$ results in $\mathcal{C}_2$ being externally tangent to both $\mathcal{C}_1$ and $\mathcal{C}_3$. Under such a variation, the changeover points lie in $\mathcal{B}_2$ for both $\mathcal{C}_1$ and $\mathcal{C}_3$ as shown in Fig. \ref{fig: case 1 ext}. 
\begin{figure}[h]
    \centering
    \includegraphics[scale=0.39]{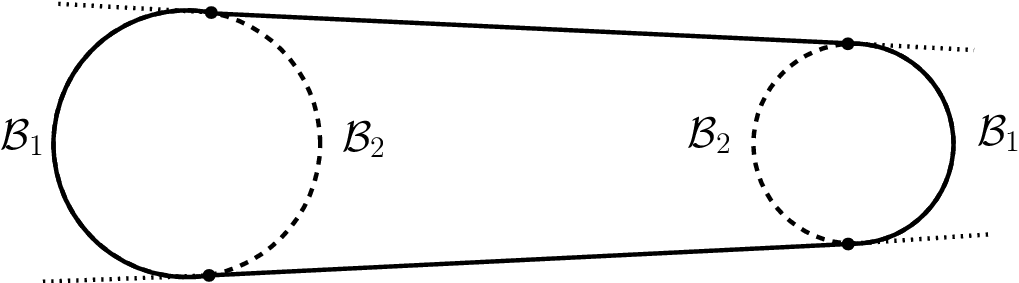}
    \caption{Common tangents for $\mathcal{C}_1$ and $\mathcal{C}_3$ for $r_Ar_B>0$}
    \label{fig: partition of circle 1}
\end{figure}
The other variation results in $\mathcal{C}_2$ being internally tangent to both $\mathcal{C}_1$ and $\mathcal{C}_3$. The changeover points lie in $\mathcal{B}_1$ for both $\mathcal{C}_1$ and $\mathcal{C}_3$ in such a case as shown in Fig. \ref{fig: case 1 int}. For both of the variations, the lengths of the $\mathbf{C}^{r_1}\mathbf{C}^{r_2}\mathbf{C}^{r_3}$ trajectories tend to infinity at one end of each branch of $\mathcal{H}$.

\begin{figure*}[htbp]
     \centering
     \begin{subfigure}[b]{0.25\textwidth}
         \centering
         \includegraphics[width=\textwidth]{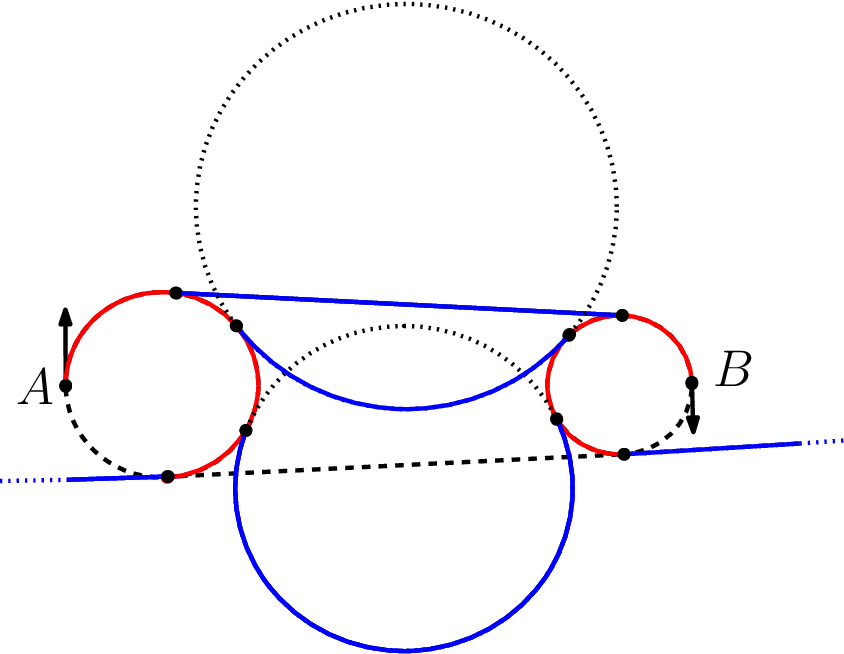}
         \caption{$\mathcal{C}_2$ externally tangent to both $\mathcal{C}_1$ and $\mathcal{C}_3$}
         \label{fig: case 1 ext}
     \end{subfigure}
     \hfill
     \begin{subfigure}[b]{0.23\textwidth}
         \centering
         \includegraphics[width=0.85\textwidth]{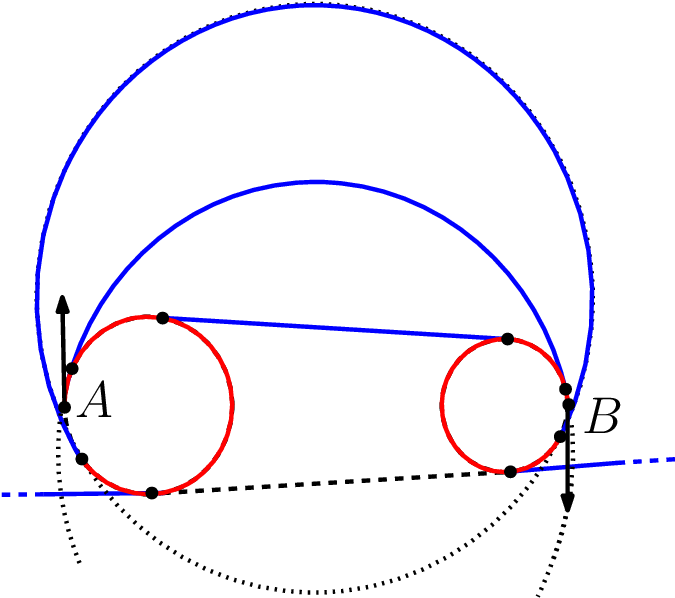}
         \caption{$\mathcal{C}_2$ internally tangent to both $\mathcal{C}_1$ and $\mathcal{C}_3$}
         \label{fig: case 1 int}
     \end{subfigure}
     \hfill
     \begin{subfigure}[b]{0.24\textwidth}
         \centering
         \includegraphics[width=\textwidth]{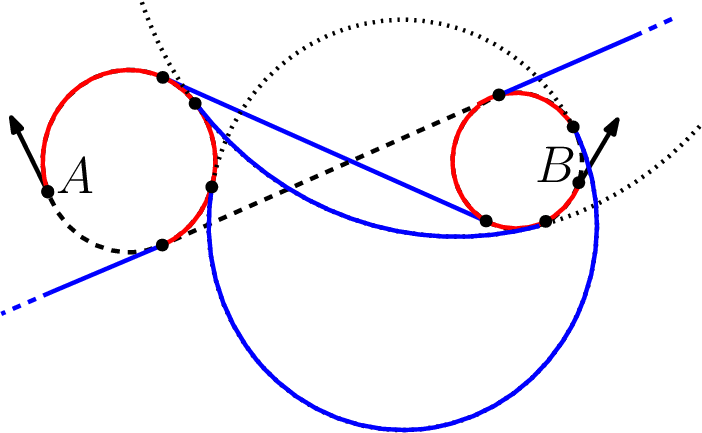}
         \caption{$\mathcal{C}_2$ externally tangent to $\mathcal{C}_1$ and internally tangent to $\mathcal{C}_3$}
         \label{fig: case 2_1}
     \end{subfigure}
     \hfill
     \begin{subfigure}[b]{0.24\textwidth}
         \centering
         \includegraphics[width=\textwidth]{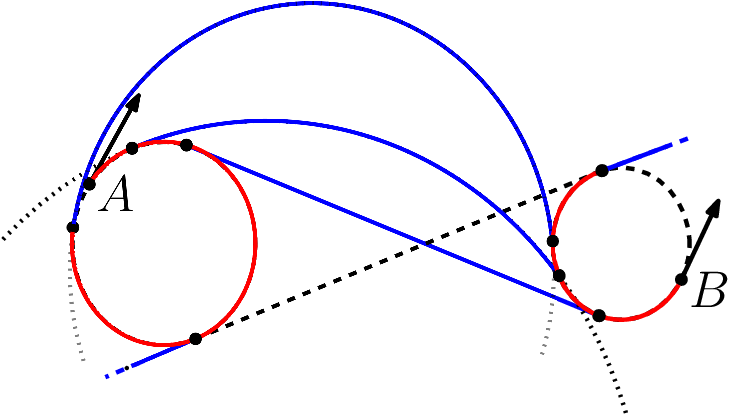}
         \caption{$\mathcal{C}_2$ internally tangent to $\mathcal{C}_1$ and externally tangent to $\mathcal{C}_3$}
         \label{fig: case 2_2}
     \end{subfigure}
        \caption{Variation of $\mathcal{C}_2$ in a $\mathbf{C}^{r_1}\mathbf{C}^{r_2}\mathbf{C}^{r_3}$ trajectory}
        \label{fig:three graphs}
\end{figure*}

\textbf{Case 2:} $r_1r_3<0$
 
 In contrast to the previous case, the tangents, arising out $k=\pi/2$ and $k=-\pi/2$, take the form shown in Fig. \ref{fig: partition of circle 2}. As before, they divide the boundary of each circle into two parts: $\mathcal{B}_1$ and $\mathcal{B}_2$. The variation of $k$ in $[-\pi/2,\pi/2)$ results in $\mathcal{C}_2$ being externally tangent to $\mathcal{C}_1$ and internally tangent to $\mathcal{C}_3$. Unlike the previous case, the changeover points lie in $\mathcal{B}_2$ for $\mathcal{C}_1$ and in $\mathcal{B}_2$ for $\mathcal{C}_3$ under such a variation as shown in Fig. \ref{fig: case 2_1}. The other variation results in $\mathcal{C}_2$ being internally tangent to $\mathcal{C}_1$ and externally tangent to $\mathcal{C}_3$ as shown in Fig. \ref{fig: case 2_2}. For both of the variations, the lengths of the $\mathbf{C}^{r_1}\mathbf{C}^{r_2}\mathbf{C}^{r_3}$ trajectories tend to infinity at one end of each branch of $\mathcal{H}$.
\begin{figure}[H]
    \centering
    \includegraphics[scale=0.39]{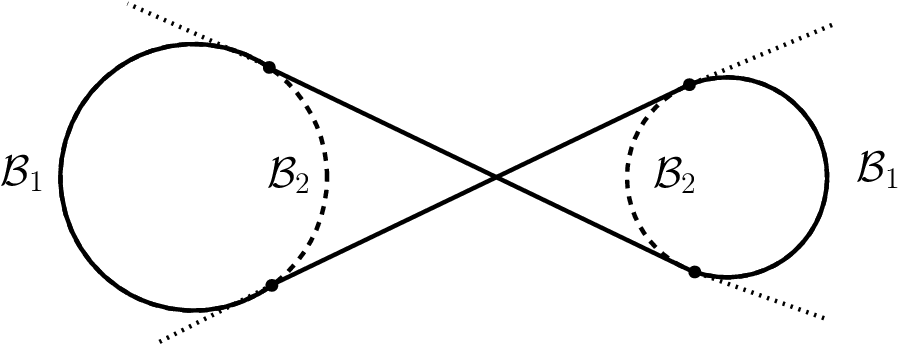}
    \caption{Common tangents for $\mathcal{C}_1$ and $\mathcal{C}_3$ for $r_Ar_B<0$}
    \label{fig: partition of circle 2}
\end{figure}
Note that while $l(k)$ goes to infinity for each variation of $k$, $l(k)$ is not always continuous. This is illustrated through the following result.
\begin{lemma}
    Consider two oriented points $A$ and $B$. For fixed values of $r_1,r_3\in\mathbb{R}$, $l(\cdot)$ has at most two discontinuities. The points of discontinuity, if they exist, are at $k_a$ and $k_b$ where $\mathbf{c_1}(k_a)=\mathbf{a}$ and $\mathbf{c_2}(k_b)=\mathbf{b}$. Moreover, these discontinuities are jump discontinuities of magnitude $2\pi |r_1|$ at $k=k_a$ and $2\pi |r_3|$ at $k=k_b$.
    \label{lem: jump discontinuity}
\end{lemma}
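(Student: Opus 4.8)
The plan is to decompose the length as $l(k)=L_1(k)+L_2(k)+L_3(k)$, where $L_1,L_2,L_3$ are the arc lengths traversed on $\mathcal{C}_1,\mathcal{C}_2,\mathcal{C}_3$, and to show that \emph{all} of the discontinuity of $l$ is concentrated in the two terminal arcs $L_1$ and $L_3$. Each directed arc length is written as $L_i=|r_i|\,\theta_i$, where $\theta_i\in[0,2\pi)$ is the angle subtended at the centre, measured in the direction of motion fixed by $\mathrm{sign}(r_i)$ (Definition \ref{lem:neg_radius}). The essential asymmetry to exploit is that the first arc runs from the \emph{fixed} point $\mathbf{a}$ to the \emph{moving} changeover point $\mathbf{c_1}(k)$, the third from the moving point $\mathbf{c_2}(k)$ to the fixed point $\mathbf{b}$, while the middle arc joins the two moving points $\mathbf{c_1}(k)$ and $\mathbf{c_2}(k)$.

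First I would establish that the angular positions of $\mathbf{c_1}$ and $\mathbf{c_2}$ on their circles depend continuously on $k$. Since $\mathbf{o_2}(k)$ is continuous on each branch by \eqref{eq: o2 centre of c2}, and the changeover points in \eqref{eq: changeover points c_1 c_2} lie on the lines $\mathbf{o_1}\mathbf{o_2}$ and $\mathbf{o_2}\mathbf{o_3}$, a short computation shows $\mathbf{c_1}-\mathbf{o_1}$ is parallel to $\mathbf{o_2}-\mathbf{o_1}$ and $\mathbf{c_2}-\mathbf{o_3}$ is parallel to $\mathbf{o_2}-\mathbf{o_3}$ (with a sign set by the tangency type). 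Hence the angular coordinate of $\mathbf{c_1}$ on $\mathcal{C}_1$ and of $\mathbf{c_2}$ on $\mathcal{C}_3$ vary continuously in $k$.

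Then I would treat each term. For $L_1$, the subtended angle $\theta_1(k)$ is the directed angle from the fixed $\mathbf{a}$ to the continuously moving $\mathbf{c_1}(k)$ reduced modulo $2\pi$; it is therefore continuous except where $\mathbf{c_1}(k)$ passes through $\mathbf{a}$, at which point $\theta_1$ jumps between $2\pi$ and $0$, giving a jump of magnitude $2\pi|r_1|$ and locating a discontinuity exactly at $k_a$ with $\mathbf{c_1}(k_a)=\mathbf{a}$. The identical argument for $L_3$ yields a jump of $2\pi|r_3|$ at $k_b$ with $\mathbf{c_2}(k_b)=\mathbf{b}$. For the middle arc $L_2$ both endpoints move, so a jump could occur only if $\mathbf{c_1}(k)$ and $\mathbf{c_2}(k)$ collided on $\mathcal{C}_2$; but these points lie along the directions from $\mathbf{o_2}$ to the two \emph{distinct} foci $\mathbf{o_1},\mathbf{o_3}$ of $\mathcal{H}$ (Theorem \ref{thm: hyperbola of centre}), which never align for a point of the hyperbola, so $L_2$ is continuous within each branch. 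Across the seams $k=\pm\pi/2$, where $|r_2|\to\infty$, I would invoke Corollary \ref{corr: r_2 infinity}: $\mathcal{C}_2$ limits to the same common tangent line from both sides, so $L_2=|r_2|\,\theta_2$ tends to the same finite segment length and stays continuous there as well. Finally, to obtain the bound of \emph{at most two} discontinuities, I would argue that over the full range $k\in[-\pi/2,3\pi/2)$ the point $\mathbf{c_1}$ sweeps $\mathcal{C}_1$ so that $\mathbf{c_1}(k)=\mathbf{a}$ admits at most one solution, and likewise $\mathbf{c_2}(k)=\mathbf{b}$; the two contributions from $L_1$ and $L_3$ then give at most two jumps, of the stated magnitudes and positions.

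The hard part will be twofold: rigorously excluding a discontinuity in the middle arc $L_2$ — in particular proving seam continuity at $k=\pm\pi/2$ through the unbounded-radius limit, and verifying that the two changeover points never coincide on $\mathcal{C}_2$ — and controlling the winding of $\mathbf{c_1}$ and $\mathbf{c_2}$ so that each fixed endpoint is crossed at most once, which is precisely what caps the total number of discontinuities at two. Once the subtended-angle description is set up and these two facts are secured, the jump-magnitude computation ($2\pi|r_1|$ and $2\pi|r_3|$) is routine.
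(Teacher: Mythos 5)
Your proposal is correct and follows essentially the same route as the paper: decompose $l$ into the three arc lengths, observe that the middle arc and the far terminal arc vary continuously, and locate the jumps of magnitude $2\pi|r_1|$ and $2\pi|r_3|$ precisely where the moving changeover point $\mathbf{c_1}$ (resp. $\mathbf{c_2}$) sweeps past the fixed endpoint $\mathbf{a}$ (resp. $\mathbf{b}$). The only difference is one of rigor: the paper argues this from its figures and the preceding discussion of how the changeover points move within $\mathcal{B}_1$ or $\mathcal{B}_2$, whereas you explicitly flag (without fully discharging) the injectivity of $k\mapsto\mathbf{c_1}(k)$ and the seam continuity at $k=\pm\pi/2$ that the "at most two" count silently relies on.
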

\begin{proof}
    For each of the variations of $k$, we discussed in the preceding paragraphs how $\mathbf{c_1}$ and $\mathbf{c_2}$, i.e., the changeover points, vary continuously in either $\mathcal{B}_1$ or $\mathcal{B}_2$. Let $k_a$ and $k_b$ be values such that $\mathbf{c_1}(k_a)=\mathbf{a}$ and $\mathbf{c_2}(k_b)=\mathbf{b}$. Consider an infinitesimal variation of $k$ around the point $k=k_a$ and the resulting trajectories as shown in Fig. \ref{fig: discontinuity of l}. 
\begin{figure}[H]
    \centering
    \includegraphics[scale=0.55]{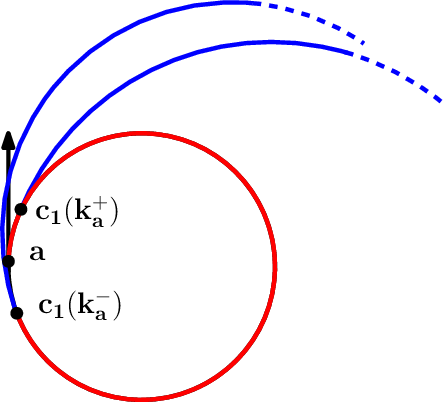}
    \caption{Jump discontinuity in $l(\cdot)$}
    \label{fig: discontinuity of l}
\end{figure}
It is easy to see in Fig. \ref{fig:traj_cc} that the arc length of $\widearc{\mathbf{c_1c_2}}$ and $\widearc{\mathbf{c_2b}}$ is continuous along this infinitesimal variation around $k_a$. However, the arc length of $\widearc{\mathbf{ac_1}}$ tends to $0$ as $k_a^+$ tends to $k_a$ and $\widearc{\mathbf{ac}_1}$ tends to $2\pi |r_1|$ as $k_a^-$ tends to $k_a$. Thus, $l(\cdot)$ is discontinuous at $k=k_a$. Note that in the above case, $l(k)$ jumps down by $2\pi |r_1|$ at $k=k_a$. If the location of  $\mathbf{c_1}(k_a^+)$ and $\mathbf{c_1}(k_a^-)$ were switched, $l(k)$ would have jumped up by $2\pi|r_1|$. A similar discontinuity exists at $k=k_b$. Hence, proved.
\end{proof}

It is important to note that the discontinuities at $k_a$ and $k_b$ need not occur simultaneously. Further, if $\mathbf{c_1}(k)$ is not equal to $\mathbf{a}$ and $\mathbf{c_b}(k)$ is not equal to $\mathbf{b}$ for any $k$ varying over one branch of the hyperbola, we get a continuous elongation of $l(k)$ to infinity. This is true only if no curvature constraints are imposed on the trajectory. \textit{Through these results, we observe that the variation of $k$ results in interesting properties pertaining to the length of the trajectory.} We use these results along with the imposition of curvature constraint to determine the set reachable lengths between any two oriented points in the following section.

\section{Curvature-bounded Circle-Circle-Circle trajectory of desired length}
\label{sec: reachability set}
With important observations arising out of the variation of parameter $k$ as discussed in the previous section, we are now able to state the set of reachable lengths. We begin the analysis with a reference to \cite{chen_elongation} for a characterisation of pairs of oriented points $(A,B)$ based upon the kind of Dubins Shortest Path ($\Lambda_m$) between them:
\begin{subequations}
\label{eq: classification of O}
\begin{align}
        \mathcal{O}_1&=\{(A,B) | \Lambda_m\in C_{\eta}S_dC_{\zeta}\text{ with } \eta\geq\pi\}\\
        \mathcal{O}_2&=\{(A,B) | \Lambda_m\in C_{\eta}S_dC_{\zeta}\text{ with } \zeta\geq\pi\}\\
        \mathcal{O}_3&=\{(A,B) | \Lambda_m\in C_{\eta}S_dC_{\zeta}\text{ with } d\geq4r_{\min}\}\\
        \mathcal{O}_4&=\{(A,B) | \Lambda_m\in C_{\eta}S_dC_{\zeta}\text{ with } \newline d(\mathbf{c}^r_A,\mathbf{c}^r_B)\geq4r_{\min}\}\\
        \mathcal{O}_5&=\{(A,B) | \Lambda_m\in C_{\eta}S_dC_{\zeta}\text{ with } \newline d(\mathbf{c}^l_A,\mathbf{c}^l_B)\geq4r_{\min}\}
\end{align}
\end{subequations}
where $\eta$ and $\zeta$ are the arc lengths of the first and third circles, respectively, and $d$ is the length of the straight line path. Let $\mathcal{O}:=\mathcal{O}_1\cup\mathcal{O}_2\cup\mathcal{O}_3\cup\mathcal{O}_4\cup\mathcal{O}_5$ and $\mathcal{O}^c$ be the its complementary set
\begin{equation*}
    \mathcal{O}^c=\{(A,B) | \Lambda_m\in CSC, (A,B)\notin \mathcal{O}\}.
\end{equation*}

Based upon the above characterisation, the authors in \cite{chen_elongation} classify the set of reachable lengths for any two oriented points as:
\begin{theorem}[\cite{chen_elongation} ]
\label{thm: maximal reachable lengths}
    Given any two oriented points $A$ and $B$ so that $A\neq B$, the following statements hold:
    \begin{itemize}
        \item[1.] If $(A,B)\notin\mathcal{O}\cup\mathcal{O}^c$, for every $l_o\geq l_m$ there exists a trajectory $\Lambda$ so that $l(\Lambda)=l_o$.
        \item[2.] If $(A,B)\in\mathcal{O}$, for every $l_o\geq l_m$ there exists a trajectory $\Lambda$ so that $l(\Lambda)=l_o$. 
        \item[3.] If $(A,B)\in\mathcal{O}^c$, we have that (a) for every $l_o\in[l_m,l_1]\cup[l_2,\infty)$ there exists a trajectory $\Lambda$ so that $l(\Lambda)=l_o$; and (b) for any trajectory $\Lambda$ we have $l(\Lambda)\notin(l_1,l_2)$.
    \end{itemize}
\end{theorem}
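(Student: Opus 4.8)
The plan is to describe the reachable-length set $\mathcal{R}=\{\,l(\Lambda):\Lambda\text{ feasible from }A\text{ to }B\,\}$ by first establishing a closure property that reduces everything to a bounded window, and then exhibiting continuous elongation families whose images, together with that closure, either cover $[l_m,\infty)$ or leave a single computable band. Since every feasible $\Lambda$ satisfies $l(\Lambda)\geq l_m$ by \cite{dubins} and $\Lambda_m$ attains $l_m$, the lower endpoint is settled and only lengths above $l_m$ need description. The first key step is a \emph{loop-insertion closure}: at the oriented point $A$ one may prepend a full circle of any radius $r\geq r_{\min}$ tangent to the heading $\alpha$, returning to $A$ with heading $\alpha$ before executing the rest of $\Lambda$; this keeps the curve $C^1$ and curvature-bounded and adds exactly $2\pi r$. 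Hence $\ell\in\mathcal{R}\Rightarrow \ell+[2\pi r_{\min},\infty)\subseteq\mathcal{R}$, and since $l_m\in\mathcal{R}$ we already get $[l_m+2\pi r_{\min},\infty)\subseteq\mathcal{R}$. It therefore suffices to understand $\mathcal{R}\cap[l_m,l_m+2\pi r_{\min})$; any gap must lie inside this window.

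Next I would build continuous families that elongate $\Lambda_m$ without altering its qualitative turning. When $(A,B)\notin\mathcal{O}\cup\mathcal{O}^c$, the path $\Lambda_m$ is of $CCC$ type, and continuously deforming the middle circle (as in the $\mathcal{H}$-parameterisation of this paper) produces a one-parameter family whose length sweeps a continuous interval starting at $l_m$; combined with the closure this covers $[l_m,\infty)$, giving statement~1. When $\Lambda_m$ is $CSC$, I would instead use the terminal-circle and straight-segment strategies of \cite{b15,b22}: inflating the radius of the first circle, symmetrically the third, or replacing part of the straight segment by a growing bump, each giving a continuous length function issuing from $l_m$. Each geometric threshold defining $\mathcal{O}_1,\dots,\mathcal{O}_5$ in \eqref{eq: classification of O} (namely $\eta\geq\pi$, $\zeta\geq\pi$, $d\geq 4r_{\min}$, or same-orientation centres at least $4r_{\min}$ apart) is precisely the condition under which the matching family reaches a length $\geq l_m+2\pi r_{\min}$ continuously; together with the closure this covers $[l_m,\infty)$ and yields statement~2.

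For $(A,B)\in\mathcal{O}^c$ every threshold fails, so each continuous no-extra-winding family saturates at some $l_1<l_m+2\pi r_{\min}$, while the shortest path in the ``next'' class — one whose heading carries an extra turn, realised through a distinct $CCC$-type reroute or an inserted loop — has strictly larger length $l_2$. I would define $l_1$ as the supremum of lengths reachable by the no-extra-winding families and $l_2$ as the infimum over the extra-winding class. Reachability of $[l_m,l_1]$ then follows from the families, and reachability of $[l_2,\infty)$ follows because the extra-winding family provides a continuous range issuing from $l_2$ that, together with the loop closure, exhausts $[l_2,\infty)$; this gives statement~3(a).

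The main obstacle is statement~3(b): proving that \emph{no} feasible trajectory has length in the open band $(l_1,l_2)$, a claim about all curvature-bounded curves meeting the boundary data, not just the parametric families above. I would prove it through a global turning invariant: for a $C^1$ unit-speed curve the net signed turning of the heading is fixed modulo $2\pi$ by $\alpha,\beta$ and the endpoint geometry, while the total absolute curvature obeys $\int\norm{\Lambda''}\,dt\leq l(\Lambda)/r_{\min}$. One argues that realising the boundary conditions with length below $l_2$ forces $\Lambda$ into the low-winding class, whose length is continuous in its finitely many shape parameters and is bounded above by $l_1$ over the admissible region; any curve escaping that region must spend an extra turn, costing at least $l_2-l_1$. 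The delicate point, and where I expect the real work to be, is showing that the admissible shape-parameter region is closed and that its length functional actually attains the supremum $l_1$ there, so that the forbidden band is genuinely open on the left and no degenerate or exotic curve slips into $(l_1,l_2)$; it is exactly here that the failed thresholds of $\mathcal{O}^c$ must be invoked sharply.
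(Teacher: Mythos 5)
First, note that the paper does not prove this statement at all: Theorem~\ref{thm: maximal reachable lengths} is imported verbatim from \cite{chen_elongation}, so there is no in-paper proof to match your attempt against; the closest analogues are Theorems~\ref{lem: CCC ELONGATION}--\ref{lem: set OC elongation}, which re-derive the \emph{reachability} halves of the claim for the specific $\mathbf{C}^{r_1}\mathbf{C}^{r_2}\mathbf{C}^{r_3}$ construction. Judged on its own terms, your proposal has a sound skeleton for statements~1, 2 and 3(a): the loop-insertion closure (prepend a full circle of radius $r\geq r_{\min}$ at $A$, adding any length in $[2\pi r_{\min},\infty)$) is a clean and correct reduction to the window $[l_m,\,l_m+2\pi r_{\min})$, and the continuous deformation families you invoke are essentially the ones used in the literature and in this paper's Section~\ref{sec: reachability set}.

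The genuine gap is in statement~3, and it is twofold. First, you silently replace the theorem's $l_1$ and $l_2$, which are the \emph{explicit} quantities of eqn.~\eqref{eq: l1 l2 definition} (maxima and minima over the lengths of the two $LRL$, two $RLR$ and four $CSC$ paths), by an abstract supremum over ``no-extra-winding families'' and infimum over an ``extra-winding class.'' With that substitution, 3(a) and 3(b) become nearly tautological, and the actual content of the theorem --- that the boundary of the unreachable band is attained exactly at $\max\{l^s_{LRL},l^s_{RLR}\}$ and $\min\{l_m+2\pi r_{\min},l^l_{LRL},l^l_{RLR},\dots\}$ --- is never established. Second, your argument for 3(b) is only a heuristic: the assertion that any feasible curve of length below $l_2$ is ``forced into the low-winding class,'' and that the length functional on that class is bounded by $l_1$, is precisely the hard part. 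It requires a statement about \emph{all} curvature-bounded $C^1$ curves meeting the boundary data (not just concatenations of arcs and segments), typically via a compactness/continuity argument on the space of bounded-curvature paths together with the classification of locally extremal paths; the inequality $\int\lVert\Lambda''\rVert\,dt\leq l(\Lambda)/r_{\min}$ alone does not deliver it. You correctly flag this as the delicate point, but flagging it is not closing it --- as written, the proposal proves reachability but not the non-reachability of $(l_1,l_2)$, which is the only part of the theorem that cannot be obtained by exhibiting constructions.
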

where $l_o$ is the length of the desired trajectory and
\begin{subequations}
\label{eq: l1 l2 definition}
\begin{align}
    l_1&:=\max\{l^s_{LRL},l^s_{RLR}\}\\
    l_2&:=\min\{l_m+2\pi r_{\min},l^l_{LRL},l^l_{RLR},\{l_{RSR},l_{LSL},l_{RSL},l_{LSR}\}\setminus\{l_m\}\}
\end{align}
\end{subequations}

\begin{figure}[h]
    \centering
    \includegraphics[width=0.5\linewidth]{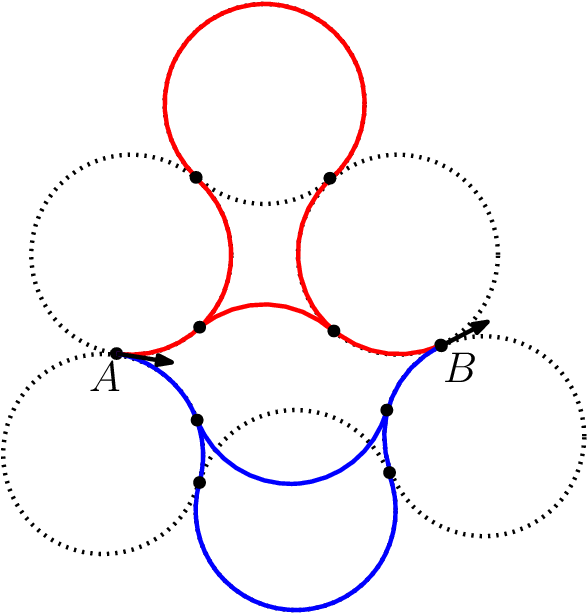}
    \caption{Two $LRL$ trajectories (in red) and $RLR$ trajectories (in blue) between $A$ and $B$ }
    \label{fig: lrl rlr 2}
\end{figure}

The various lengths of trajectory mentioned in \eqref{eq: l1 l2 definition} have been elaborated in \cite{chen_elongation}. For $(A,B)\in\mathcal{O}^c$, there exist two $LRL$ and $RLR$ trajectories with the magnitude of curvature being $1/|r_{\min}|$ throughout as shown in Fig. \ref{fig: lrl rlr 2} in red and blue, respectively. Note that there exists a continuous elongation between the two $LRL$ (or $RLR$) trajectories if there were no curvature constraints. We denote the length of the longer $LRL$ trajectory as $l_{LRL}^l$ and the shorter one as $l_{LRL}^s$. The lengths $l_{RLR}^l$ and $l_{RLR}^s$ are defined similarly. These lengths are used to determine $l_1$ and $l_2$ in eqn. \eqref{eq: l1 l2 definition}. Further, the following result holds for the remaining of the trajectories in eqn. \eqref{eq: l1 l2 definition} as shown in \cite{chen_elongation}.

\begin{lemma}[\cite{chen_elongation}]
    If $(A,B)\in\mathcal{O}^c$, each CSC-path, with its length in $\{l_{RSR},l_{LSL},l_{RSL},l_{LSR}\}\setminus\{l_m\}$, has anti-parallel tangents.
    \label{lem: anti parallel tangent}
\end{lemma}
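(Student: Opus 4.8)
The plan is to reduce the statement to a finite comparison among the four $CSC$ path types and to read the claimed property off the sizes of their turning arcs. First I would write down, for each type, the pair of circles whose common tangent forms the straight part: $RSR$ and $LSL$ use the external common tangent of $(\mathcal{C}^r_A,\mathcal{C}^r_B)$ and of $(\mathcal{C}^l_A,\mathcal{C}^l_B)$, respectively, while $RSL$ and $LSR$ use the internal common tangent of $(\mathcal{C}^r_A,\mathcal{C}^l_B)$ and of $(\mathcal{C}^l_A,\mathcal{C}^r_B)$. From the centre formulas in \eqref{eq:centres o1 and o3} I would obtain, for each type, the tangent-segment direction $\hat t$ and the two turning angles $\eta,\zeta$ at the endpoints as explicit functions of $\mathbf a,\mathbf b,\alpha,\beta$. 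The internal-tangent paths exist only when the relevant opposite-turn centres are at least $2r_{\min}$ apart, and I would record this as an existence side-condition, so that only paths actually contributing a length to $\{l_{RSR},l_{LSL},l_{RSL},l_{LSR}\}$ enter the discussion.

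Second, I would use the hypothesis $(A,B)\in\mathcal O^c$ to single out $\Lambda_m$. Since $(A,B)\notin\mathcal O$, the shortest path has $\eta<\pi$ and $\zeta<\pi$, its straight part is shorter than $4r_{\min}$, and the two like-turn centre distances are below $4r_{\min}$. Thus $\Lambda_m$ is exactly the $CSC$ type whose two turns are both the \emph{minor} rotations (each at most $\pi$) that carry $\hat v_A$ to $\hat t$ and $\hat t$ to $\hat v_B$. I would make this precise through signed-turn bookkeeping: for a fixed straight direction, the sense ($L$ or $R$) realising a minor turn at each endpoint is uniquely determined, so the labels of $\Lambda_m$ are forced and $\Lambda_m$ is the unique type with both arcs below $\pi$.

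Third comes the core observation. Each of the remaining three types agrees with $\Lambda_m$ in at most one of its two endpoint turns, and therefore must take the \emph{major} rotation, of angle $2\pi-\theta\geq\pi$ with $\theta\leq\pi$ the corresponding minor turn, on at least one of its circles: for $RSR$ the wrong-way turn sits on the terminal like-turn circle, for $LSL$ on the initial one, and for the surviving opposite-turn type both endpoint turns are forced to be major. An arc whose turning exceeds $\pi$ necessarily carries the heading through the direction opposite to the tangent at the start of that arc, so that the wrapping circle is traversed with oppositely directed tangents; this is the content of the anti-parallel tangent property. Assembling the three sub-cases shows that every path whose length lies in $\{l_{RSR},l_{LSL},l_{RSL},l_{LSR}\}\setminus\{l_m\}$ carries such a wrapping arc, which is the assertion; as a by-product one sees that each such length exceeds $l_m$ by essentially a full revolution, consistent with its appearance beside $l_m+2\pi r_{\min}$ in the definition of $l_2$.

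The step I expect to be the main obstacle is the uniform handling of the case analysis. The identity of $\Lambda_m$ varies with $(A,B)$, so instead of fixing one configuration I must argue for all assignments of minor and major turns at the two endpoints simultaneously, and I must prevent the slightly different straight directions $\hat t$ of the four types from corrupting the minor-versus-major dichotomy, which requires a non-degeneracy argument keeping the four directions $\hat t$ clustered on the same side of $\hat v_A$ and $\hat v_B$ under the $\mathcal O^c$ inequalities. The opposite-turn (internal-tangent) paths are the delicate part, both because their very existence is conditional on the opposite-turn centre distance and because their tangent direction is not parallel to the centre line, so the turning angles there must be extracted from the explicit tangent-point coordinates rather than inferred from the centres. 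I would therefore isolate the internal-tangent geometry as a preliminary computation before running the three-way comparison.
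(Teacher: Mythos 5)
The paper does not actually prove this lemma: it is imported verbatim from \cite{chen_elongation} and used as a black box, so there is no in-paper proof to compare your argument against. Judged on its own terms, your outline contains a genuine gap at its load-bearing step.

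The gap is the one you yourself flag as ``the main obstacle'' and then leave unresolved. Your third step rests on the claim that if a non-minimal $CSC$ type disagrees with $\Lambda_m$ in the sense of a turn at an endpoint, then that turn must be the major rotation $2\pi-\theta\geq\pi$. This inference is valid only when the straight-segment direction $\hat t$ is the \emph{same} for the two path types being compared; for a fixed $\hat t$ the $L$ and $R$ turns at an endpoint sum to $2\pi$ and exactly one is minor, but the four $CSC$ types use four different common tangents, so the minor/major dichotomy can in principle flip between types. Near the boundary case where the turn is close to $\pi$, a small change in $\hat t$ is enough to make the ``wrong-sense'' turn equal to $\pi-\epsilon$, which would falsify the conclusion. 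Closing this requires a quantitative estimate on how far the four tangent directions can spread, and this is precisely where the defining inequalities of $\mathcal{O}^c$ (both arcs of $\Lambda_m$ below $\pi$, $d<4r_{\min}$, and both like-turn centre distances below $4r_{\min}$) must enter the argument. That they are indispensable is clear from a counterexample outside $\mathcal{O}^c$: for two distant points with parallel headings, all four $CSC$ paths have both arcs small, so the conclusion fails without these hypotheses. Your proposal invokes the $\mathcal{O}^c$ conditions only to characterise $\Lambda_m$, never to control the other three paths, so as written the proof does not go through.

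Two smaller points. Your case bookkeeping in step three (``for $RSR$ the wrong-way turn sits on the terminal like-turn circle, for $LSL$ on the initial one'') tacitly assumes $\Lambda_m$ is an opposite-turn type such as $RSL$; the assignment of which turns are forced to be major depends on which of the four types $\Lambda_m$ is, so the enumeration must be run for all four possibilities (or symmetrised). And the closing ``by-product'' -- that each such length exceeds $l_m$ by essentially a full revolution -- does not follow from one arc being at least $\pi$ and is not needed; the role these lengths play in $l_2$ of eqn.~\eqref{eq: l1 l2 definition} is simply that, having an arc of at least $\pi$, the corresponding paths admit the continuous elongation of Lemma~\ref{lem: CSC great pi ELONGATION}.
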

Lemma \ref{lem: anti parallel tangent} highlights that for each CSC-path, with its length in $\{l_{RSR},l_{LSL},l_{RSL},l_{LSR}\}\setminus\{l_m\}$, has either $\eta\geq\pi$ or $\zeta\geq\pi$. With these observations, we present the elongation strategies for the Dubins Shortest Path to achieve a $\mathbf{C}^{r_1}\mathbf{C}^{r_2}\mathbf{C}^{r_3}$ trajectory of a desired length between any two oriented points.
\subsection{Elongation of a trajectory for \texorpdfstring{$(A,B)\notin\mathcal{O}\cup\mathcal{O}^c$}{Lg}}
The set of ordered pairs $(A,B)\notin\mathcal{O}\cup\mathcal{O}^c$ corresponds to the cases where the Dubins Shortest Path is a CCC trajectory. The following theorem illustrates the elongation strategy for such cases.
\begin{theorem}
    Given oriented points $A$ and $B$ such that $(A,B)\notin\mathcal{O}\cup\mathcal{O}^c$, there always exists a $\mathbf{C}^{r_1}\mathbf{C}^{r_2}\mathbf{C}^{r_3}$ trajectory of any desired length $l_o\in[l_m,\infty)$.
    \label{lem: CCC ELONGATION}
\end{theorem}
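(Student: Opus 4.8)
The plan is to realise the entire interval $[l_m,\infty)$ as the image of one \emph{connected} family of curvature-bounded $\mathbf{C}^{r_1}\mathbf{C}^{r_2}\mathbf{C}^{r_3}$ trajectories under the length map, and then conclude by the intermediate value theorem in its topological form: a continuous image of a connected set in $\mathbb{R}$ is an interval. Since $(A,B)\notin\mathcal{O}\cup\mathcal{O}^c$, the Dubins shortest path $\Lambda_m$ is a $CCC$ path, say $LRL$ (the $RLR$ case being symmetric); it is itself a $\mathbf{C}^{r_1}\mathbf{C}^{r_2}\mathbf{C}^{r_3}$ trajectory with $r_1=r_3=r_{\min}$ and $r_2=-r_{\min}$, so the lower endpoint $l_m$ is attained inside the family. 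The remaining work is to continuously inflate this configuration up to arbitrary length.

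First I would fix the signs $r_1,r_3>0$ to match the $LRL$ path and take as parameters the radii $r_1,r_3$ together with the position of $\mathbf{o_2}$ on the locus $\mathcal{H}$ (indexed by $k$ in the non-degenerate case), subject to the curvature bounds $r_1,r_3\ge r_{\min}$ and $|r_2|\ge r_{\min}$, with $r_2$ read off Table~\ref{table: r_2 value} and $\mathbf{o_2}$ restricted to the branch on which $\mathcal{C}_2$ is externally tangent to both $\mathcal{C}_1$ and $\mathcal{C}_3$ (Lemma~\ref{lem: classification 2}), i.e. the $LRL$ branch. For $r_1=r_3$ the locus degenerates to the perpendicular bisector of $\mathbf{o_1}\mathbf{o_3}$ (Theorem~\ref{corr: existence of H}), and as $r_1\to r_3$ the two hyperbola branches collapse onto this bisector; this is the gluing that keeps the sign-fixed parameter set connected. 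Along this family the length is bounded below by the global Dubins minimum $l_m$, while by the analysis of Section~\ref{section: traj of desired length} the circle $\mathcal{C}_2$ degenerates to a line at the asymptotic ends of the branch (Corollary~\ref{corr: r_2 infinity}), and at one such end the arc traversed on $\mathcal{C}_2$ grows without bound, forcing $l\to\infty$. Hence the family reaches lengths arbitrarily close to $l_m$ and arbitrarily large.

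The remaining ingredient is continuity of $l$. By Lemma~\ref{lem: jump discontinuity}, for frozen $r_1,r_3$ the length is continuous in $k$ except possibly at the two parameter values where $\mathbf{c_1}=\mathbf{a}$ or $\mathbf{c_2}=\mathbf{b}$, with jumps of size $2\pi|r_1|$ and $2\pi|r_3|$. Allowing $r_1,r_3$ to vary as well, the changeover points \eqref{eq: changeover points c_1 c_2} and $r_2$ depend continuously on all parameters, so $l$ is jointly continuous off the two \emph{coincidence loci} $D_a=\{\mathbf{c_1}=\mathbf{a}\}$ and $D_b=\{\mathbf{c_2}=\mathbf{b}\}$. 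Each locus is cut out by two scalar equations in three parameters and is therefore of codimension two; deleting it neither disconnects the parameter set nor removes the $l_m$-configuration (whose first and last arcs are nonzero) or the $l\to\infty$ end. Consequently $l$ is continuous on a connected set whose image contains values arbitrarily near $l_m$ and is unbounded above, so that image is an interval; since it lies in $[l_m,\infty)$ and contains $l_m$, it equals $[l_m,\infty)$, which is the claim.

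The main obstacle is precisely the bookkeeping around these coincidence loci. A single branch traversed monotonically can skip an interval of length $2\pi r_{\min}$ at a jump where the first (or last) arc flips between a vanishing arc and a full revolution; the point to make rigorous is that this jump is the \emph{monodromy} of $l$ around the codimension-two locus $D_a$ (or $D_b$), so that the very length values a single branch misses are recovered by routing the connecting path around that locus in the extra parameter directions. Establishing (i) joint continuity of $l$ off $D_a\cup D_b$, (ii) that $D_a,D_b$ are genuinely codimension two, and (iii) that the curvature-bounded, sign-fixed family stays connected after their removal and after imposing $|r_2|\ge r_{\min}$, is where the care lies; the rest follows from the connectedness-plus-continuity argument above.
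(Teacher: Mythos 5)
Your overall strategy --- exhibit a connected, curvature-feasible family of $\mathbf{C}^{r_1}\mathbf{C}^{r_2}\mathbf{C}^{r_3}$ trajectories that contains $\Lambda_m$ and reaches unbounded lengths, then invoke continuity of $l$ plus connectedness --- is sound in outline, but the step you yourself identify as the crux is wrong. The coincidence locus $D_a=\{\mathbf{c_1}=\mathbf{a}\}$ is \emph{not} of codimension two in the three-parameter set $(r_1,r_3,k)$: for every value of the parameters both $\mathbf{c_1}$ and $\mathbf{a}$ lie on the circle $\mathcal{C}_1$, so the two scalar equations in $\mathbf{c_1}=\mathbf{a}$ reduce to a single angular condition on $\mathcal{C}_1$. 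Hence $D_a$ (and likewise $D_b$) is generically a codimension-one surface, which \emph{can} separate the parameter set, and the monodromy / ``route around the locus'' device on which you rely to recover the $2\pi r_{\min}$ worth of lengths skipped at a jump collapses. As written, your argument does not exclude the possibility that every path in parameter space from the Dubins configuration to the $l\to\infty$ end crosses $D_a\cup D_b$ and that an interval of lengths is missed. (You also leave unaddressed whether imposing $|r_2|\ge r_{\min}$ keeps the relevant component connected, which you list under (iii) but never establish.)

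The paper's proof of Theorem \ref{lem: CCC ELONGATION} sidesteps all of this by using the one fact your argument does not exploit: by Dubins' result, when $\Lambda_m$ is a $CCC$ path the middle arc already has length at least $\pi r_{\min}$, i.e.\ the starting configuration lies on the external-tangency branch where the traversed arc of $\mathcal{C}_2$ subtends the reflex angle $2\pi-\theta\ge\pi$. Keeping $r_1=r_3=\pm r_{\min}$ fixed and simply increasing $|r_2|$ along that single branch then yields a continuous elongation whose middle arc length is at least $\pi|r_2|\to\infty$, so the one-parameter sub-family already covers $[l_m,\infty)$ and no excursion in the $(r_1,r_3)$ directions --- and hence no codimension bookkeeping --- is required. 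To repair your proof you would need either to show directly that the branch issuing from the Dubins $CCC$ configuration never meets $D_a$ or $D_b$ before the length diverges (which is what the paper implicitly asserts via Fig.~\ref{fig:ccc elongation}), or to replace the codimension-two claim with a genuine connectivity argument for the complement of two codimension-one surfaces inside the curvature-feasible region.
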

\begin{proof}
\begin{figure}[h]
        \centering
        \includegraphics[scale=0.4]{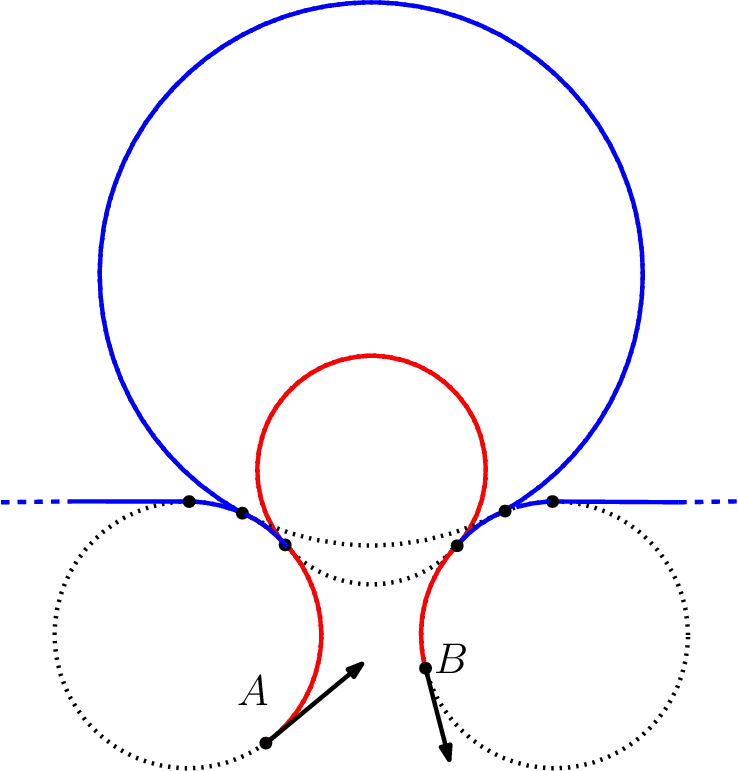}
        \caption{Elongation of CCC Dubins Shortest Path}
        \label{fig:ccc elongation}
    \end{figure}
    It is shown in \cite{dubins} that if the Dubins Shortest Path is of the form CCC, the arc length of the middle circle is greater than or equal to $\pi r_{\min}$. Thus, the minimum length trajectory is shown in red in Fig. \ref{fig:ccc elongation} (for $LRL$). If we increase the magnitude of $r_2$, we get a continuous elongation till infinity. We can elongate the Dubins Shortest Path of form $RLR$ similarly.
    Hence, proved.
\end{proof}
\subsection{Elongation of a trajectory for  \texorpdfstring{$(A,B)\in\mathcal{O}$}{Lg}}
The set $\mathcal{O}$ is formed by the union of five different sets given by eqn. \eqref{eq: classification of O}. We first present an important observation on this set.
\begin{lemma}
    For all $(A,B)\in\mathcal{O}_3$, $(A,B)\in\mathcal{O}_1\cup\mathcal{O}_2\cup\mathcal{O}_4\cup\mathcal{O}_5$.
    \label{lem: not in o3}
\end{lemma}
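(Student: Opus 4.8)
The plan is to split the argument according to the type of the CSC shortest path $\Lambda_m$, which by the hypothesis $(A,B)\in\mathcal{O}_3$ is one of $\{LSL,RSR,LSR,RSL\}$, and to show that in every type the condition $d\geq 4r_{\min}$ forces one of the defining inequalities of $\mathcal{O}_1,\mathcal{O}_2,\mathcal{O}_4,\mathcal{O}_5$. The two \emph{same-sense} types are immediate. For an $LSL$ path both terminal circles are left circles of equal radius $r_{\min}$, so the straight segment is their \emph{external} common tangent, whose length for equal circles equals the distance between the centres; hence $d=d(\mathbf{c}^l_A,\mathbf{c}^l_B)$, and $d\geq 4r_{\min}$ places $(A,B)$ directly in $\mathcal{O}_5$. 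Symmetrically, an $RSR$ path gives $d=d(\mathbf{c}^r_A,\mathbf{c}^r_B)$ and hence $(A,B)\in\mathcal{O}_4$.

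The substance of the proof lies in the \emph{opposite-sense} case, which I treat for $LSR$; the $RSL$ case then follows by the left--right mirror symmetry of the construction (which swaps $\mathcal{O}_4\leftrightarrow\mathcal{O}_5$ while fixing $\mathcal{O}_1$). For $LSR$ the first circle is the left circle at $A$ with centre $P:=\mathbf{c}^l_A$ and the last is the right circle at $B$ with centre $Q':=\mathbf{c}^r_B$, and the straight segment is their \emph{internal} common tangent. I would fix coordinates with $P$ at the origin and $Q'=(D,0)$, where $D:=d(\mathbf{c}^l_A,\mathbf{c}^r_B)$; since the internal common tangent of two equal circles of radius $r_{\min}$ has length $\sqrt{D^2-4r_{\min}^2}$, the hypothesis $d\geq 4r_{\min}$ is equivalent to $D\geq 2\sqrt5\,r_{\min}$. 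Recalling that $\mathbf{c}^r_A$ and $\mathbf{c}^l_A$ are antipodal about $\mathbf{a}$ at separation $2r_{\min}$, in these coordinates $\mathbf{c}^r_A=2r_{\min}(\sin\alpha,-\cos\alpha)$, and placing $A$, $B$ on their respective circles a direct computation gives $d(\mathbf{c}^r_A,\mathbf{c}^r_B)^2=D^2-4r_{\min}D\sin\alpha+4r_{\min}^2$.

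The crux is then to show $d(\mathbf{c}^r_A,\mathbf{c}^r_B)<4r_{\min}\Rightarrow \eta\geq\pi$, so that failing $\mathcal{O}_4$ forces $\mathcal{O}_1$. The inequality $d(\mathbf{c}^r_A,\mathbf{c}^r_B)<4r_{\min}$ rearranges to $\sin\alpha>g(D):=(D^2-12r_{\min}^2)/(4r_{\min}D)$, which (since $g(D)>0$ for $D\geq 2\sqrt5\,r_{\min}$) forces $\alpha\in(0,\pi)$. The heading $\sigma$ of the straight segment satisfies $\sin\sigma=2r_{\min}/D$, obtained by requiring the chord between the two tangency points to be parallel to the segment's heading, and the left turn angle at $A$ is $\eta=(\sigma-\alpha)\bmod 2\pi$. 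The elementary comparison $g(D)\geq 2r_{\min}/D$ for $D\geq 2\sqrt5\,r_{\min}$ (with equality at the boundary) gives $\alpha>\arcsin g(D)\geq\sigma$, so $\sigma-\alpha\in(-\pi,0)$ and $\eta=2\pi-(\alpha-\sigma)>\pi$. Hence $(A,B)\in\mathcal{O}_1$, which with the $RSL$ mirror image $(RSL\cap\mathcal{O}_3\subseteq\mathcal{O}_1\cup\mathcal{O}_5)$ and the two same-sense cases yields $\mathcal{O}_3\subseteq\mathcal{O}_1\cup\mathcal{O}_2\cup\mathcal{O}_4\cup\mathcal{O}_5$.

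I expect the main obstacle to be exactly this final implication: converting the \emph{metric} statement ``the like-oriented terminal centres are close'' into the \emph{angular} statement ``the terminal turn exceeds a half-circle.'' The delicate points are the bookkeeping of the modular turn angle $\eta$ and the sign-sensitive algebraic comparison $g(D)\geq 2r_{\min}/D$; by contrast, the same-sense cases and the reduction to tangent-length geometry are routine. It is worth noting that the argument uses only the geometry of the CSC path parameters and not the optimality of $\Lambda_m$, and that $\mathcal{O}_2$ never needs to be invoked.
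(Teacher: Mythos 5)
Your proof is correct, and while the easy cases coincide with the paper's, the crux is handled by a genuinely different argument. The paper also reduces immediately to $LSL$/$RSR$ (where $d=d(\mathbf{c}^l_A,\mathbf{c}^l_B)$ or $d(\mathbf{c}^r_A,\mathbf{c}^r_B)$) versus the mixed types, but for $RSL$/$LSR$ it first peels off $\eta\geq\pi$ or $\zeta\geq\pi$ and then, assuming $\eta<\pi$ and $\zeta<\pi$, runs a soft projection argument: the normal lines $n_1,n_2$ to the straight segment at its endpoints contain $\mathbf{c}^r_A$ and $\mathbf{c}^l_B$ respectively, and $\eta<\pi$ forces $\mathbf{c}^l_A$ to lie on the far side of $n_1$, so projecting onto the segment direction gives $d(\mathbf{c}^l_A,\mathbf{c}^l_B)\geq d\geq 4r_{\min}$ (and likewise for the right circles), i.e.\ membership in \emph{both} $\mathcal{O}_4$ and $\mathcal{O}_5$. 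You prove the contrapositive of the same implication by explicit coordinates: $d(\mathbf{c}^r_A,\mathbf{c}^r_B)<4r_{\min}\Rightarrow\sin\alpha>g(D)\geq 2r_{\min}/D=\sin\sigma\Rightarrow\eta>\pi$. Your computation checks out ($d(\mathbf{c}^r_A,\mathbf{c}^r_B)^2=D^2-4r_{\min}D\sin\alpha+4r_{\min}^2$ is right, and $g(D)\geq 2r_{\min}/D\Leftrightarrow D\geq 2\sqrt5\,r_{\min}\Leftrightarrow d\geq 4r_{\min}$), and the exact tightness at the threshold is a nice consistency check; but note your inequality uses the full strength of $d\geq 4r_{\min}$, whereas the paper's projection bound $d(\mathbf{c}^l_A,\mathbf{c}^l_B)\geq d$ holds for any $d$ and is arguably more robust. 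Three small points you should nail down: (i) you need $\cos\sigma>0$ (i.e.\ $\sigma\in(0,\pi/2)$, not the other internal tangent with $\sigma\in(\pi/2,\pi)$) for $\arcsin g(D)\geq\sigma$ to be usable --- this follows from the direction of travel from the circle at $A$ toward the circle at $B$, but it is exactly the sign-sensitive step you flagged; (ii) the step from $\sin\alpha>g(D)$ to $\alpha>\arcsin g(D)$ needs the trivial split $\alpha\leq\pi/2$ versus $\alpha>\pi/2>\arcsin g(D)$; (iii) if $g(D)\geq 1$ the hypothesis $d(\mathbf{c}^r_A,\mathbf{c}^r_B)<4r_{\min}$ is vacuous and the pair is in $\mathcal{O}_4$ outright. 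Your observation that $\mathcal{O}_2$ is never needed is accurate and slightly sharper than the paper's bookkeeping, which invokes $\mathcal{O}_1\cup\mathcal{O}_2$ in its opening reduction.
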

\begin{proof}
Consider the the Dubins Shortest Path between any $(A,B)\in\mathcal{O}_3$ as a $C_{\eta}S_dC_{\zeta}$ trajectory. Clearly, if $\eta\geq\pi$ or $\zeta\geq\pi$, $(A,B)\in\mathcal{O}_1\cup\mathcal{O}_2$. We now focus on the case of $\eta<\pi$ and $\zeta<\pi$. If the trajectory is of form $RSR$, $d(\mathbf{c}^r_A,\mathbf{c}^r_B)=d\geq4r_{\min}$ implying $(A,B)\in\mathcal{O}_4$. Similarly, $(A,B)\in\mathcal{O}_5$ if the trajectory is of the form $LSL$.

If the trajectory is of the form $RSL$, the same has been shown in Fig. \ref{fig: rsl o3 not}. Construct the normal lines at the endpoints of the straight line segment (labelled as $n_1$ and $n_2$) and circle $\mathcal{C}_A^l$. As $\eta<\pi$, the point $\mathbf{c}^l_A$ lies to the left of or on the normal line $n_1$. Clearly, $d(\mathbf{c}^l_A,\mathbf{c}^l_B)\geq d\geq4r_{\min}$. Thus, $(A,B)\in\mathcal{O}_5$. Note that we can similarly show that $d(\mathbf{c}^r_A,\mathbf{c}^r_B)\geq d\geq4r_{\min}$ and $(A,B)\in\mathcal{O}_4$ for this case. Further, the case of $LSR$ path can be proven similarly. Hence, proved.
\begin{figure}[h]
    \centering
    \includegraphics[scale=0.6]{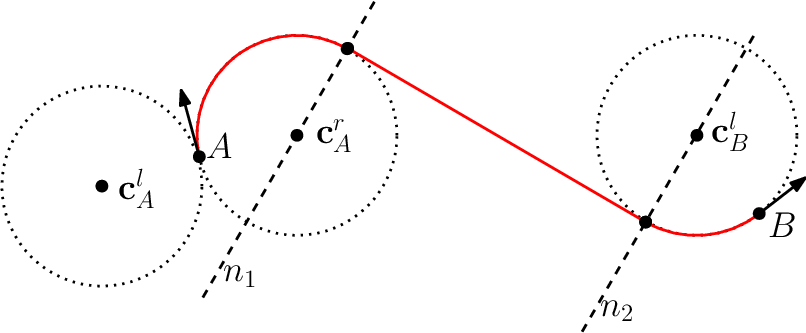}
    \caption{For a Dubins Shortest Path of form $RSL$ $d(\mathbf{c}^l_A,\mathbf{c}^l_B)\geq4r_{\min}$}
    \label{fig: rsl o3 not}
\end{figure}
\end{proof}

The sets given in eqn. \eqref{eq: classification of O} are not mutually disjoint. There exist certain pairs of $(A,B)$ that belong to a unique set and certain pairs that belong to multiple sets. Fig. \ref{fig: belong all} illustrates a pair of oriented point that belongs to $\mathcal{O}_1\cap\mathcal{O}_4\cap\mathcal{O}_5$. However, Lemma \ref{lem: not in o3} illustrates that $\mathcal{O}_3$ does not have any unique elements. Thus, it is sufficient to construct elongation strategies for the remaining four sets to explore a continuous elongation in $\mathcal{O}$. We analyse each of these sets separately. We begin by presenting the following result which discusses elongation strategies for set $\mathcal{O}_1\cup\mathcal{O}_2$.

\begin{figure}[h]
    \centering
    \includegraphics[scale=0.6]{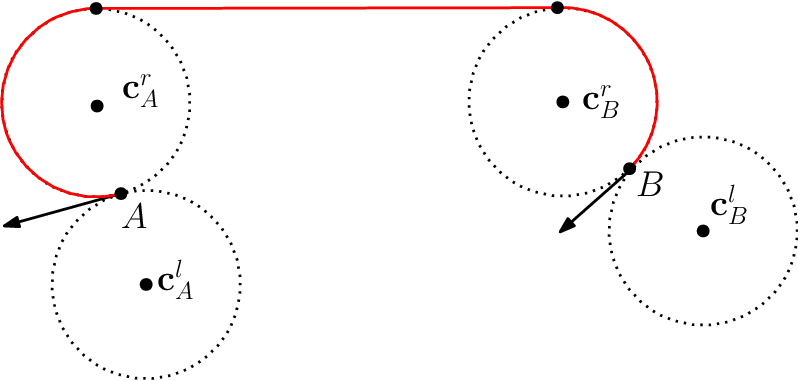}
    \caption{CSC Dubins Shortest Path for some $(A,B)\in\mathcal{O}_1\cap\mathcal{O}_4\cap\mathcal{O}_5$}
    \label{fig: belong all}
\end{figure}

\begin{lemma}
    Given oriented points $A$ and $B$ such that $(A,B)\in\mathcal{O}_1\cup\mathcal{O}_2$, there always exists $\mathbf{C}^{r_1}\mathbf{C}^{r_2}\mathbf{C}^{r_3}$ trajectory of any desired length $l_o\in[l_m,\infty)$.
    \label{lem: CSC great pi ELONGATION}
\end{lemma}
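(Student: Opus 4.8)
The plan is to realise the Dubins shortest $C_{\eta}S_dC_{\zeta}$ path as a limiting member of the $\mathbf{C}^{r_1}\mathbf{C}^{r_2}\mathbf{C}^{r_3}$ family and then to elongate it continuously by sweeping the hyperbola parameter $k$. By Corollary \ref{corr: r_2 infinity}, fixing $r_1=r_3=r_{\min}$ (with the signs dictated by the type of $\Lambda_m$) and letting $k\to\pm\pi/2$ reproduces the shortest path, so that $l=l_m$ at that end of the relevant branch of $\mathcal{H}$. Since $(A,B)\in\mathcal{O}_1\cup\mathcal{O}_2$ means $\Lambda_m$ is a genuine $CSC$ path with a straight segment of positive length, Theorem \ref{corr: existence of H} guarantees that $\mathcal{H}$ is non-degenerate for these radii, so the whole branch is available. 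I would treat $(A,B)\in\mathcal{O}_1$ in detail, where $\eta\geq\pi$, and obtain the case $(A,B)\in\mathcal{O}_2$, where $\zeta\geq\pi$, by the symmetric argument that interchanges the roles of the terminal circles $\mathcal{C}_1$ and $\mathcal{C}_3$ (equivalently $r_1\leftrightarrow r_3$ and $\mathbf{a}\leftrightarrow\mathbf{b}$).

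Next I would fix the branch of $\mathcal{H}$ on which the middle circle bulges so that, by the analysis preceding Lemma \ref{lem: jump discontinuity}, the length $l(k)$ tends to infinity at the far end of the branch while equalling $l_m$ at the Dubins end. Along this sweep the two changeover points $\mathbf{c_1}(k)$ and $\mathbf{c_2}(k)$, given by \eqref{eq: changeover points c_1 c_2}, move continuously within a fixed sub-arc of $\mathcal{C}_1$ and of $\mathcal{C}_3$. By Lemma \ref{lem: jump discontinuity}, $l(\cdot)$ can fail to be continuous only at a value $k_a$ with $\mathbf{c_1}(k_a)=\mathbf{a}$ or $k_b$ with $\mathbf{c_2}(k_b)=\mathbf{b}$. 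The crux is therefore to show that neither event occurs in the interior of the sweep, so that $l(\cdot)$ is continuous from $l=l_m$ to the infinite end; the intermediate value theorem then produces a $\mathbf{C}^{r_1}\mathbf{C}^{r_2}\mathbf{C}^{r_3}$ trajectory of every desired length $l_o\in[l_m,\infty)$.

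This is exactly where the hypothesis $\eta\geq\pi$ enters. Starting from the shortest path, the traversed arc $\widearc{\mathbf{ac_1}}$ already subtends at least $\pi$ on $\mathcal{C}_1$; I would argue that, as $k$ moves into the elongating branch, $\mathbf{c_1}(k)$ recedes from $\mathbf{a}$ rather than returning to it, so that the jump value $k_a$ is never reached on this branch (whereas for $\eta<\pi$ the changeover does return to $\mathbf{a}$, producing the jump of magnitude $2\pi r_{\min}$ responsible for the forbidden gap $(l_1,l_2)$ encountered in $\mathcal{O}^c$). One must also check that the second changeover $\mathbf{c_2}(k)$ never coincides with $\mathbf{b}$ along the chosen branch; I expect this to follow from the same geometry, since the chosen direction of the sweep keeps $\mathbf{c_2}$ confined to a single sub-arc of $\mathcal{C}_3$ away from $\mathbf{b}$.

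The main obstacle I anticipate is the precise translation of the arc-length condition $\eta\geq\pi$ into the statement that $\mathbf{c_1}(k)\neq\mathbf{a}$ throughout the elongating branch. This requires relating the position of the changeover point to the vertex value $k=\pm\pi/2$ through the parametrisation \eqref{eq: o2 centre of c2} together with \eqref{eq: changeover points c_1 c_2}, and it is the step that genuinely distinguishes the continuously elongable set $\mathcal{O}_1\cup\mathcal{O}_2$ from the gap-producing set $\mathcal{O}^c$; everything else reduces to the continuity-plus-surjectivity argument already justified by Lemma \ref{lem: jump discontinuity} and the branch analysis.
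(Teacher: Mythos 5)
Your proposal correctly identifies the overall scheme (realise $\Lambda_m$ as the $k\to\pm\pi/2$ limit of the $\mathbf{C}^{r_1}\mathbf{C}^{r_2}\mathbf{C}^{r_3}$ family, sweep $k$ along one branch of $\mathcal{H}$, and invoke Lemma \ref{lem: jump discontinuity} plus the intermediate value theorem), but the step you flag as the crux is also where the argument breaks. The hypothesis $\eta\geq\pi$ only controls the \emph{first} terminal circle: it places $\mathbf{a}$ in the arc $\mathcal{B}_2$ of $\mathcal{C}_1$ so that $\mathbf{c_1}(k)$ stays away from $\mathbf{a}$ along the elongating branch. It gives no control whatsoever over the position of $\mathbf{b}$ on $\mathcal{C}_3$, and in general there \emph{does} exist a $k_b$ in the interior of the sweep with $\mathbf{c_2}(k_b)=\mathbf{b}$. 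Your expectation that ``this follows from the same geometry'' is false: for $(A,B)\in\mathcal{O}_1$ the angle $\zeta$ can be small, and the paper's proof explicitly confronts the sub-case $\mathbf{b}\in\mathcal{B}_1$ where the jump at $k_b$ is unavoidable if you insist on keeping $r_1,r_3$ fixed, as your proposal does.

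The missing idea is the circle-switching mechanism: when the sweep reaches $k_b$ (so that the arc $\widearc{\mathbf{c_2b}}$ on $\mathcal{C}_3$ has shrunk to zero), one does not continue on the same terminal circle but instead replaces $\mathcal{C}_3$ by the opposite tangent circle at $B$ (i.e.\ flips the sign of $r_3$ from $r_{\min}$ to $-r_{\min}$ or vice versa) and resumes deforming $\mathcal{C}_2$ against the new pair. The terminal arc length then grows back from zero, so $l(\cdot)$ remains continuous across the switch, at the cost of changing the trajectory type (e.g.\ $RRR\to RRL$, or $RL\to LRL$ after an analogous switch at $k_a$). A symmetric switch at $\mathcal{C}_1$ handles the sub-case where $\mathbf{a}$ lies in the part of $\mathcal{B}_2$ from which $k_a$ is reached first. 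The paper also uses a second escape hatch you do not mention: in one $RSL$ sub-case the discontinuity at $k_b$ is a \emph{downward} jump of $2\pi|r_3|$, so even without switching the reachable set $[l_m,l_b+2\pi|r_3|)\cup[l_b,\infty)$ already equals $[l_m,\infty)$. Without at least one of these two devices, a single-branch sweep with frozen $r_1,r_3$ cannot establish the claim.
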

\begin{proof}
We begin the proof by considering the set $\mathcal{O}_1$. Thus, $\eta\geq\pi$. We divide the proof into two cases.

\textbf{Case 1}: Let the $CSC$ trajectory be of form $RSR$. Clearly, $\mathbf{a}\in\mathcal{B}_2$. We continuously deform $\mathcal{C}_2$ as shown in Fig. \ref{fig: case 1 int}. If $\mathbf{b}\in\mathcal{B}_2$, this elongation will be continuous till infinity. 

On the contrary, if $\mathbf{b}\in\mathcal{B}_1$, there exists a $k_b$ such that $\mathbf{c_2}(k_b)=\mathbf{b}$. Construct circle $\mathcal{C}^l_B$ and a transverse tangent between $\mathcal{C}_1$ and $\mathcal{C}^l_B$.  This construction is shown in Fig. \ref{fig: rsr 1}. The point $\mathbf{x}$ divides $\mathcal{B}_2$ into two arcs denoted by $\mathcal{B}_2^I$ and $\mathcal{B}_2^{II}$. If $\mathbf{a}\in\mathcal{B}_2^I$, we deform $\mathcal{C}_2$ and get a series $RRR$ trajectories until point $k_b$ is reached as shown by a green curve in Fig. \ref{fig: rsr 1}. This is an $RR$ trajectory. If we continue further, we will have a jump of $2\pi|r_3|$ as discussed in Lemma \ref{lem: jump discontinuity}. To ascertain a continuous elongation of the length of the trajectory, we proceed to deform $\mathcal{C}_2$ such that it is internally tangent to $\mathcal{C}_1$ and externally to $\mathcal{C}_B^l$ as shown in blue in Fig. \ref{fig: rsr 1} resulting in a $RRL$ trajectory. Since $\mathbf{a}\in\mathcal{B}_2^{I}$, we will have a continuous elongation until $|r_2|$ goes to infinity as shown in Fig. \ref{fig: rsr 1}.

\begin{figure}[h]
     \centering
     \begin{subfigure}[b]{0.33\textwidth}
         \centering
         \includegraphics[width=\textwidth]{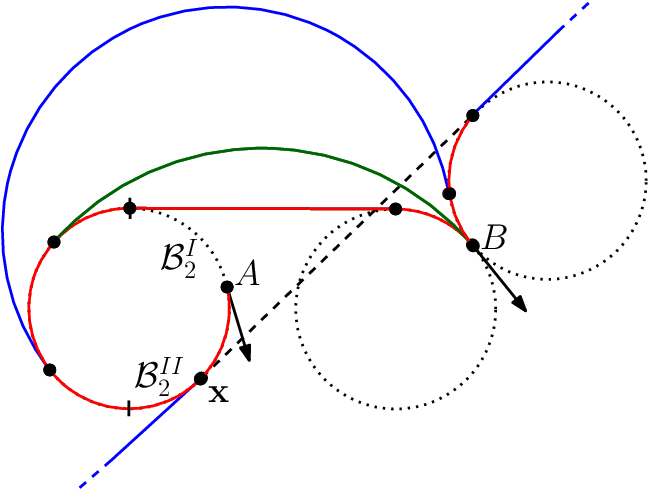}
         \caption{$\mathbf{a}\in\mathcal{B}_2^{I}$ and $\mathbf{b}\in\mathcal{B}_1$}
         \label{fig: rsr 1}
     \end{subfigure}
     \hfill
     \begin{subfigure}[b]{0.39\textwidth}
         \centering
         \includegraphics[width=\textwidth]{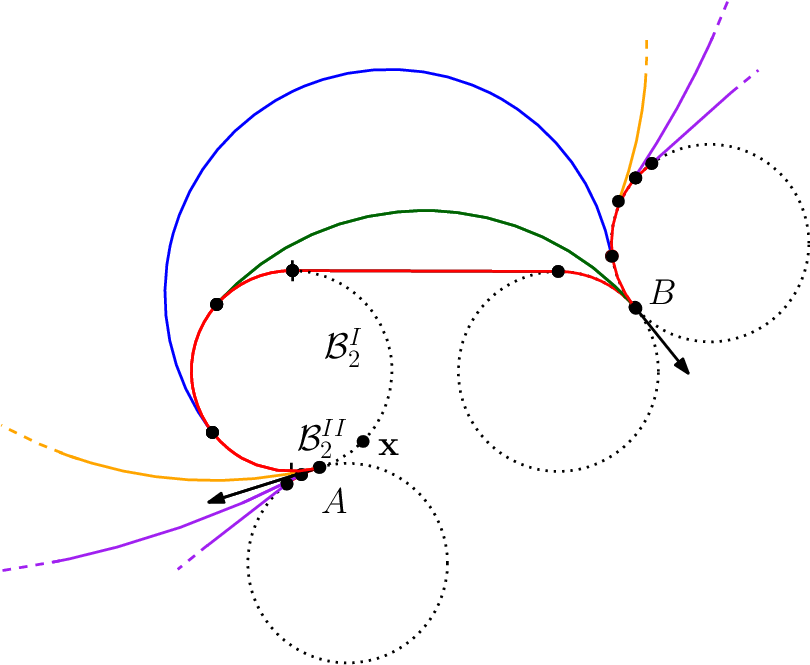}
         \caption{$\mathbf{a}\in\mathcal{B}_2^{II}$ and $\mathbf{b}\in\mathcal{B}_1$}
         \label{fig: rsr 2}
     \end{subfigure}
        \caption{Elongation of $C_{\eta}S_dC_{\zeta}$ trajectory with $\eta\geq\pi$ }
        \label{fig: rsr}
\end{figure}

If $\mathbf{a}\in\mathcal{B}_2^{II}$, construct $\mathcal{C}^l_A$. We deform $\mathcal{C}_2$ similarly as in the previous case until the point $k_a$ is reached such that $\mathbf{c_1}(k_a)=a$ before the length of trajectory goes to infinity. This is an $RL$ trajectory as shown by the yellow curve in Fig. \ref{fig: rsr 2}. If we proceed similarly, we get a jump discontinuity of $2\pi|r_1|$. Instead, we now deform $\mathcal{C}_2$ such that it is externally tangent to both $\mathcal{C}^l_A$ and $\mathcal{C}^l_B$ resulting in $LRL$ trajectories and a continuous deformation after $k=k_a$ until the length reaches infinity.
Note that in all the cases, $|r_2|\geq r_{\min}$. Thus, the curvature constraint on the trajectory is not violated. We can similarly achieve a continuous elongation of an $LSL$ path with $\eta\geq\pi$.

\textbf{Case 2}: Let the $CSC$ trajectory be of form $RSL$. For $\eta\geq\pi$, $\mathbf{a}$ can lie in either $\mathcal{B}_1$ or $\mathcal{B}_2$. If $\mathbf{a}\in\mathcal{B}_2$, we deform $\mathcal{C}_2$ as shown in Fig. \ref{fig: case 2_2} such that $\mathbf{c_1}\neq\mathbf{a}$ always. Further, if $\mathbf{b}\in\mathcal{B}_1$, we get a continuous elongation as $\mathbf{c_2}\in\mathcal{B}_2$ and no jump discontinuity is encountered. If $\mathbf{b}\in\mathcal{B}_2$, there exist a $k_b$ such that $\mathbf{c_2}(k_b)=\mathbf{b}$. However, there is a jump down discontinuity and the set of reachable lengths is given by $l\in[l_m,l_b+2\pi|r_3|)\cup[l_b,\infty)=[l_m,\infty)$ where $l(k_b)=l_b$. Thus, a continuous elongation of the trajectory exists until its length goes to infinity.
\begin{figure}[h]
    \centering
    \includegraphics[scale=0.4]{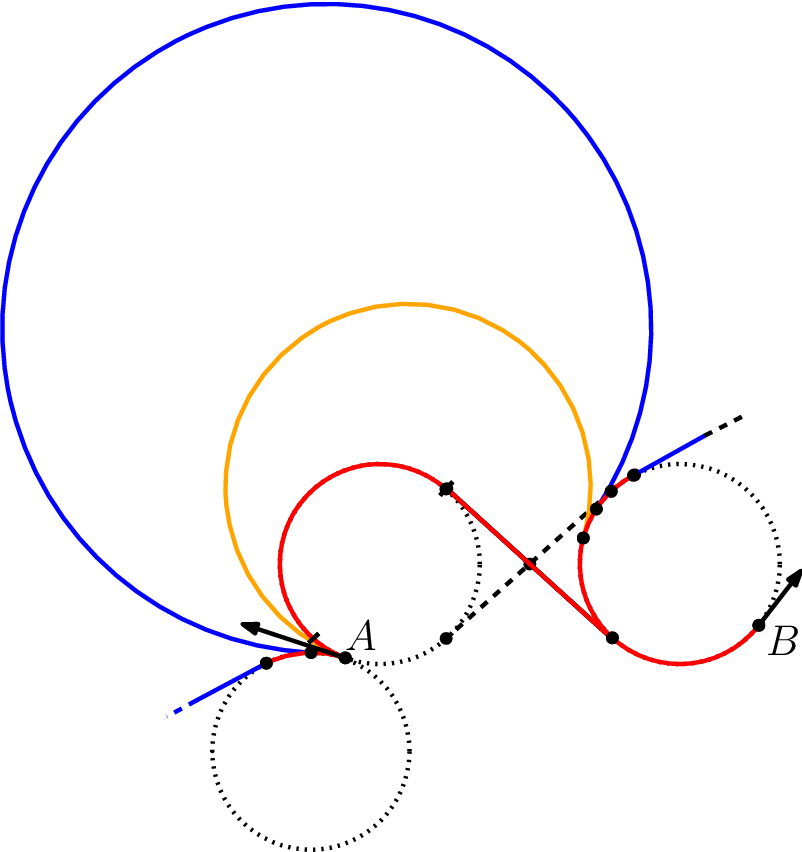}
    \caption{Elongation of $RSL$ trajectory with $\eta\geq\pi$ where $\mathbf{a}\in\mathcal{B}_1$ and $\mathbf{b}\in\mathcal{B}_1$}
    \label{fig: lsr elong}
\end{figure}

If $\mathbf{a}\in\mathcal{B}_1$, we again proceed in a case-wise manner. If $\mathbf{b}\in\mathcal{B}_2$, we deform $\mathcal{C}_2$ as shown in Fig. \ref{fig: case 2_1}. For such a variation, $\mathbf{c_1}\in\mathcal{B}_2$ and  $\mathbf{c_2}\in\mathcal{B}_1$ resulting in no jump discontinuities and we get a continuous elongation of the trajectory till infinity. If $\mathbf{b}\in\mathcal{B}_2$, we proceed in a similar manner as the case of $\mathbf{a}\in\mathcal{B}_2^{II}$ in Case 1. The same has been highlighted in Fig. \ref{fig: lsr elong}. We can similarly achieve a continuous elongation for an $LSR$ path.

The elongation strategies mentioned here can be similarly applied to the cases where $\zeta\geq\pi$, i.e., $(A,B)\in\mathcal{O}_2$. Thus, a continuous elongation of trajectory exist using $\mathbf{C}^{r_1}\mathbf{C}^{r_2}\mathbf{C}^{r_3}$ trajectory for all $(A,B)\in\mathcal{O}_1\cup\mathcal{O}_2$. Hence, proved.
\end{proof}

 An important idea emerges from the proof of Lemma \ref{lem: CSC great pi ELONGATION}. The deformation of $\mathcal{C}_2$ is realised by the variation of the parameter $k$. If the point $k_a$ (or $k_b$), defined in Lemma \ref{lem: jump discontinuity}, is reached in this deformation, one should switch $\mathcal{C}_1$ (or $\mathcal{C}_3$) from $\mathcal{C}^l_A$ to $\mathcal{C}^r_A$ or vice-versa (or $\mathcal{C}^l_B$ to $\mathcal{C}^r_B$ or vice-versa) for further deformation of $\mathcal{C}_2$ instead of proceeding on the same circle. The arc length of $\widearc{\mathbf{ac_1}}$ (or $\widearc{\mathbf{c_2b}}$) goes to zero as $k_a$ (or $k_b$) is approached and starts increasing after the switch, preserving its continuity and, in turn, the continuity of the overall length of trajectory (see Fig. \ref{fig: cont elong}). The switching of the terminal circles can be alternatively viewed as changing $r_1$ (or $r_3$) from $r_{\min}$ to $-r_{\min}$ or vice versa.

\begin{figure}[h]
    \centering
    \includegraphics[scale=0.5]{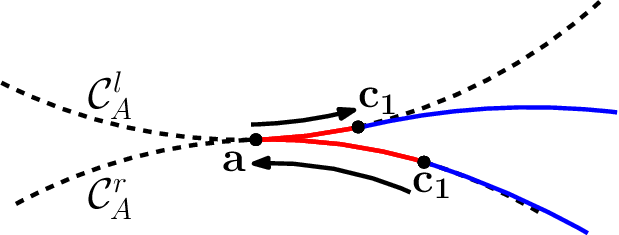}
    \caption{Continuous elongation of $\mathbf{C}^{r_1}\mathbf{C}^{r_2}\mathbf{C}^{r_3}$ trajectory}
    \label{fig: cont elong}
\end{figure}         

We now discuss the set $\mathcal{O}_4\cup\mathcal{O}_5$. Clearly, any pair $(A,B)\in\mathcal{O}_4\cup\mathcal{O}_5$ such that $\eta\geq\pi$ or $\zeta\geq\pi$ can be elongated continuously to infinitely large lengths from Lemma \ref{lem: CSC great pi ELONGATION}. We analyse the remaining cases in the following result.
\begin{lemma}
    Given oriented points $A$ and $B$ such that $(A,B)\in\mathcal{O}_4\cup\mathcal{O}_5$ with $\eta<\pi$ and $\zeta<\pi$, there always exists a $\mathbf{C}^{r_1}\mathbf{C}^{r_2}\mathbf{C}^{r_3}$ trajectory of any desired length $l_o\in[l_m,\infty)$.
    \label{lem: CSC great d(o) ELONGATION}
\end{lemma}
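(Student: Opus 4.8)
The plan is to mirror the elongation argument of Lemma \ref{lem: CSC great pi ELONGATION}, with the separation condition $d(\mathbf{c}^r_A,\mathbf{c}^r_B)\geq 4r_{\min}$ (resp. $d(\mathbf{c}^l_A,\mathbf{c}^l_B)\geq 4r_{\min}$) now playing the role that a long terminal arc played there. First I would use the $L\leftrightarrow R$ symmetry of the construction to reduce to $(A,B)\in\mathcal{O}_4$; the argument for $\mathcal{O}_5$ is obtained verbatim after interchanging the left and right circles. For $\mathcal{O}_4$ I would fix the terminal circles to be the right-turn circles $\mathcal{C}_1=\mathcal{C}^r_A$ and $\mathcal{C}_3=\mathcal{C}^r_B$, so that $r_1=r_3=-r_{\min}$. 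Because $r_1=r_3$, the locus $\mathcal{H}$ of Theorem \ref{thm: hyperbola of centre} degenerates to the perpendicular bisector of the segment joining $\mathbf{o_1}$ and $\mathbf{o_3}$ (the case noted after Theorem \ref{corr: existence of H}), which is well defined since $d(\mathbf{o_3},\mathbf{o_1})=d(\mathbf{c}^r_A,\mathbf{c}^r_B)>0$. Sweeping $k$ over this locus then produces, by Table \ref{table: tangency value}, a family of $RLR$ trajectories (external tangency, $r_2>0$) on one branch and $RRR$ trajectories (internal tangency, $r_2<0$) on the other.

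The decisive step is to verify curvature admissibility, and this is exactly where the $4r_{\min}$ hypothesis enters. For the externally tangent family, placing $\mathbf{o_2}$ on the bisector gives $|r_2|=d(\mathbf{o_2},\mathbf{o_1})-r_{\min}$, which is minimised when $\mathbf{o_2}$ is the midpoint, yielding $|r_2|=\tfrac{1}{2}d(\mathbf{c}^r_A,\mathbf{c}^r_B)-r_{\min}\geq r_{\min}$; for the internally tangent family $|r_2|$ is strictly larger. Hence $\mathcal{C}_2$ never violates the bound $1/|r_2|\leq 1/r_{\min}$ anywhere along the deformation. I would then track $l(k)$: by Corollary \ref{corr: r_2 infinity}, as $k\to\pm\pi/2$ the circle $\mathcal{C}_2$ straightens into a common tangent, recovering a $C^{r_1}C^{\pm\infty}C^{r_3}$ (i.e. $CSC$) path, while moving $\mathbf{o_2}$ toward the far end of a branch drives $|r_2|$ and the length to infinity as recorded in Case 1 of Section \ref{section: traj of desired length}. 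The value $l_o=l_m$ is attained by $\Lambda_m$ itself, which is a degenerate member of the construction (with $r_2=\pm\infty$), so the remaining task is to fill $(l_m,\infty)$ continuously.

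Finally I would assemble a continuous sweep covering all of $[l_m,\infty)$, and this is where the main obstacle lies. As $k$ varies, a changeover point may coincide with $\mathbf{a}$ or $\mathbf{b}$, at which $l(\cdot)$ suffers a jump of $2\pi|r_1|$ or $2\pi|r_3|$ by Lemma \ref{lem: jump discontinuity}. I would remove each such gap with the terminal-circle switching device introduced after Lemma \ref{lem: CSC great pi ELONGATION}: upon reaching $k_a$ (resp. $k_b$) one replaces $\mathcal{C}^r_A$ by $\mathcal{C}^l_A$ (resp. $\mathcal{C}^r_B$ by $\mathcal{C}^l_B$), so that the vanishing arc $\widearc{\mathbf{ac_1}}$ (resp. $\widearc{\mathbf{c_2b}}$) begins to grow again and the overall length remains continuous; the same device also connects the chosen right--right family back to the form of $\Lambda_m$ near $l_m$. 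The delicate part of the write-up is the bookkeeping that locates $\mathbf{a}$ and $\mathbf{b}$ relative to the common tangents (the $\mathcal{B}_1/\mathcal{B}_2$ partition), using $\eta<\pi$ and $\zeta<\pi$ to decide on which branch and after which switches the length sweeps monotonically, and confirming that the two families together with the switches leave no gap in $[l_m,\infty)$ while $|r_2|\geq r_{\min}$ is preserved throughout. Concatenating these pieces establishes a $\mathbf{C}^{r_1}\mathbf{C}^{r_2}\mathbf{C}^{r_3}$ trajectory of every desired length $l_o\in[l_m,\infty)$, and symmetry extends the conclusion to all of $\mathcal{O}_4\cup\mathcal{O}_5$ with $\eta<\pi$ and $\zeta<\pi$.
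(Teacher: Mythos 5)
Your proposal follows essentially the same route as the paper, which simply declares the proof ``similar to that of Lemma \ref{lem: CSC great pi ELONGATION}'' and illustrates it in Fig.~\ref{fig: O4 O5}: deform $\mathcal{C}_2$ along the locus of centres and switch a terminal circle whenever a changeover point meets $\mathbf{a}$ or $\mathbf{b}$, so that the jump discontinuities of Lemma \ref{lem: jump discontinuity} are bypassed. You in fact supply one step the paper leaves entirely to the figure, namely the computation that $d(\mathbf{c}^r_A,\mathbf{c}^r_B)\geq 4r_{\min}$ forces $|r_2|\geq \tfrac{1}{2}d(\mathbf{c}^r_A,\mathbf{c}^r_B)-r_{\min}\geq r_{\min}$ on the degenerate perpendicular-bisector locus, which is exactly why the $4r_{\min}$ threshold appears in the definition of $\mathcal{O}_4\cup\mathcal{O}_5$.
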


\begin{figure}[h]
     \centering
     \begin{subfigure}[b]{0.42\textwidth}
         \centering
         \includegraphics[width=\textwidth]{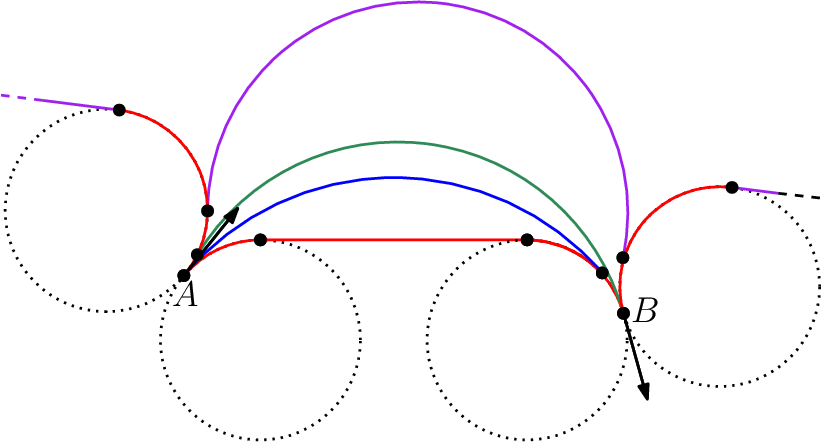}
         \caption{$\Lambda_m$ is of the form $RSR$}
     \end{subfigure}
     \hfill
     \begin{subfigure}[b]{0.36\textwidth}
         \centering
         \includegraphics[width=\textwidth]{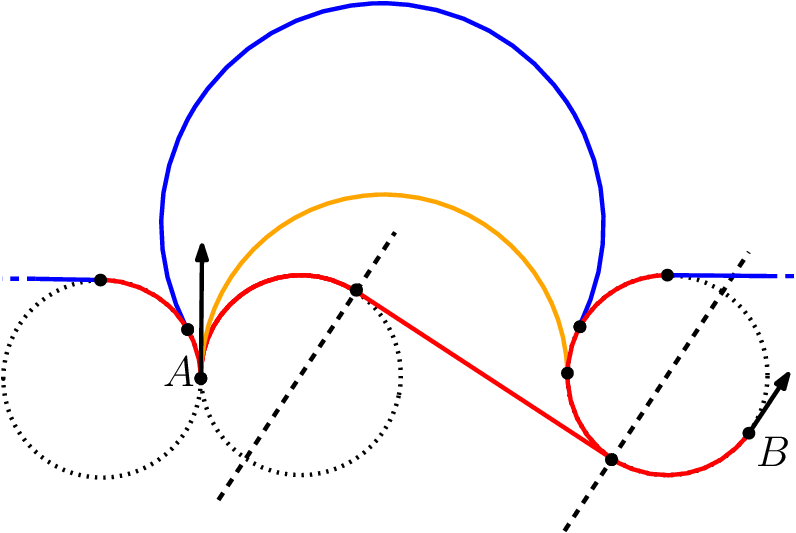}
         \caption{$\Lambda_m$ is of the form $RSL$}
     \end{subfigure}
        \caption{Elongation of $C_{\eta}S_dC_{\zeta}$ trajectory with $d(\mathbf{c}^l_A,\mathbf{c}^l_B)\geq4r_{\min}$}
        \label{fig: O4 O5}
\end{figure}

\begin{proof}
    The proof is similar to that of Lemma \ref{lem: CSC great pi ELONGATION}. It can be illustrated for different cases of $RSR$ and $RSL$ trajectories through Fig. \ref{fig: O4 O5}.
\end{proof}

The following theorem follows from Lemma \ref{lem: CSC great pi ELONGATION} and \ref{lem: CSC great d(o) ELONGATION}.
\begin{theorem}
    Given oriented points $A$ and $B$ such that $(A,B)\in\mathcal{O}$, there exists $\mathbf{C}^{r_1}\mathbf{C}^{r_2}\mathbf{C}^{r_3}$ trajectory for all lengths $l_o\in[l_m,\infty)$.
    \label{lem: set O ELONGATION}
\end{theorem}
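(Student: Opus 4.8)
The plan is to prove the theorem as a corollary that combines the two preceding elongation lemmas, after first simplifying the description of $\mathcal{O}$. Since $\mathcal{O}$ is defined as $\mathcal{O}_1 \cup \mathcal{O}_2 \cup \mathcal{O}_3 \cup \mathcal{O}_4 \cup \mathcal{O}_5$, the first step is to invoke Lemma \ref{lem: not in o3} to discard $\mathcal{O}_3$: that lemma shows every pair in $\mathcal{O}_3$ already lies in $\mathcal{O}_1 \cup \mathcal{O}_2 \cup \mathcal{O}_4 \cup \mathcal{O}_5$, so $\mathcal{O} = \mathcal{O}_1 \cup \mathcal{O}_2 \cup \mathcal{O}_4 \cup \mathcal{O}_5$. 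It then suffices to establish the reachability claim for an arbitrary pair $(A,B)$ in each of these four sets, which I would do by an exhaustive case analysis on which constituent set the pair belongs to.

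First I would dispatch the pairs $(A,B) \in \mathcal{O}_1 \cup \mathcal{O}_2$ directly, since Lemma \ref{lem: CSC great pi ELONGATION} already guarantees a $\mathbf{C}^{r_1}\mathbf{C}^{r_2}\mathbf{C}^{r_3}$ trajectory of every length $l_o \in [l_m,\infty)$ for exactly this case. The remaining pairs lie in $\mathcal{O}_4 \cup \mathcal{O}_5$, which I would handle by splitting on the terminal arc lengths $\eta$ and $\zeta$ of the underlying Dubins Shortest Path. If either $\eta \geq \pi$ or $\zeta \geq \pi$, then by the defining conditions in \eqref{eq: classification of O} the pair also lies in $\mathcal{O}_1 \cup \mathcal{O}_2$, and Lemma \ref{lem: CSC great pi ELONGATION} applies once more. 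Otherwise $\eta < \pi$ and $\zeta < \pi$, which is precisely the hypothesis of Lemma \ref{lem: CSC great d(o) ELONGATION}, supplying the required trajectory for every $l_o \in [l_m,\infty)$.

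Taken together these cases exhaust $\mathcal{O}$, so every $(A,B) \in \mathcal{O}$ falls under one of Lemma \ref{lem: CSC great pi ELONGATION} or Lemma \ref{lem: CSC great d(o) ELONGATION}, each of which delivers a trajectory of every desired length in $[l_m,\infty)$. The main obstacle is not any fresh geometric construction but the bookkeeping that certifies the case split is genuinely complete: in particular, confirming that the $\eta \geq \pi$ (or $\zeta \geq \pi$) branch of $\mathcal{O}_4 \cup \mathcal{O}_5$ forces membership in $\mathcal{O}_1 \cup \mathcal{O}_2$, so that no pair escapes coverage. Once that verification is in place the conclusion is immediate, since the union of the reachable-length sets over all cases is exactly $[l_m,\infty)$.
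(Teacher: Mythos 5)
Your proposal is correct and follows essentially the same route as the paper: the authors likewise use Lemma \ref{lem: not in o3} to absorb $\mathcal{O}_3$ into the other sets, dispatch $\mathcal{O}_1\cup\mathcal{O}_2$ via Lemma \ref{lem: CSC great pi ELONGATION}, note that any pair in $\mathcal{O}_4\cup\mathcal{O}_5$ with $\eta\geq\pi$ or $\zeta\geq\pi$ reduces to that lemma, and cover the remaining $\eta<\pi$, $\zeta<\pi$ case with Lemma \ref{lem: CSC great d(o) ELONGATION}. The only difference is presentational: the paper distributes this bookkeeping across the surrounding text and states the theorem as an immediate consequence, whereas you assemble it explicitly into one case analysis.
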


\subsection{Elongation of trajectory for \texorpdfstring{$(A,B)\in\mathcal{O}^{c}$}{Lg}}

For all $(A,B)\in\mathcal{O}^c$, there exist trajectories of lengths $l_{LRL}^s$ and $l_{RLR}^s$ as shown in Fig. \ref{fig: lrl rlr 2}. Using the definitions of $l_1$ and $l_2$ from \eqref{eq: l1 l2 definition}, the following result illustrates the elongation strategy for such cases.

\begin{theorem}
    Given oriented points $A$ and $B$ such that $(A,B)\in\mathcal{O}^c$, there always exists a $\mathbf{C}^{r_1}\mathbf{C}^{r_2}\mathbf{C}^{r_3}$ trajectory of any desired length $l_o\in[l_m,l_1]\cup[l_2,\infty)$ where $l_1$ and $l_2$ are given by eqn. \eqref{eq: l1 l2 definition}.
    \label{lem: set OC elongation}
\end{theorem}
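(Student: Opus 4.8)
The plan is to leverage the characterization of reachable lengths from Theorem~\ref{thm: maximal reachable lengths}, case 3, together with the elongation machinery already developed for the sets $\mathcal{O}$ and $\mathcal{O}^c$. The goal splits naturally into two pieces: showing that every length in $[l_m,l_1]$ is achievable by some $\mathbf{C}^{r_1}\mathbf{C}^{r_2}\mathbf{C}^{r_3}$ trajectory, and showing that every length in $[l_2,\infty)$ is achievable. I would begin by recalling that for $(A,B)\in\mathcal{O}^c$ the Dubins Shortest Path is a $CSC$ path and, by the definitions in \eqref{eq: l1 l2 definition}, $l_1=\max\{l^s_{LRL},l^s_{RLR}\}$ and $l_2$ is the minimum of the longer $CCC$ lengths and the relevant $CSC$ lengths. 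The short $LRL$ and $RLR$ trajectories of Fig.~\ref{fig: lrl rlr 2} serve as the anchors for the lower interval.

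For the interval $[l_m,l_1]$, I would argue as in Theorem~\ref{lem: CCC ELONGATION}: starting from the Dubins Shortest Path (a $CSC$ path, realized in our framework as a $\mathbf{C}^{r_1}\mathbf{C}^{\pm\infty}\mathbf{C}^{r_3}$ trajectory by Corollary~\ref{corr: r_2 infinity}), I continuously vary the parameter $k$ along one branch of $\mathcal{H}$, which continuously deforms $\mathcal{C}_2$ from its straight-line limit into a finite-radius circle, realizing $LRL$ (respectively $RLR$) trajectories. By the continuity of $l(\cdot)$ established in the discussion preceding Lemma~\ref{lem: jump discontinuity} (and provided no point $k_a$ or $k_b$ is crossed so no jump discontinuity occurs over this sub-range), the length sweeps continuously from $l_m$ up to $l^s_{LRL}$; performing the same deformation on the $RLR$ side reaches $l^s_{RLR}$, and between them we cover $[l_m,\max\{l^s_{LRL},l^s_{RLR}\}]=[l_m,l_1]$. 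Throughout this deformation the terminal radii stay at $\pm r_{\min}$ and $|r_2|\geq r_{\min}$, so the curvature constraint is respected.

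For the interval $[l_2,\infty)$, the idea is that $l_2$ is precisely the smallest length at which an \emph{unbounded} continuous elongation becomes available. By the definition of $l_2$, either the longer $CCC$ trajectory $l^l_{LRL}$ or $l^l_{RLR}$, or one of the $CSC$ paths of length in $\{l_{RSR},l_{LSL},l_{RSL},l_{LSR}\}\setminus\{l_m\}$, or $l_m+2\pi r_{\min}$, attains the minimum. Here I would invoke Lemma~\ref{lem: anti parallel tangent}: each such competing $CSC$ path has anti-parallel tangents and therefore satisfies $\eta\geq\pi$ or $\zeta\geq\pi$, so by the elongation strategy of Lemma~\ref{lem: CSC great pi ELONGATION} it admits a continuous elongation to infinity; the longer $CCC$ trajectory likewise elongates continuously to infinity by increasing $|r_2|$ as in Theorem~\ref{lem: CCC ELONGATION}. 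Taking the trajectory that realizes the minimum defining $l_2$ and applying its associated continuous elongation yields every length in $[l_2,\infty)$.

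The main obstacle I anticipate is rigorously matching the endpoints of the two intervals to the extremal $CCC$ lengths and proving that no reachable length lies strictly inside $(l_1,l_2)$ within our own construction — i.e., that the gap forced by Theorem~\ref{thm: maximal reachable lengths}(3b) is exactly the jump discontinuity of magnitude $2\pi|r_1|$ or $2\pi|r_3|$ quantified in Lemma~\ref{lem: jump discontinuity}. Concretely, when $k$ passes through $k_a$ or $k_b$ the length $l(\cdot)$ jumps, and I would need to show that this jump is what separates the continuously-reachable band $[l_m,l_1]$ from the continuously-reachable band $[l_2,\infty)$, so that the unreachable open interval $(l_1,l_2)$ coincides with the one dictated by the maximal reachability result. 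Since the theorem only claims existence of trajectories for lengths in $[l_m,l_1]\cup[l_2,\infty)$ (the non-existence on $(l_1,l_2)$ being already supplied by Theorem~\ref{thm: maximal reachable lengths}), it suffices to exhibit the two continuous elongation families above and verify their lengths cover exactly these two intervals, which the preceding lemmas make routine.
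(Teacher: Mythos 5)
Your treatment of $[l_2,\infty)$ matches the paper: identify which candidate length attains the minimum defining $l_2$, note via Lemma \ref{lem: anti parallel tangent} that the competing $CSC$ paths have $\eta\geq\pi$ or $\zeta\geq\pi$ and hence elongate continuously to infinity by Lemma \ref{lem: CSC great pi ELONGATION}, and that the long $CCC$ paths and the $l_m+2\pi r_{\min}$ path do so as well. That half is correct.

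The gap is in your argument for $[l_m,l_1]$. You assume the deformation from the shortest $CSC$ path proceeds "provided no point $k_a$ or $k_b$ is crossed," but for $(A,B)\in\mathcal{O}^c$ reaching $l_1=\max\{l^s_{LRL},l^s_{RLR}\}$ from $l_m$ generically \emph{requires} crossing $k_a$ and $k_b$: the short $LRL$ and $RLR$ trajectories have terminal radii of specific signs, and a single deformation of $\mathcal{C}_2$ with $r_1,r_3$ fixed cannot change those signs. Concretely, if $\Lambda_m$ is $RSL$ (as in the paper's Example 1, which lies in $\mathcal{O}^c$), then $r_1r_3<0$ and neither branch of $\mathcal{H}$ produces $LRL$ or $RLR$ trajectories at all, so neither of your two families is anchored at $l_m$. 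Even when $\Lambda_m$ is $LSL$, your first family stops at $l^s_{LRL}$ when the curvature bound activates, and your second family (the $RLR$ one, built on $\mathcal{C}^r_A,\mathcal{C}^r_B$) starts from the $RSR$ length, not from $l_m$; you never show these two length ranges overlap, so $(l^s_{LRL},l^s_{RLR}]$ may be uncovered. The paper closes this by a single continuous chain: deform $\mathcal{C}_2$ through $LLL$ trajectories until $k_a$ is reached, switch $\mathcal{C}_1$ from $\mathcal{C}^l_A$ to $\mathcal{C}^r_A$ (so the vanishing first arc is continuously replaced, per the discussion after Lemma \ref{lem: CSC great pi ELONGATION}), continue through $RLL$ trajectories until $k_b$, switch $\mathcal{C}_3$ likewise, and end on $RLR$ trajectories at length $l^s_{RLR}=l_1$. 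This terminal-circle-switching mechanism at the jump points $k_a,k_b$ is the essential idea your proposal is missing.
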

\begin{proof}
    Consider the  trajectories whose lengths are mentioned in \eqref{eq: l1 l2 definition}. Without loss of generality, let $l_1=l_{RLR}^s$. Fig. \ref{fig: lsl_oc} shows the elongation strategy for an $LSL$ minimum path. We deform $\mathcal{C}_2$ resulting in a series of $LLL$ trajectories until point $k_a$ is reached. This trajectory is of the type $LL$. To preserve the continuity of elongation, we construct $\mathcal{C}_2$ between $\mathcal{C}^r_A$ and $\mathcal{C}_3$ resulting in a series of $RLL$ trajectories until a point $k_b$ is reached. Henceforth, $\mathcal{C}_2$ is constructed such that it is externally tangent to $\mathcal{C}^r_A$ and $\mathcal{C}^r_B$ forming a series of $RLR$ trajectories. This $\mathcal{C}_2$ is deformed until we get a trajectory whose length is $l_{RLR}^s$. Thus, a $\mathbf{C}^{r_1}\mathbf{C}^{r_2}\mathbf{C}^{r_3}$ trajectory exists for every $l_o\in[l_m,l_1]$. Note that it might happen that $k_b$ is reached earlier than $k_a$. In such a case, $\mathcal{C}_3$ is switched first. The proof follows similarly.

    \begin{figure}[h]
        \centering
        \includegraphics[width=0.65\linewidth]{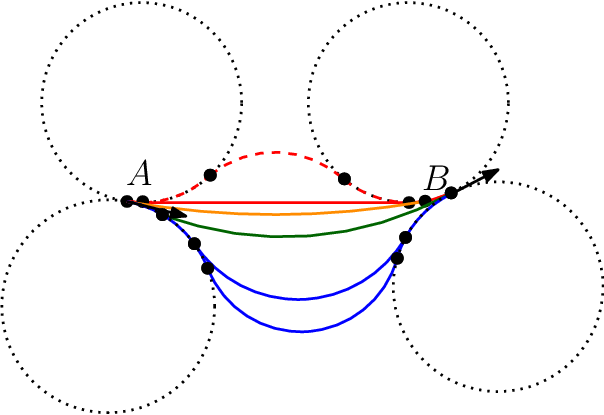}
        \caption{Elongation of trajectory from $l_m$ to $l_1$ when $(A,B)\in\mathcal{C}^c$}
        \label{fig: lsl_oc}
    \end{figure}

    Clearly, trajectories of length $\{l_{RLR}^l,l_{LRL}^l\}$ have arc lengths of $\mathcal{C}_2$ greater than $\pi r_{\min}$ as shown in Fig. \ref{fig: lrl rlr 2}. These can be continuously elongated till infinity using strategies in Theorem \ref{lem: CCC ELONGATION}. Further, we know from Lemma \ref{lem: anti parallel tangent} that the trajectories of lengths $\{l_{RSR},l_{LSL},l_{RSL},l_{LSR}\}\setminus\{l_m\}$ have either $\eta\geq\pi$ or $\zeta\geq\pi$. Thus, they can also be elongated continuously using Lemma \ref{lem: CSC great pi ELONGATION}. Lastly, the trajectory of length $l_m+2\pi r_{\min}$ can also be continuously elongated till infinity using Lemma \ref{lem: CSC great pi ELONGATION}. Consequently, there exists a $\mathbf{C}^{r_1}\mathbf{C}^{r_2}\mathbf{C}^{r_3}$ trajectory for all $l_o\in[l_2,\infty)$. Hence proved.
\end{proof}

Theorems \ref{lem: CCC ELONGATION} - \ref{lem: set OC elongation} show that the proposed $\mathbf{C}^{r_1}\mathbf{C}^{r_2}\mathbf{C}^{r_3}$ trajectories can form curvature-bounded trajectories of any desired lengths. The corresponding set of reachable lengths is exactly the same as the maximum reachability set mentioned in Theorem \ref{thm: maximal reachable lengths} for all pairs of oriented points $(A,B)$. Further, the proposed strategies guarantee a maximum of two changeover points in the entire trajectory between $A$ and $B$ for any desired length. 

The analysis of the set of reachable lengths till now is done with the values of $r_1$ and $r_3$ fixed. They took the value of $\pm r_{\min}$ depending upon the kind of Dubins Shortest Path between $(A,B)$ and the subsequent elongation strategy. We have established that the maximum reachability set is completely covered with just these values of $r_1$ and $r_3$ for the $\mathbf{C}^{r_1}\mathbf{C}^{r_2}\mathbf{C}^{r_3}$ trajectory. Next, we proceed to discuss the variation of $r_1$ and $r_3$ values and its implication on the lengths of the $\mathbf{C}^{r_1}\mathbf{C}^{r_2}\mathbf{C}^{r_3}$ trajectories.

\subsection{Variation of \texorpdfstring{$r_1$}{Lg} and \texorpdfstring{$r_3$}{Lg} for Circle-Circle-Circle trajectories}
The design of a Circle-Circle-Circle trajectory is accomplished by a suitable variation of $\{r_1,r_3,k\}$. All the reachable lengths are achieved by varying $k$ in $[-\pi/2,3\pi/2)$ and $r_1$ and $r_3$ in $\{-r_{\min},r_{\min}\}$. In this section, we explore the effect of a continuous variation of $r_1$ and $r_3$ in $\mathbb{R}\setminus(-r_{\min},r_{\min})$. Given a desired length $l_o$, we show that this results in the existence of multiple trajectories between the  oriented points $A$ and $B$. We begin the analysis with the following definition of a trajectory $\Tilde{\Lambda}$. 
\begin{definition}
    Given two oriented points $A$ and $B$, consider that the radii $r_1$ and $r_3$ take some fixed values in $\mathbb{R}$ such that eqn. \eqref{eq: existence of Hyperbola} holds. Let
    \begin{equation*}
        \Tilde{l}=\underset{k\in[-\pi/2,3\pi/2)}\min l(k)
    \end{equation*}
     be the length of the shortest trajectory in this framework. We denote this trajectory by $\Tilde{\Lambda}$.
\end{definition}
The trajectory $\Tilde{\Lambda}$ can be found numerically by iterating over $k\in[-\pi/2,3\pi/2)$. Note that it can be a $CSC$ trajectory which is a $\mathbf{C}^{r_1}\mathbf{C}^{r_2}\mathbf{C}^{r_3}$ trajectory with $|r_2|$ tending to ininity. We should highlight that for appropriate values of $(A,B,r_1,r_3)$ with $r_1,r_3\in\{\pm r_{\min}\}$, $\Tilde{\Lambda}={\Lambda_m}$. We define the following sets based upon the trajectory $\Tilde{\Lambda}$. 
\begin{align*}
        \mathcal{P}_0=\{(A,B,r_1,r_3)|&\Tilde{\Lambda}\in C_{\eta}^{r_1}C_{\mu}^{r_2}C_{\zeta}^{r_3}\text{ with } \mu\geq\pi|r_2|\}\\
        \mathcal{P}_1=\{(A,B,r_1,r_3)|&\Tilde{\Lambda}\in C_{\eta}^{r_1}C_{\mu}^{r_2}C_{\zeta}^{r_3}\text{ with } \eta\geq\pi|r_1|\}\\
        \mathcal{P}_2=\{(A,B,r_1,r_3)|&\Tilde{\Lambda}\in C_{\eta}^{r_1}C_{\mu}^{r_2}C_{\zeta}^{r_3}\text{ with } \zeta\geq\pi|r_3|\}\\
        \mathcal{P}_3=\{(A,B,r_1,r_3) | &\Tilde{\Lambda}\in C_{\eta}^{r_1}C_{\mu}^{r_2}C_{\zeta}^{r_3}\text{ with }\\ &d(\mathbf{c}^r_A,\mathbf{c}^r_B)\geq|r_1|+|r_3|+2r_{\min}\}\\
        \mathcal{P}_4=\{(A,B,r_1,r_3) |&\Tilde{\Lambda}\in C_{\eta}^{r_1}C_{\mu}^{r_2}C_{\zeta}^{r_3}\text{ with }\\ &d(\mathbf{c}^l_A,\mathbf{c}^l_B)\geq|r_1|+|r_3|+2r_{\min}\}
    \end{align*}
where $\eta$, $\mu$ and $\zeta$ are arc lengths of the three circles, respectively. The points $\mathbf{c}^r_A$ and $\mathbf{c}^l_A$ are the centres of the circles of radius $|r_1|$ at $A$ corresponding to right and left turns, respectively. The points $\mathbf{c}^r_B$ and $\mathbf{c}^l_B$ are the centres of the circles of radius $|r_3|$ at $B$ corresponding to right and left turns, respectively. We define the set $\mathcal{P}:=\mathcal{P}_1\cup\mathcal{P}_2\cup\mathcal{P}_3\cup\mathcal{P}_4$. The set $\mathcal{P}$ is a generalisation of the set $\mathcal{O}$ for different values of $r_1$ and $r_3$. The following theorem states the set of reachable lengths for these sets.
\begin{theorem}
    Given two oriented points $A$ and $B$, consider that the radii $r_1$ and $r_3$ take some fixed values in $\mathbb{R}$ such that eqn. \eqref{eq: existence of Hyperbola} holds. If $(A,B,r_1,r_3)\in\mathcal{P}_0\cup\mathcal{P}$, then there exists a $\mathbf{C}^{r_1}\mathbf{C}^{r_2}\mathbf{C}^{r_3}$ trajectory for all desired lengths $l_o\in[\Tilde{l},\infty)$.
    \label{thm: variable r1 and r3}
\end{theorem}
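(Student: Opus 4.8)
The plan is to treat $(A,B,r_1,r_3)\in\mathcal{P}_0\cup\mathcal{P}$ as the exact generalisation, with $|r_1|$ and $|r_3|$ replacing $r_{\min}$, of the elongation analysis already carried out for the family $r_1,r_3\in\{\pm r_{\min}\}$. Concretely, $\mathcal{P}_0$ (middle arc $\mu\geq\pi|r_2|$) plays the role of the CCC case $(A,B)\notin\mathcal{O}\cup\mathcal{O}^c$ of Theorem \ref{lem: CCC ELONGATION}; the conditions $\eta\geq\pi|r_1|$ and $\zeta\geq\pi|r_3|$ defining $\mathcal{P}_1,\mathcal{P}_2$ are the statements that the first, respectively third, arc subtends at least a semicircle and so generalise $\mathcal{O}_1,\mathcal{O}_2$; and $d(\mathbf{c}^r_A,\mathbf{c}^r_B)\geq|r_1|+|r_3|+2r_{\min}$, $d(\mathbf{c}^l_A,\mathbf{c}^l_B)\geq|r_1|+|r_3|+2r_{\min}$ generalise $\mathcal{O}_4,\mathcal{O}_5$. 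In every case the length interval $[\Tilde{l},\infty)$ is produced by starting from the shortest member $\Tilde{\Lambda}$ and deforming $\mathcal{C}_2$ continuously through the parameter $k$ over the two branches of $\mathcal{H}$, exactly the mechanism of Section \ref{section: traj of desired length}; by Corollary \ref{corr: r_2 infinity} the length diverges as $k\to\pm\pi/2$, and by Lemma \ref{lem: jump discontinuity} the only obstruction to continuity is the collision of a changeover point with $\mathbf{a}$ or $\mathbf{b}$.

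First I would dispatch $\mathcal{P}_0$ by mirroring Theorem \ref{lem: CCC ELONGATION}: since the middle arc of $\Tilde{\Lambda}$ already satisfies $\mu\geq\pi|r_2|$, driving $k$ toward the branch endpoint increases $|r_2|$ and elongates the trajectory continuously from $\Tilde{l}$ to $\infty$ with no changeover collision, so no jump of Lemma \ref{lem: jump discontinuity} occurs. Next, for $\mathcal{P}_1\cup\mathcal{P}_2$ I would replay the case analysis of Lemma \ref{lem: CSC great pi ELONGATION}: the semicircular terminal arc places $\mathbf{a}$ (or $\mathbf{b}$) in the favourable sub-arc of the partition $\{\mathcal{B}_1,\mathcal{B}_2\}$, so that whenever the deformation approaches a critical value $k_a$ (or $k_b$) one flips the sign of $r_1$ (or $r_3$) while preserving its magnitude — i.e. switches $\mathcal{C}^l_A\leftrightarrow\mathcal{C}^r_A$ — so the arc $\widearc{\mathbf{ac_1}}$ that was shrinking to zero immediately begins to grow again and $l(\cdot)$ stays continuous. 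Finally, for $\mathcal{P}_3\cup\mathcal{P}_4$ with $\eta<\pi|r_1|$ and $\zeta<\pi|r_3|$ I would reuse the construction of Lemma \ref{lem: CSC great d(o) ELONGATION}, the remaining subcases being already covered by $\mathcal{P}_1\cup\mathcal{P}_2$.

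The step I expect to be decisive is verifying the curvature bound $|r_2|\geq r_{\min}$ uniformly along every deformation, and this is precisely where the thresholds defining $\mathcal{P}_3,\mathcal{P}_4$ enter. Using Lemma \ref{lem:tangency centre disctance} for a middle circle of radius $\rho$ externally tangent to both terminal circles, its centre lies at distance $|r_1|+\rho$ from $\mathbf{c}^r_A$ and $|r_3|+\rho$ from $\mathbf{c}^r_B$, so the smallest admissible such circle (attained in the collinear, fully squeezed configuration) has radius $\rho_{\min}=\tfrac{1}{2}\!\left(d(\mathbf{c}^r_A,\mathbf{c}^r_B)-|r_1|-|r_3|\right)$; the inequality $d(\mathbf{c}^r_A,\mathbf{c}^r_B)\geq|r_1|+|r_3|+2r_{\min}$ is exactly $\rho_{\min}\geq r_{\min}$, so the curvature constraint is never violated during the squeezing deformation, and it only relaxes as $|r_2|$ grows toward infinity. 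The same computation with $\mathbf{c}^l_A,\mathbf{c}^l_B$ settles $\mathcal{P}_4$. Once the curvature bound is secured in each case, the continuous elongation from $\Tilde{l}$ and the divergence of $l(\cdot)$ at the branch endpoint knit together to cover the whole interval $[\Tilde{l},\infty)$, completing the proof.
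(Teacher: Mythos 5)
Your proposal is correct and follows essentially the same route as the paper, whose own proof is simply the one-line remark that the argument mirrors Theorems \ref{lem: CCC ELONGATION} and \ref{lem: set O ELONGATION} with $\mathcal{P}_0$ and $\mathcal{P}$ playing the roles of the CCC case and $\mathcal{O}$. You in fact supply more detail than the paper does, in particular the check that $d(\mathbf{c}^r_A,\mathbf{c}^r_B)\geq|r_1|+|r_3|+2r_{\min}$ is exactly the condition $\rho_{\min}\geq r_{\min}$ guaranteeing the curvature bound along the squeezing deformation, which the paper leaves implicit.
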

\begin{proof}
    The proof for the sets $\mathcal{P}_0$ and $\mathcal{P}$ is similar to that of Theorems \ref{lem: CCC ELONGATION} and \ref{lem: set O ELONGATION}, respectively.
\end{proof}

This concludes our discussion on the set of reachable lengths for curvature-bounded trajectories of desired lengths. Theorem \ref{thm: variable r1 and r3} provides set of reachable lengths for any given values of $r_1$ and $r_3$ in $\mathbb{R}$ through the construction of sets $\mathcal{P}_0$ and $\mathcal{P}$. 

\begin{remark}
    It should be noted that the reachability set in Theorem \ref{thm: variable r1 and r3} is always a subset of the maximum set of reachable lengths. Thus, it illustrates the existence of multiple trajectories for the same desired length. Note that this is in contrast to the results presented in Theorems \ref{lem: CCC ELONGATION} - \ref{lem: set OC elongation} which prove the existence of at least one $\mathbf{C}^{r_1}\mathbf{C}^{r_2}\mathbf{C}^{r_3}$ trajectory of a reachable desired length.  Further, all of the proposed elongation strategies result in curvature-bounded trajectories with at most two curvature discontinuities.
\end{remark}

\section{Numerical simulations}
\label{sec: simulation}
In this section, we illustrate the results presented in this paper through the following numerical simulations.

\textbf{Example 1:} 
Consider $A=(-3m,1m,0.785rad)$ and $B=(0m,0m,0rad)$. The Dubins Shortest Path between $A$ and $B$ for $r_{\min}=1m$ is an $RSL$ path of length $l_m=3.484m$. The given $(A,B)\in\mathcal{O}^c$ with $l_1=4.144m$ and $l_2=6.856m$ from eqn. \eqref{eq: l1 l2 definition}. Consequently, the set of reachable lengths can be defined. We seek to construct trajectories of lengths $l_o\in\{3.60m,4.05m,7.00m,11.15m,12.45m,14.90m\}$. Table \ref{tab: simulation 1} shows the computed values of $r_2$. Fig. \ref{fig: simulation 1} shows the various feasible trajectories. The Dubins Shortest Path has been highlighted in red.
\renewcommand{\arraystretch}{1.2}
\begin{table}[h]
\small
\caption{ Computed values of $r_2$ for trajectories between $A$ and $B$ of various lengths}
    \centering
    \begin{tabular}{ |c|c|c|c|c|c|}
    \hline
         $l_o(m)$&$r_1(m)$&$r_2(m)$&$k$&$r_3(m)$&Label\\
         \hline
         3.484 & -1.00 & $\infty$ & $\pi$/2& 1.00&{\color{red}\textbf{\textemdash}}\\
         \hline
         3.60 & -1.00 & -1.37 & 2.634& 1.00 & {\color{cyan}\textbf{\textemdash}}\\
         \hline
         4.05& 1.00& -1.031 & -0.379& 1.00&{\color{brown}\textbf{\textemdash}}\\
         \hline
         7.00 &1.00 & -1.015& 0.360& 1.00&{\color{violet}\textbf{\textemdash}}\\
         \hline
         11.15& 1.00& -1.57& 0.748& 1.00&{\color{magenta}\textbf{\textemdash}}\\
         \hline
         12.45& -1.00& 1.49& -0.634& 1.00&{\color{blue}\textbf{\textemdash}}\\
         \hline
         14.90& -1.00& 1.87& -0.876& 1.00&{\color{green}\textbf{\textemdash}}\\
         \hline   
    \end{tabular}
    \label{tab: simulation 1}
\end{table}
\begin{figure}[h]
    \centering
    \includegraphics[scale=0.46]{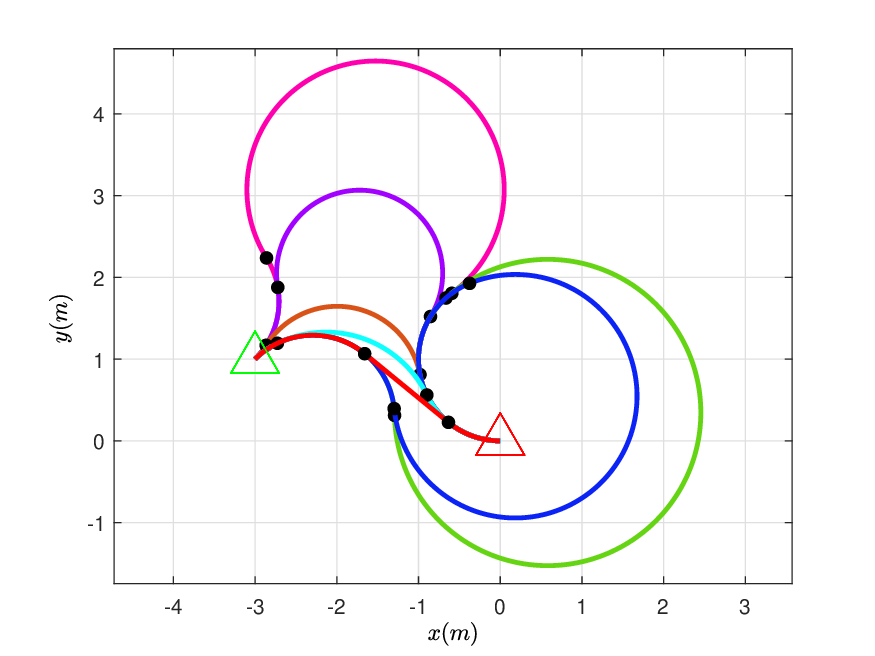}
    \caption{Trajectories between $A=(-3m,1m,0.785rad)$ and $B=(0m,0m,0rad)$ of various lengths with changeover points labeled ($\bullet$) }
    \label{fig: simulation 1}
\end{figure}

\textbf{Example 2:}
Consider $A=(-30m,10m,0.714rad)$ and $B=(0m,0m,0rad)$. The Dubins Shortest Path for $r_{\min}=1m$ is an $RSL$ path of length $l_m=31.809m$. The given $(A,B)\in\mathcal{O}_3$. Thus, a $\mathbf{C}^{r_1}\mathbf{C}^{r_2}\mathbf{C}^{r_3}$ trajectory exists for all $l_o\in[l_m,\infty)$. We seek to construct trajectories of length $l=44.5m>l_m$. There exists infinitely many such trajectories between $A$ and $B$. We arbitrarily choose $r_1$ and $r_3$ values and check if $(A,B,r_1,r_3)\in\mathcal{P}_0\cup\mathcal{P}$. If $l_o\geq\Tilde{l}$, then we proceed to compute the values of $r_2$. Table \ref{tab:r1 r2 r3 simulation} shows the computed values of $\Tilde{l}$ and $r_2$. Fig. \ref{fig:trajectpry ccc} shows the simulated trajectories in the $\mathbb{R}^2$ plane. Note that for the trajectory in green, $r_1$ and $r_3$ values are appropriately chosen such that we don't require three circular arcs resulting in a $LL$ trajectory.  

\begin{table}[h]
\small
    \centering
    \caption{ Computed values of $r_2$ and $k$ for trajectories between $A$ and $B$ of length $l_o=44.5m$}
    \label{tab:r1 r2 r3 simulation}
    \begin{tabular}{|c|c|c|c|c|c|}
    \hline
         $r_1(m)$&$r_3(m)$&$\Tilde{l}(m)$&$r_2(m)$&$k$&Label \\
         \hline
         -2.500 & 1.500& 32.099& 20.683&0.805&{\color{blue}\textbf{\textemdash}}\\
         \hline
         -5.500&-3.580&33.467&9.601&0.167& {\color{cyan}\textbf{\textemdash}}\\
         \hline
         -1.000&-1.010&32.389&14.798&3.328& {\color{magenta}\textbf{\textemdash}} \\
         \hline
         13.790&10.010&35.998&-9.145&-0.242&{\color{red}\textbf{\textemdash}} \\
         \hline
         1.940&12.010&35.673&-27.42&2.029&{\color{yellow}\textbf{\textemdash}} \\
         \hline
         2.040&59.314&N/A&N/A&N/A&{\color{green}\textbf{\textemdash}} \\
         \hline
    \end{tabular}
\end{table}
\begin{figure}[ht]
    \centering
    \includegraphics[scale=0.48]{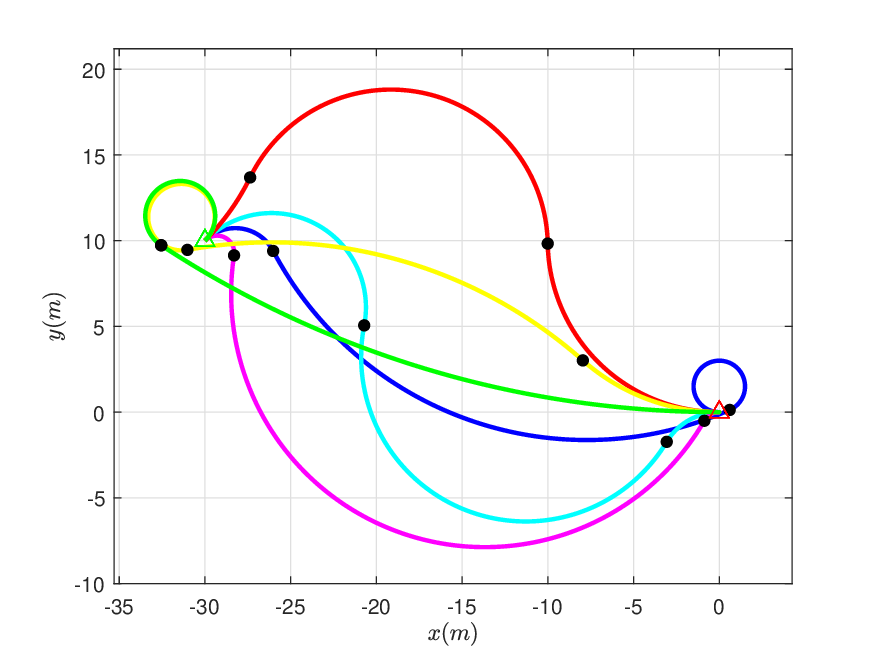}
    \caption{Trajectories between $A$ and $B$ of length $l_o=44.5m$ with changeover points labeled ($\bullet$)  }
    \label{fig:trajectpry ccc}
\end{figure}

\section{Conclusion}
\label{sec: conclusion}
In this paper, our objective is to construct curvature-bounded trajectories of any desired length between any two given oriented points. To do so, we propose to design the trajectory utilising three circular arcs of varying radii for the same, referred to as a Circle-Circle-Circle ($\mathbf{C}^{r_1}\mathbf{C}^{r_2}\mathbf{C}^{r_3}$) trajectory. The feasible trajectory is constructed by first fixing the terminal circles. Then, we show that the locus of the centre of $\mathcal{C}_2$ is a hyperbola $\mathcal{H}$ parameterised by the argument $k$. The overall trajectory is then defined using the three parameters: the radii of the terminal circles $\{r_1,r_3\}$ and an argument $k$. In the absence of any constraints on the length of the trajectory, we derive the necessary conditions for the existence of $\mathbf{C}^{r_1}\mathbf{C}^{r_2}\mathbf{C}^{r_3}$ trajectories. Now, such trajectories can be of eight types: $\{LLL,LLR,LRR,LRL,RRL,RLL,RLR,RRR\}$. We also present a complete classification of the trajectories into these forms based upon the values of $\{r_1,r_3,k\}$.

In the presence of curvature boundedness, we propose to elongate the circular arcs $C^{r_i}$, $\in\{1,2,3\}$ to achieve trajectories of desired lengths. In this regard, we show that the argument $k$ is critical as its variation, divided over the two branches of the hyperbola, results in an infinite (not necessarily continuous) elongation of the trajectory. However, this variation leads to jump discontinuities in the length of the trajectory for some configurations of $(A,B)$. To resolve this issue, we propose elongation strategies which guarantee the existence of curvature-bounded trajectories of any desired length for any configuration $(A,B)$. Further, we show that the set of reachable lengths is exactly equal to that proposed in \cite{chen_elongation} guaranteeing maximum coverage of the reachability set. In addition to this, the proposed elongation strategies also lead to the existence of multiple trajectories of desired lengths simply through the variation of $r_1$ and $r_3$. 

The paper concludes with numerical solutions illustrating and validating various results discussed in the paper. Future works in this direction may extend this approach to trajectory planning in the presence of obstacles. Further, additional constraints like minimum control effort maybe imposed as the trajectory is under-constrained for the problem addressed in this paper.

\bibliographystyle{ieeetr}
\bibliography{autosam.bib}
\end{document}